\documentclass[11pt]{article}

\usepackage{amsmath}
\usepackage{hyperref}
\usepackage{amsfonts}
\usepackage{amssymb}
\usepackage{graphicx}
\usepackage{multicol,float}
\usepackage{color}
\usepackage{multirow}
\usepackage{amsthm}
\usepackage{dsfont}

\setcounter{MaxMatrixCols}{30}

\providecommand{\U}[1]{\protect\rule{.1in}{.1in}}

\newtheorem{theorem}{Theorem}[section]

\newtheorem{assumption}{Assumption}[section]

\newtheorem{corollary}{Corollary}[section]

\newtheorem{definition}{Definition}[section]
\newtheorem{example}{Example}[section]

\newtheorem{lemma}{Lemma}[section]

\newtheorem{proposition}{Proposition}[section]
\newtheorem{remark}{Remark}[section]

\numberwithin{equation}{section}

\newcommand \E {\mathbb{E}}
\newcommand \R {\mathbb{R}}
\renewcommand \P {\mathbb{P}}
\newcommand \N {\mathbb{N}}

\newcommand \mX {\mathcal{X}}
\newcommand \mi {\mathcal{I}}
\newcommand \mic {\mi_c}
\newcommand \mD {\mathcal{D}}
\newcommand \mM {\mathcal{M}}
\newcommand \mC {\mathcal{C}}

\newcommand{\id}{\mathds{1}}

\newcommand \Sy {S_Y}
\newcommand \Sx {S_X}
\newcommand \Fx {F_X}
\newcommand \fx {f_X}

\newcommand \eps {\varepsilon}
\newcommand \tet {\theta}
\newcommand \gam {\gamma}

\newcommand \alp {\alpha}

\newcommand \la {\lambda}
\newcommand \kap {\kappa}
\newcommand \Ups {\Upsilon}

\newcommand \Is {I^*}
\newcommand \Rs {R^*}
\newcommand \pis {\pi^*}
\newcommand \Ieps {I_\eps}
\newcommand \Reps {R_\eps}
\newcommand \pieps {\pi_\eps}

\newcommand \Rd {R_d}
\newcommand \Id {I_d}
\newcommand \pid {\pi_d}
\newcommand \ds {d^*}
\newcommand \ms {m^*}

\newcommand \tk {\tilde k}
\newcommand \tb {\tilde b}

\textwidth=6.45in
\textheight=8.9in
\oddsidemargin=0in
\evensidemargin=0in
\topmargin=-0.5in

\allowdisplaybreaks

\usepackage{color}


\begin{document}

\title{Optimal Insurance to Maximize RDEU \break Under a Distortion-Deviation Premium Principle}

\author{Xiaoqing Liang
\thanks{Department of Statistics, School of Sciences, Hebei University of Technology, Tianjin 300401, P.\ R.\ China, liangxiaoqing115@hotmail.com. X.\ Liang thanks the National Natural Science Foundation of China (11701139, 11571189) and the Natural Science Foundation of Hebei Province (A2018202057) for financial support.}
\and Ruodu Wang
\thanks{Department of Statistics and Actuarial Science, University of Waterloo, Waterloo, Ontario N2L 3G1, Canada, wang@uwaterloo.ca.  R.\ Wang acknowledges financial support from the Natural Sciences and Engineering Research Council of Canada (NSERC, RGPIN-2018-03823, RGPAS-2018-522590) and from a Center of Actuarial Excellence Research Grant from the Society of Actuaries.
}
\and Virginia R. Young
\thanks{Corresponding author. Department of Mathematics, University of Michigan, Ann Arbor, Michigan, 48109, vryoung@umich.edu. V.\ R.\ Young thanks the Cecil J. and Ethel M. Nesbitt Professorship of Actuarial Mathematics for financial support.}
}

\date{\today}

\maketitle

\begin{abstract}
In this paper, we study an optimal insurance problem for a risk-averse individual who seeks to maximize the rank-dependent expected utility (RDEU) of her terminal wealth, and insurance is  priced via a general distortion-deviation premium principle.  We prove necessary and sufficient conditions satisfied by the optimal solution and consider three orders between the distortion functions for the buyer and the seller to further determine the optimal indemnity.  Finally, we analyze examples under three distortion-deviation premium principles to explore the specific conditions under which no insurance or deductible insurance is optimal.

\medskip

{\bf Keywords.} Optimal insurance design; Distortion-deviation premium principle; Rank-dependent expected utility; Deductible insurance.

\medskip

{\bf AMS 2020 Subject Classification.}  91G05, 60E15.

\medskip

{\bf JEL Codes.}  C02, G22, D80.

\end{abstract}


\section{Introduction}

Optimal insurance design is an important topic in insurance economics and has been studied for decades.  
To reduce the non-hedgeable claim risk $X$, the buyer of insurance can transfer part of the claim $I(X)$ (called the {\it indemnity}) to the insurer by paying a premium to the insurer, which is a functional of $I(X)$.  The optimal insurance design problem is to determine the optimal indemnity $\Is(X)$ that maximizes some measure of the buyer's well-being.

The optimal insurance design problem was first studied by Arrow \cite{A1963, A1971}, in which the insurer is assumed to be risk neutral and the buyer is assumed to be risk averse, as characterized, for instance, by having a concave utility.  The buyer finds the optimal indemnity to maximize her expected utility (EU) when premium equals an increasing function of the indemnity's expectation, and the optimal indemnity is deductible insurance.  See, for instance, the discussion in Promislow and Young \cite{PY2005}, Wang \cite{W1995, W1996}, and Chi and Zhou \cite{CZ2017}, among many other papers.

The literature on optimal insurance design with EU preferences is large.  However, some research reveals that EU theory fails to explain numerous common phenomena; therefore, Quiggin \cite{Q1982} proposed rank-dependent expected-utility theory (RDEU), which overcomes some drawbacks of EU theory.  Recently, a number of researchers considered RDEU preferences in an insurance setting.  For example, Bernard et al.\ \cite{BHY2015} solved an optimal insurance design problem under RDEU theory with a concave utility function and an inverse-S shaped probability distortion function. By applying the technique of quantile formulation, they solve the problem and find that the optimal contract is the classical deductible one for both large and small losses. But, their results failed to exclude the situations that the buyer might misreport actual losses. In order to rule out this severe moral hazard problem, Xu et al.\ \cite{XZZ2019} revisited the problem of Bernard et al.\ \cite{BHY2015} by adding an incentive-compatibility constraint, which requires the indemnity and retention functions to be increasing with respect to the loss.  Ghossoub \cite{G2019b} extended Bernard et al.\ \cite{BHY2015} and Xu et al.\ \cite{XZZ2019} by including a cost for state verification.

In the actuarial literature, researchers also use distorted probabilities when computing premiums.  Young \cite{Y1999} studied an expected utility maximization problem under Wang's premium principle (that is, a Choquet integral with a non-decreasing concave distortion function).\footnote{We emend Theorem 3.6 in Young \cite{Y1999} by the work in this paper, specifically, in Corollary \ref{cor:Young1999}.}  Wang et al.\ \cite{WWW2020b, WWW2020a} explored a class of risk functionals, called {\it signed Choquet integrals}.  As compared with the more usual {\it increasing} Choquet integrals, a signed one is not necessarily monotone.  They proved various properties of signed Choquet integrals, and found that many useful mathematical results for traditional risk functionals continue to hold for the signed Choquet integrals, that is, they do not depend on the property of monotonicity.  We note that some popular premium principles, such as the mean-variance and the mean-standard deviation principles, are indeed not monotone.


In this paper, we study an optimal insurance problem for an individual who seeks to maximize the rank-dependent expected utility (RDEU) of her terminal wealth.  Motivated by Wang et al.\ \cite{WWW2020b, WWW2020a}, we assume that insurance is priced via a non-monotone premium principle.  In Theorem \ref{thm:optins}, we provide necessary and sufficient conditions satisfied by the optimal solution. Furthermore, we define order relations between the distortions of the buyer and the seller based on those from the classic literature of stochastic orders, namely, first-order stochastic dominance, hazard rate order, and likelihood ratio order.  We determine properties of the optimal insurance based on these orderings of the distortion functions.  Finally, we revisit examples from Young \cite{Y1999} and extend them to explore the specific conditions under which no insurance or deductible insurance is optimal.

The approach in our paper is closely related to Chi and Zhuang \cite{CZ2020}, who focused on heterogeneous beliefs about the loss distributions in an optimal insurance design problem. We extend their model to consider a general premium principle without monotonicity under a risk-averse rank-dependent expected utility.  Our premium principle not only includes the traditional expectation measure, expected shortfall, but also other non-monotone deviation measures, such as the Gini deviation and the mean-median deviation, which can be seen as an analog of the mean-standard deviation premium principle.\footnote{See Example 3 in Wang et al.\ \cite{WWW2020a}, who proved that the standard deviation can be shown to equal a supremum over a collection of signed Choquet integrals.}  Our work differs from Chi and Zhuang \cite{CZ2020} in that we use (1) a more general premium principle, that is, we introduce the distortion-deviation premium principle; (2) a more general measure of risk aversion of the buyer, as measured by risk-averse RDEU; and, (3) a broader class of allowable indemnity policies.  That said, the proofs of the results in Section \ref{sec:optins}  for our more general model closely follow those of corresponding results in Chi and Zhuang \cite{CZ2020}.

The remainder of this paper is organized as follows. In Section \ref{sec:model}, we formulate the model of our problem.  In Section \ref{sec:ddpp}, we introduce the distortion-deviation premium principle, and in Section \ref{sec:buyer}, we propose the buyer's problem under rank-dependent expected utility.  By invoking the Comonotonic Improvement Theorem (see Landsberger and Meilijson \cite{LM1994}), we show that we can restrict our attention to indemnities $I(X)$ such that $I(X)$ and $X - I(X)$ are comonotonic, and we prove the existence of an optimal indemnity $\Is$.  We also prove necessary and sufficient conditions for the uniqueness of $\Is$.   Section \ref{sec:optins} contains our main results, beginning with Theorem \ref{thm:optins}, which characterizes an optimal indemnity.  By considering three orders between the distortion functions of the buyer and the seller, we further determine the form of an optimal indemnity.  Section \ref{sec:Y1999} contains specific examples to illustrate our results, and Section \ref{sec:conclu} ends the paper.

\section{Model}\label{sec:model}

Throughout the paper, let $(\Omega, \mathcal{F}, \P)$ be a probability space.  An individual faces a random loss $X$, with $X \ge 0$ almost surely on $\Omega$ and with $\E X < \infty$, and she wishes to indemnify her loss via insurance.  We assume that she chooses an indemnity from the following set of functions on $\R^+$:
\begin{equation}\label{eq:mi}
\mi = \big\{ I: I \hbox{ maps } \R^+ \hbox{ to } \R^+, \, 0 \le I(x) \le x \hbox{ for all } x \in \R^+ \big\},
\end{equation}
Consider the $L^1$ norm on $\mi$; specifically, if $\Fx$ is the cumulative distribution function of $X$, then the $L^1(\Fx)$ norm of $I \in \mi$ equals
\begin{equation}\label{eq:norm}
|| I ||_1 = \int_0^\infty \big| I(x) \big| \, d\Fx(x).
\end{equation}
$(\mi, || \cdot ||_1)$, modulo the set of null functions, those for which their $L^1(\Fx)$ norm equals $0$, or equivalently, those that are equal to $0$ almost everywhere with respect to the measure induced by $\Fx$, forms a closed subset of the complete, normed vector space $L^1(\Fx)$.  Note that $|| I ||_1 \le \E X < \infty$ for all $I \in \mi$.  

In the next section, we describe the principle that the insurer uses to compute its premium; in the section after that, we describe the optimization problem faced by the buyer of insurance.

\subsection{Distortion-deviation premium principle}\label{sec:ddpp}

We assume that the insurer computes premium for insurance according to the so-called {\it distortion-deviation} premium principle.  We introduce this premium principle as an analog of the mean-standard deviation premium principle, which computes premiums as follows:
\begin{equation}\label{eq:stddevpp}
(1 + \tet) \E Y + \alp \sqrt{\mathrm{Var} Y},
\end{equation}
for some constants $\tet \ge 0$ and $\alp \ge 0$, and for any non-negative random variable $Y$ with finite second moment.  In the following, we describe the two components of this new distortion-deviation premium principle; each component creates an analogy of $(1 + \tet) \E Y$ and $\alp \sqrt{\mathrm{Var} Y}$, in turn.\footnote{In research related to variance or standard deviation premium principles, Gajek and Zagrodny \cite{GZ2000} found the optimal insurance under the standard premium principle to minimize the variance of retained losses; they showed that the optimal indemnity is deductible insurance with a constant proportion for losses over the deductible.  Kaluszka \cite{K2001, K2004} derived the optimal reinsurance indemnity to minimize the variance of retained losses, under more general variance-related premium principles, and obtained the same form of indemnity as did Gajek and Zagrodny \cite{GZ2000}.  When minimizing the probability of insurer ruin or drawdown, Liang et al.\ \cite{LLY2020} and Azcue et al.\ \cite{ALMY2021} showed that the same form of per-loss reinsurance is optimal under the mean-variance premium principle and under the diffusion approximation to the classical Cram\'er-Lundberg risk model.} 

Wang \cite{W1996} introduced the well-known distortion premium principle as a generalization of his work with the proportional hazards transform in \cite{W1995}.  Let $\mD$ denote the collection of continuous, concave {\it distortions}; specifically,
\[
\mD = \big\{j: j \hbox{ maps } [0, 1] \hbox{ to } \R^+, \, j \hbox{ continuous and concave}, \, j(0) = 0 \big\}.
\]
Note that we do not require $j \in \mD$ to be monotone.  Then, for a distortion $j \in \mD$, define $\rho_j$ on the set of random variables $\mX = \{ X: \P\hbox{-ess inf } X > - \infty \big\}$ by
\begin{equation}\label{eq:dist}
\rho_j(Y) = \int_{-\infty}^0 \big( j(\Sy(t)) - j(1) \big) dt + \int_0^\infty j(\Sy(t)) \, dt,
\end{equation}
in which $\Sy$ is the survival function of $Y$, that is, $\Sy(t) = \P(Y > t)$.\footnote{We assume that distortions are concave from the outset so that $\rho_j$ preserves convex order, which is equivalent to second-order stochastic dominance (SSD) with equal means; see, for example, Wang and Young \cite{WY1998}.  Specifically, if $Y_1 \preceq_{cx} Y_2$, then $\rho_j(Y_1) \le \rho_j(Y_2)$.  The consistency between the concavity of $j$ and $\rho_j$ preserving convex order does not require $j \in \mD$ to be monotone; see Theorem 3 in Wang et al.\ \cite{WWW2020b}.}  In writing these integrals, we assume they are finite.

The risk measure $\rho_g(Y)$ generalizes $(1 + \tet)\E Y$.  Indeed, if $g$ is given by $g(p) = (1 + \tet)p$ for $p \in [0, 1]$, then $\rho_g(Y) = (1 + \tet)\E Y$.  We use $\rho_g(Y)$ to generalize the term $(1 + \tet)\E Y$ in our premium principle, and we assume that $g \in \mD$ is non-decreasing because $\E Y$ is non-decreasing in $Y$. One usually assumes that $g(1) \ge 1$, but we will allow $g(1) < 1$ for more generality.

Rockafellar et al.\ \cite{RUZ2006} introduced the idea of deviation measures, and Wang et al.\ \cite{WWW2020a} and Wang et al.\ \cite{WWW2020b} unified risk measures (such as the distortion premium principle) and deviation measures into a common framework called {\it distortion riskmetrics}.  To create an analog of $\alp \sqrt{\mathrm{Var} Y}$, which is itself a type of deviation measure, we consider $\rho_h(Y)$ for $h \in \mD$ with $h(0) = h(1)$, which necessarily equals $0$ because $h \in \mD$ implies $h(0) = 0$.  Note that $h(0) = h(1) = 0$ implies that probabilities that are certain (either $0$ or $1$) are not distorted by $h$ because it is impossible to deviate from a probability of $0$ or $1$.  We label such an $h$ a {\it deviation} distortion.  Recall that  $h$ is concave because it is in $\mD$.

We can obtain a symmetric deviation distortion $h$ by starting with any non-decreasing distortion $\tilde h \in \mD$ and defining $h$ by
\begin{equation}\label{eq:h}
h(p) = \tilde h(p) + \tilde h(1 - p) - \tilde h(1),
\end{equation}
for all $p \in [0, 1]$.  The distortion $h \in \mD$ defined by \eqref{eq:h} is symmetric with respect to $p = 1/2$, and $\rho_h(Y) = \rho_{\tilde h}(Y) + \rho_{\tilde h}(-Y)$.  Nevertheless, we wish to allow for non-symmetric deviation distortions $h$, so we do not assume the form in \eqref{eq:h}.

In the following example, we present two concave distortions $h \in \mD$ that are symmetric with respect to $p = 1/2$, which implies $h(0) = h(1)$.  Thus, these two distortions yield deviation measures via $h \mapsto \rho_h$.

\begin{example}\label{ex:1}{\rm
Suppose $h(p) = p - p^2$, which yields the {\it Gini deviation} measure defined by
\begin{equation}\label{eq:Gini}
\rho_h(Y) = \dfrac{1}{2} \, \E \big|Y - Y' \big|,
\end{equation}
in which $Y'$ is an i.i.d.\ copy of $Y$.  If we define $\tilde h$ by
\[
\tilde h(p) = \dfrac{3}{2} \, p - \dfrac{1}{2} \, p^2,
\]
then $\tilde h$ is increasing, concave with $\tilde h(1) = 1$, that is, $\tilde h$ is a traditional distortion, and $h$ and $\tilde h$ satisfy \eqref{eq:h}.

Another useful symmetric distortion is given by $h(p) = p \wedge (1 - p)$ for $p \in [0, 1]$.  This distortion yields the {\it mean-median deviation} measure defined by}
\begin{equation}\label{eq:mm}
\rho_h(Y) = \min_{y \in \R} \E \big|Y - y \big|.
\end{equation}
\end{example}

\bigskip

Based on the discussion preceding Example \ref{ex:1}, we define the {\it distortion-deviation} premium principle $\pi$ for $Y \in \mX$ as follows:
\begin{equation}\label{eq:ddpp}
\pi(Y) = \rho_g(Y) + \rho_h(Y),
\end{equation}
in which $g, h \in \mD$ are such that $g$ is non-decreasing and $h$ is a deviation distortion.  Throughout this paper, we assume that $X$, $g$, and $h$ are such that $\pi(I(X))$ is finite for all $I \in \mi$.

The representation of $\pi$ in \eqref{eq:ddpp} is not unique; however, there is a unique representation of $g + h$ as the sum of a linear function and a deviation distortion, as we prove in the following proposition.

\begin{proposition}\label{prop:canon}
For $\pi$ given in \eqref{eq:ddpp}, define $\tet > -1$ by
\begin{equation}\label{eq:tet}
\tet = g(1) - 1,
\end{equation}
and define $k \in \mD$ by
\begin{equation}\label{eq:k}
k(p) = g(p) + h(p) - (1 + \tet)p.
\end{equation}
Then, $k(0) = k(1) = 0$, and
\begin{equation}\label{eq:canon}
\pi(Y) = (1 + \tet)\E Y + \rho_k(Y),
\end{equation}
for all $Y \in \mX$.  Moreover, the representation of $g + h$ as the sum of a linear function and a deviation distortion is unique.
\end{proposition}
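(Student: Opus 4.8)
The plan is to dispatch the three assertions in turn: (i) that $k$ defined by \eqref{eq:k} lies in $\mD$ and satisfies $k(0)=k(1)=0$; (ii) the representation \eqref{eq:canon}; and (iii) the uniqueness of the decomposition of $g+h$. The single fact that makes all of this routine is that, for a fixed random variable $Y$, the functional $j \mapsto \rho_j(Y)$ given by \eqref{eq:dist} is \emph{linear} in the distortion $j$ — because $j$ enters that formula only through the pointwise evaluations $j(\Sy(t))$ and $j(1)$ — provided the relevant integrals are finite, which we take as a standing assumption.

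For (i), I would just compute directly. Since $\tet = g(1)-1$, we have $1+\tet = g(1)$, so $k(1) = g(1) + h(1) - g(1) = 0$ (using $h(1)=0$) and $k(0) = g(0)+h(0)-0 = 0$. Continuity and concavity of $k$ are then immediate: $k$ is the sum of the concave functions $g$, $h$ and the affine function $p \mapsto -(1+\tet)p$. Nonnegativity of $k$ on $[0,1]$ follows from concavity together with the vanishing endpoint values, via the chord bound $k(p)\ge (1-p)k(0)+p\,k(1)=0$; hence $k$ maps $[0,1]$ into $\R^+$ and so $k \in \mD$.

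For (ii), set $\ell(p)=(1+\tet)p$, so that $g+h=\ell+k$ by \eqref{eq:k}. Linearity of $\rho$ in its distortion argument gives $\pi(Y)=\rho_g(Y)+\rho_h(Y)=\rho_{g+h}(Y)=\rho_\ell(Y)+\rho_k(Y)$, and substituting $j=\ell$ into \eqref{eq:dist} shows $\rho_\ell(Y)=(1+\tet)\big(\int_{-\infty}^0(\Sy(t)-1)\,\drm t+\int_0^\infty\Sy(t)\,\drm t\big)=(1+\tet)\E Y$ by the usual tail representation of the mean; this is \eqref{eq:canon}. For (iii), suppose $g+h=\ell_1+k_1=\ell_2+k_2$ with each $\ell_i$ linear and each $k_i$ a deviation distortion; evaluating at $p=0$ and using $k_i(0)=0$ forces $\ell_i(0)=0$, so $\ell_i(p)=c_ip$, and subtracting the two representations and evaluating the resulting identity $(c_1-c_2)p=k_2(p)-k_1(p)$ at $p=1$ gives $c_1=c_2$ (since $k_i(1)=0$), whence $k_1=k_2$; note this forces $c_1=c_2=g(1)=1+\tet$, consistent with \eqref{eq:tet}. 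The only point needing care — and the main obstacle, such as it is — is the bookkeeping that justifies splitting $\rho_{g+h}$ into $\rho_\ell+\rho_k$ termwise, i.e.\ that each of the three integral expressions is individually finite; this is exactly what the paper's blanket finiteness assumption on the integrals in \eqref{eq:dist} supplies. I would also remark in passing that $\tet>-1$ holds strictly precisely because $g(1)>0$, equivalently $g$ is not identically zero — otherwise $g$ non-decreasing with $g(1)=0$ and $g\ge 0$ would force $g\equiv 0$.
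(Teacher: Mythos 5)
Your proof is correct and follows essentially the same route as the paper's: the paper establishes $k\ge 0$ by noting that the concave function $g+h$ lies above its secant $(1+\tet)p$, which is exactly your chord bound for $k$, and it treats concavity, the endpoint values, the identity \eqref{eq:canon}, and uniqueness as routine, all of which you simply spell out (including the linearity of $j\mapsto\rho_j(Y)$ and the tail formula for $\E Y$). No gaps; your added remarks on finiteness and on $g(1)>0$ are harmless elaborations of the paper's standing assumptions.
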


\begin{proof}
The line $(1 + \tet) p$ is a secant of the graph of $g + h$.  Because $g + h$ is concave, then it lies above this secant, which implies that $k$ defined in \eqref{eq:k} is non-negative.  It is easy to see that $k$ is concave, and the definition of $\tet$ in \eqref{eq:tet} implies $k(0) = k(1) = 0$, so $k$ is a deviation distortion.  It is also straightforward to show that the representation $g(p) + h(p) = (1 + \tet)p + k(p)$, with $k$ a deviation distortion, is unique.
\end{proof}

We call the representation of $\pi$ in \eqref{eq:canon} the {\it canonical} representation, and we use it in the remainder of this paper.\footnote{Throughout this paper, we will use $\rho_j$ as in \eqref{eq:dist} to represent a distortion risk measure or deviation measure, and we will use $\pi$ to specifically mean the distortion-deviation premium principle given in \eqref{eq:canon}.}  The canonical representation in \eqref{eq:canon} highlights the parallel between the distortion-deviation premium principle and the mean-standard deviation premium principle in \eqref{eq:stddevpp} because the first terms match (except that we naturally require $\tet > -1$) and because $\rho_k$ is a measure of deviation.  We remark that in case $h \equiv 0$, that is, in the absence of the deviation term in \eqref{eq:ddpp}, the distortion premium itself can be decomposed into an expected-value premium and a non-zero distortion deviation according to Proposition \ref{prop:canon}.

The following lemma lists some useful properties of $\pi$.  We present the lemma without proof because the proof is standard in the literature on distortions; see Wang et al.\ \cite{WWW2020b} or Wang et al.\ \cite{WWW2020a} for a proof of this lemma when the distortion is not necessarily monotone.

\begin{lemma}\label{lem:pi_prop}
The distortion-deviation premium principle given in \eqref{eq:canon} satisfies the following$:$
\begin{enumerate}
\item[$1.$]  For all $Y \in \mX$ and $c \in \R$, $\pi(Y + c) = \pi(Y) + (1 + \tet)c$.

\item[$2.$]  For all $Y \in \mX$ and $c \ge 0$, $\pi(cY) = c \pi(Y)$.

\item[$3.$]  For all $Y, Z \in \mX$ and $\la \in [0, 1]$, $\pi(\la Y + (1 - \la)Z) \le \la \pi(Y) + (1 - \la)\pi(Z)$.

\item[$4.$]  If $Y, Z \in \mX$ are such that $Y \preceq_{cx} Z$, then $\pi(Y) \le \pi(Z)$.

\item[$5.$]  If $Y, Z \in \mX$ are comonotonic, that is, if $(Y(\omega) - Y(\omega'))(Z(\omega) - Z(\omega')) \ge 0$ almost surely on $\Omega \times \Omega$, then $\pi(Y + Z) = \pi(Y) + \pi(Z)$.  \qed
\end{enumerate}
\end{lemma}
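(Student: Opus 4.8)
The plan is to reduce each of the five properties to the corresponding statement about the deviation term $\rho_k$ and then read the latter off from the known structure of concave distortion riskmetrics. Indeed, since $\pi = (1+\tet)\E(\cdot) + \rho_k$, and the linear functional $Y \mapsto (1+\tet)\E Y$ plainly satisfies the analogues of 1--5 (it is translation-equivariant with factor $1+\tet$, positively homogeneous, affine hence both convex and comonotonic-additive, and constant on convex-order classes because these have equal means), it suffices to prove the corresponding statements for $\rho_k$. The first step is to simplify the defining integral: because $k$ is a deviation distortion, $k(1)=0$, and because every $Y \in \mX$ has $a := \P\hbox{-ess inf } Y > -\infty$ we have $\Sy(t)=1$, hence $k(\Sy(t)) = k(1) = 0$, for all $t < a$; moreover concavity together with $k(0)=k(1)=0$ forces $k \ge 0$ on $[0,1]$. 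Therefore \eqref{eq:dist} collapses to
\[
\rho_k(Y) = \int_{\R} k\big(\Sy(t)\big)\, \drm t = \int_a^{\infty} k\big(\Sy(t)\big)\, \drm t ,
\]
a non-negative quantity, finite by the running assumption on $X$, $g$, $h$.

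Properties 1 and 2 for $\rho_k$ are then changes of variable in this integral: $S_{Y+c}(t) = \Sy(t-c)$ gives $\rho_k(Y+c) = \rho_k(Y)$ by translation-invariance of Lebesgue measure, and $S_{cY}(t) = \Sy(t/c)$ for $c>0$ gives $\rho_k(cY) = c\,\rho_k(Y)$ via the substitution $t = cs$; the boundary case $c=0$ is checked directly, since $S_0(t) = \id_{\{t<0\}}$ yields $\rho_k(0)=0$. Property 4 for $\rho_k$ is exactly the convex-order consistency of a concave distortion riskmetric already recorded in the footnote to \eqref{eq:dist} (Wang and Young \cite{WY1998}; Theorem 3 of Wang et al.\ \cite{WWW2020b}); combining it with $\E Y = \E Z$ whenever $Y \preceq_{cx} Z$ gives $\pi(Y) \le \pi(Z)$.

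For properties 3 and 5 I would pass to the quantile representation $\rho_k(Y) = \int_0^1 F_Y^{-1}(1-u)\,\drm k(u)$ --- valid here because the boundary terms vanish thanks to $k(0)=k(1)=0$, with the usual care about finiteness that our assumptions supply. Comonotonic additivity is then immediate: for comonotonic $Y,Z$ one has $F_{Y+Z}^{-1} = F_Y^{-1} + F_Z^{-1}$ almost everywhere (both sides agree with $F_Y^{-1}(U)+F_Z^{-1}(U)$ for a common uniform $U$, and a non-decreasing left-continuous transform of $U$ is its own quantile function), so the representation, being linear in the quantile function, gives $\rho_k(Y+Z) = \rho_k(Y)+\rho_k(Z)$; with linearity of $\E$ this is property 5. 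For convexity, the cleanest route is the dual representation of a concave distortion riskmetric, $\rho_k(Y) = \sup_{Q} \E_Q[Y]$ over a family of signed measures determined by $k$ (Wang et al.\ \cite{WWW2020a}), exhibiting $\rho_k$ as a supremum of affine functionals and hence convex; alternatively, convexity follows from properties 2, 4, 5 and law-invariance by invoking the classical fact that among all couplings of the marginals of $Y$ and $Z$ the comonotonic one is largest in convex order, so $\rho_k(\la Y + (1-\la)Z) \le \rho_k(\la Y' + (1-\la)Z') = \la\rho_k(Y) + (1-\la)\rho_k(Z)$ for a comonotonic coupling $(Y',Z')$.

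The routine parts are properties 1, 2, 4; the one genuinely substantive ingredient is the quantile (and dual) representation of $\rho_k$ for a not-necessarily-monotone concave $k$, from which comonotonic additivity and convexity both flow. The point requiring care --- and the reason the finiteness assumptions matter --- is justifying the passage between the survival-function form \eqref{eq:dist} and the quantile form, together with the interchange of integration order it rests on, when $Y$ is unbounded above; this is precisely where one appeals to Wang et al.\ \cite{WWW2020b, WWW2020a}, where these identities are established for signed Choquet integrals in full generality.
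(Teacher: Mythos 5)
Your argument is correct. The paper itself gives no proof of this lemma---it simply defers to Wang et al.\ \cite{WWW2020b, WWW2020a}---and what you have written is precisely the standard argument from those references: reduce to $\rho_k$, use the collapsed integral $\rho_k(Y)=\int_{\R}k(S_Y(t))\,\drm t$ for translation invariance and positive homogeneity, the quantile representation for comonotonic additivity, and either the dual representation or the convex-order maximality of the comonotonic coupling (together with Properties 2, 4, 5) for convexity. No gaps; the only point requiring care, the passage to the quantile form for unbounded $Y$, is exactly where the cited references are needed, as you note.
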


\subsection{Buyer's problem}\label{sec:buyer}

The individual, who faces the non-negative random loss $X$, chooses an indemnity from $\mi$ to maximize her rank-dependent expected utility of terminal wealth:
\begin{align}\label{eq:RDEU}
\rho_b\big( u(w - X + I(X) - \pi_I) \big) &= \int_{-\infty}^0 \big( b \circ \P \big(u(w - X + I(X) - \pi_I ) > t \big) - b(1) \big) dt \notag \\
&\quad + \int_0^\infty b \circ \P \big(u(w - X + I(X) - \pi_I ) > t \big) dt,
\end{align}
in which $\pi_I = \pi(I(X))$.  In \eqref{eq:RDEU}, $u \in \mC^2(\R)$ is a strictly increasing, concave utility function, $b$ is a continuously differentiable, strictly increasing, {\it convex} distortion function with $b(0) = 0$ and $b(1) = 1$, $w$ is the individual's initial wealth, and $\pi_I$ is computed according to the distortion-deviation premium principle in \eqref{eq:canon}.\footnote{If $X$ has finite support, then we will not need $u$ to be defined on {\it all} of $\R$.  Also, assuming $b(1) = 1$ is without loss of generality because, if we were to scale $b$ by any positive constant, then $\rho_b$ would scale by that same constant, which would not affect the optimality of a given indemnity.  Finally, because we assume the distortion $b$ is convex, we do not include inverse-S shaped distortions, as studied by Bernard et al.\ \cite{BHY2015}.}  One can think of $u$ and $b$ as measuring the buyer's risk aversion concerning wealth and uncertainty, respectively.\footnote{Putting our model in the format of Table 2 in Ghossoub and He \cite{GH2020}, the buyer of insurance (that is, Party A) has a concave utility $u$ and a concave probability weighting via the convex distortion $b$.  Furthermore the seller of insurance (that is, Party B) has a linear utility and a concave probability weighting via the concave distortion $(1 + \tet)p + k(p)$, but the seller's distortion is not necessarily monotone.}  Throughout this paper, we assume that $b$ and $u$ are such that $\rho_b\big( u(w - I - \pi_I) \big)$ is finite for all $I \in \mi$.

We prove the following lemma that will allow us to restrict our attention to indemnities $I$ such that $I(X)$ and $X - I(X)$ are comonotonic, that is, the set of functions defined by
\begin{equation}\label{eq:mic}
\mic = \big\{ I \in \mi: 0 \le I(x) - I(y) \le x - y, \hbox{ for all } 0 \le y \le x\big\}.
\end{equation}

\begin{lemma}\label{lem:comon}
Because $\rho_b\big( u(w - X + I(X) - \pi_I) \big)$ in \eqref{eq:RDEU} is decreasing in convex order,
\[
\sup_{I \in \mi} \rho_b\big( u(w - X + I(X) - \pi_I) \big) = \sup_{I \in \mic} \rho_b\big( u(w - X + I(X) - \pi_I) \big).
\]
\end{lemma}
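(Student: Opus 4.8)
The plan is to establish the two inequalities in the displayed identity separately. Since $\mic \subseteq \mi$, the inequality ``$\le$'' is immediate, so the substance is the reverse: for every $I \in \mi$ I want to exhibit some $\tilde I \in \mic$ with $\rho_b\big(u(w - X + \tilde I(X) - \pi_{\tilde I})\big) \ge \rho_b\big(u(w - X + I(X) - \pi_I)\big)$, where $\pi_{\tilde I} := \pi(\tilde I(X))$. Fix $I \in \mi$ and write $R := X - I(X)$ for the retained loss, so that $(I(X), R)$ is a measurable allocation of the total loss $X$. By the Comonotonic Improvement Theorem (Landsberger and Meilijson \cite{LM1994}), there is a comonotonic allocation $(\tilde I, \tilde R)$ of $X$ with $\tilde I \preceq_{cx} I(X)$ and $\tilde R \preceq_{cx} R$; comonotonicity together with $\tilde I + \tilde R = X$ lets us take $\tilde I = f(X)$ and $\tilde R = X - f(X)$ for a Borel function $f$ with both $f$ and $x \mapsto x - f(x)$ non-decreasing on $\R^+$, which forces $0 \le f(x) - f(y) \le x - y$ for all $y \le x$. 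Because convex order preserves almost-sure lower bounds (apply it to the convex function $t \mapsto \max(-t, 0)$, using $I(X) \ge 0$ and $R \ge 0$), we get $\tilde I \ge 0$ and $\tilde R \ge 0$ almost surely; modifying $f$ on an $\Fx$-null set and extending it off the support of $X$ without destroying the monotonicity of $f$ and of $x \mapsto x - f(x)$ then yields $f \in \mic$. Since $f \in \mic \subseteq \mi$, the premium and the objective at $\tilde I$ are finite by the standing assumptions.

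It remains to show that the objective does not decrease when $I$ is replaced by $\tilde I$, and the argument rests on three elementary facts. First, $\tilde I \preceq_{cx} I(X)$ and Lemma \ref{lem:pi_prop}(4) give $\pi_{\tilde I} \le \pi_I$. Second, convex order is preserved under negation and under adding a constant, so $\tilde R \preceq_{cx} R$ yields $w - \pi_I - \tilde R \preceq_{cx} w - \pi_I - R = w - X + I(X) - \pi_I$; since $Y \mapsto \rho_b(u(Y))$ is decreasing in convex order — the hypothesis of the lemma, valid because $u$ is concave and $b$ is convex — this gives $\rho_b\big(u(w - \pi_I - \tilde R)\big) \ge \rho_b\big(u(w - X + I(X) - \pi_I)\big)$. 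Third, because $\pi_I - \pi_{\tilde I} \ge 0$ we have $w - \pi_{\tilde I} - \tilde R = w - X + \tilde I(X) - \pi_{\tilde I} \ge w - \pi_I - \tilde R$ pointwise, hence $u(w - X + \tilde I(X) - \pi_{\tilde I}) \ge u(w - \pi_I - \tilde R)$ pointwise ($u$ increasing), and therefore $\rho_b\big(u(w - X + \tilde I(X) - \pi_{\tilde I})\big) \ge \rho_b\big(u(w - \pi_I - \tilde R)\big)$ ($\rho_b$ monotone, since $b$ is increasing). Concatenating the last two inequalities gives exactly $\rho_b\big(u(w - X + \tilde I(X) - \pi_{\tilde I})\big) \ge \rho_b\big(u(w - X + I(X) - \pi_I)\big)$, and taking the supremum over $I \in \mi$ completes the proof.

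I expect the main obstacle to be the realization of the abstract comonotonic improvement inside $\mic$: the theorem of \cite{LM1994} produces random variables and convex-order inequalities, and one must convert these into the pointwise constraints $0 \le f(x) - f(y) \le x - y$ that define $\mic$, which involves selecting appropriate Borel representatives, handling $\Fx$-null sets, and extending $f$ beyond the support of $X$ while keeping $f$ and $x \mapsto x - f(x)$ non-decreasing. Once $\tilde I \in \mic$ is secured, the remainder is a short chain of convex-order and pointwise comparisons resting on Lemma \ref{lem:pi_prop}(4) and on the fact that $Y \mapsto \rho_b(u(Y))$ is both monotone and decreasing in convex order.
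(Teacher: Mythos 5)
Your proof is correct and follows essentially the same route as the paper's: comonotonic improvement to produce $\tilde I \in \mic$, Lemma \ref{lem:pi_prop}(4) to get $\pi_{\tilde I} \le \pi_I$, and the fact that $\rho_b(u(\cdot))$ is monotone and decreasing in convex order, with the only cosmetic difference that you apply the convex-order step before the premium-monotonicity step while the paper does the reverse. Your extra care in realizing the comonotonic improvement as a function in $\mic$ (nonnegativity via convex order, choice of Borel representative) fills in detail the paper delegates to its citation of the Comonotonic Improvement Theorem.
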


\begin{proof}
By the Comonotonic Improvement Theorem (see, for example, Theorem 10.50 in R\"uschendorf \cite{R2013}), because $\E X < \infty$, for any allocation $(I(X), X - I(X))$, with $I \in \mi$, there exists a comonotonic allocation $(I_c(X), X - I_c(X))$, with $I_c \in \mic$, such that
\[
I_c(X) \preceq_{cx} I(X),
\]
and
\[
X - I_c(X) \preceq_{cx} X - I(X).
\]
Then, because $g$ and $h$ are concave, $\pi$ is increasing in convex order (see Property 4 in Lemma \ref{lem:pi_prop}), which implies
\[
\pi_{I_c} \le \pi_I,
\]
from which it follows (because $u$ is increasing)
\[
u(w - X + I(X) - \pi_{I_c}) \ge u(w - X + I(X) - \pi_I), \; \text{a.s.},
\]
and taking expectations with respect the distorted (non-additive) measure $b \circ \P$ as in \eqref{eq:RDEU} gives us
\[
\rho_b\big( u(w - X + I(X) - \pi_{I_c})\big) \ge \rho_b\big(u(w - X + I(X) - \pi_I) \big).
\]
Moreover, because $u$ is concave and $b$ is convex, $\rho_b (u(\cdot))$ is decreasing in convex order (see, for example, Chew et al.\ \cite{CKS1987}), which implies
\[
\rho_b \big(u(w - X + I_c(X) - \pi_{I_c}) \big) \ge \rho_b\big( u(w - X + I(X) - \pi_{I_c}) \big).
\]
By combining the previous two inequalities, we obtain
\[
\rho_b \big(u(w - X + I_c(X) - \pi_{I_c}) \big) \ge \rho_b \big(u(w - X + I(X) - \pi_I) \big),
\]
and the lemma follows.
\end{proof}

\begin{remark}\label{rem:two}
There are two approaches to modeling in the optimal-insurance problem: $(1)$ Allow risk preferences, as modeled by $(1 + \tet)p + k(p)$ for the seller of insurance and by $b$ and $u$ for the buyer of insurance, to be as general as possible, but restrict the set of possible indemnities in such a way so that one gets meaningful results.  $(2)$ Restrict risk preferences $($in our case, by requiring that $b$ be convex$)$, but allow the set of possible indemnities to be rather general, again, in such a way so that one gets meaningful results.  We chose to do the latter, and we use the convexity of $b$ in the proof of Lemma {\rm \ref{lem:comon}} to prove that we can restrict our attention to indemnities in $\mic$.  By contrast, we could have restricted our attention to indemnities in $\mic$ from the outset and allowed $b$ to be strictly increasing, but not necessarily convex.  Indeed, in the following, all our results hold if we had restricted indemnities to lie in $\mic$ and had assumed $b$ is strictly increasing, instead of strictly increasing and convex, unless otherwise stated in the proposition.   \qed
\end{remark}

Lemma \ref{lem:comon} implies that the individual's problem is equivalent to the following:
\begin{equation}\label{eq:max}
\begin{cases}
\sup \limits_{I \in \mic} \rho_b \big(u(w - X + I(X) - \pi_I) \big), \\
\hbox{in which } \pi_I = \pi(I(X)) = (1 + \tet)\E I(X) + \rho_k(I(X)),
\end{cases}
\end{equation}
with $\tet > -1$ and $k \in \mD$ satisfying $k(0) = k(1) = 0$.  
In the next proposition, we prove that \eqref{eq:max} has a solution in $\mic$, and we need the following lemma in the proof of that proposition.

\begin{lemma}\label{lem:mic_compact}
$(\mic, || \cdot ||_1)$ is a compact subset of the complete, normed vector space $L^1(\Fx)$, in which $|| \cdot ||_1$ is given in \eqref{eq:norm}.
\end{lemma}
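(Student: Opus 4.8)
Since $L^1(\Fx)$ is a complete metric space, it suffices to show that $\mic$ is sequentially compact: given any sequence $(I_n)_{n\ge1}$ in $\mic$, I will extract a subsequence converging in $||\cdot||_1$ to some element of $\mic$. Note first that $\mic \subseteq L^1(\Fx)$, since $||I||_1 \le \E X < \infty$ for every $I \in \mi$.

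Observe that letting $y \uparrow x$ in \eqref{eq:mic} shows that every $I \in \mic$ is nondecreasing and $1$-Lipschitz on $\R^+$, with $0 \le I(x) \le x$. Hence the family $\{I_n : n \ge 1\}$ is equicontinuous and pointwise bounded on each compact interval $[0,N]$, $N \in \N$. By the Arzel\`a--Ascoli theorem, for each fixed $N$ some subsequence of $(I_n)$ converges uniformly on $[0,N]$; a diagonal argument across $N = 1, 2, \dots$ then produces a single subsequence $(I_{n_k})_{k\ge1}$ converging pointwise on all of $\R^+$ (in fact uniformly on every compact subinterval) to a limit function $I \colon \R^+ \to \R^+$.

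It remains to check that $I \in \mic$ and that the convergence holds in $L^1(\Fx)$. The inequalities $0 \le I(x) \le x$ and $0 \le I(x) - I(y) \le x - y$ for $0 \le y \le x$ pass to the pointwise limit, so $I \in \mic$. Moreover $|I_{n_k}(x) - I(x)| \le |I_{n_k}(x)| + |I(x)| \le 2x$ for all $x \ge 0$, and $\int_0^\infty 2x \, d\Fx(x) = 2\,\E X < \infty$, so the Dominated Convergence Theorem gives $||I_{n_k} - I||_1 = \int_0^\infty |I_{n_k}(x) - I(x)| \, d\Fx(x) \to 0$. This establishes sequential compactness, hence compactness, of $\mic$. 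The only delicate point is the diagonal extraction, needed because $X$ may be unbounded; arguing via the genuine pointwise limit (rather than trying to show directly that $\mic$ is closed and totally bounded) is precisely what lets us verify that the defining inequalities of $\mic$ hold everywhere and not merely $\Fx$-almost everywhere.
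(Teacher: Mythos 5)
Your proof is correct, but it takes a genuinely different route from the paper's. The paper establishes compactness as completeness plus total boundedness: it asserts that $\mic$ is closed in $L^1(\Fx)$, and then verifies total boundedness by projecting $\mic$ onto piecewise-constant functions on a partition of $[0,M]$ and invoking Lemma 1 of Hanche-Olsen and Holden \cite{HH2010} (a Kolmogorov--Riesz-type criterion), with the tail $\int_M^\infty x\,d\Fx(x)<\eps$ controlling what happens beyond $M$. You instead prove sequential compactness directly: the uniform $1$-Lipschitz bound and $0\le I(x)\le x$ give equicontinuity and local boundedness, Arzel\`a--Ascoli plus a diagonal extraction yield a locally uniform pointwise limit, the defining inequalities of $\mic$ pass to that limit, and dominated convergence (with dominating function $2x$, integrable because $\E X<\infty$) upgrades pointwise to $L^1(\Fx)$ convergence. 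Both arguments rest on the same two structural facts --- the uniform Lipschitz property of $\mic$ and the integrability of $x\mapsto x$ under $\Fx$ --- but yours is self-contained, avoids the external compactness criterion, and, as you note, produces an honest representative in $\mic$ of the limiting equivalence class; this last point in effect also supplies the closedness of $\mic$ in $L^1(\Fx)$, which the paper asserts without detail. What the paper's approach buys is an explicit finite-dimensional $\eps$-net, which sits more naturally in the totally-bounded framework it sets up for $L^1$ spaces. Your remark that completeness of $L^1(\Fx)$ is what reduces compactness to sequential compactness is slightly off --- in any metric space the two notions coincide --- but this is cosmetic and does not affect the argument.
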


\begin{proof}
We prove this lemma via information from the survey paper by Hanche-Olsen and Holden \cite{HH2010}.  First, a subset of a metric space is compact if and only if it is complete and totally bounded.  $\mic$ is a closed subset of the complete space $L^1(\Fx)$; therefore, $\mic$ is complete.  To prove that $\mic$ is totally bounded, we rely on Lemma 1 of Hanche-Olsen and Holden \cite{HH2010}.

First, because for any $I \in \mic$, $|| \cdot ||_1 \le \E X < \infty$; thus, $\mic$ is bounded by $\E X$.  For $\eps > 0$,  because $\E X < \infty$, there exists $M > 0$ such that $\int_M^\infty x \, d\Fx(x) < \eps$.  Then, because $0 \le I(x) \le x$ for all $x \ge 0$ and for all $I \in \mic$, it follows that $\int_M^\infty I(x) \, d\Fx(x) < \eps$ for all $I \in \mic$.

Next, choose $N = \big\lfloor \frac{M}{\eps} \big\rfloor + 1$, and define $Q = \bigcup^N_{i=1} Q_i $, in which $Q_i = \big((i-1)\eps, \, i \eps \big]$. We see that the $Q_i$, for $i = 1, \dots, N$ are mutually non-overlapping and of equal length $\eps$.  Moreover, $Q \supset (0, M]$ and for any $x, y \in Q_i$, we have $|x-y|< \eps$. Let 
$P: \mic \rightarrow \R^+$ denote the projection mapping of $L^1(\Fx)$ onto the linear span of the characteristic functions of $Q_i$ given by
\begin{equation*}
\big(P(I)\big)(x) =
\begin{cases}
\dfrac{1}{\eps} \displaystyle {\int^{i \eps}_{(i-1)\eps} I(y) dy}, &\quad x\in Q_i, \hbox{ for some } i=1, \dots, N, \\
0, &\quad \rm{otherwise}.
\end{cases} 
\end{equation*}
Then, we have for $I \in \mic$,
\begin{align*}
||I - P(I) ||_1 & = \int^{\infty}_0 \big|I(x) - \big(P(I)\big)(x) \big| d\Fx(x) = \int^{N \eps}_0 \big|I(x) - \big(P(I)\big)(x) \big| d\Fx(x) + \int^{\infty}_{N\eps} I(x) d\Fx(x) \\
& \le \sum^N_{i=1} \int^{i \eps}_{(i-1)\eps} \left|I(x) - \frac{1}{\eps} \int^{i \eps}_{(i-1)\eps} I(y) dy \right| d\Fx(x) + \int^{\infty}_{M} I(x) d\Fx(x) \\
& \le \sum^N_{i=1} \int^{i \eps}_{(i-1)\eps}\frac{1}{\eps} \int^{i \eps}_{(i-1)\eps} | I(x) -   I(y)| \, dy \, d\Fx(x) + \eps \\
& \le \eps \big(\Fx(N\eps) + 1 \big) \le 2 \eps,
\end{align*}
in which the third inequality follows from $| I(x) -   I(y)| \le |x - y| < \eps$, for any $x, y \in Q_i$.

Moreover, if $I_1, I_2 \in \mic$ and $||P(I_1) - P(I_2)||_1 < \eps$, then $||I_1 - I_2||_1 < 5 \eps$. Indeed, by using the triangle inequality in $L^1$-norm, we obtain
\begin{align*}
|| I_1 - I_2 ||_1  &= \big|\big| \big(P(I_1) - I_1\big) - \big(P(I_2) - I_2 \big) - \big(P(I_1) - P(I_2)\big) \big|\big|_1 \\
& \le \big|\big| P(I_1) - I_1\big|\big|_1 + \big|\big| P(I_2) - I_2 \big|\big|_1 + \big|\big| P(I_1) - P(I_2) \big|\big|_1 \\
& \le 2 \eps + 2 \eps + \eps = 5 \eps.
\end{align*}
Furthermore, $P(\mic)$ is bounded; indeed, for any $I \in \mic$,
\begin{align*}
|| P(I) ||_1 &= \sum^N_{i=1} \int^{i \eps}_{(i-1)\eps} \big|\big(P(I)\big)(x) \big| d\Fx(x)
=\sum^N_{i=1} \int^{i \eps}_{(i-1)\eps} \frac{1}{\eps} \int^{i \eps}_{(i-1)\eps} I(y) \, dy \, d\Fx(x) \\
&=\sum^N_{i=1} \frac{\P(X \in Q_i)}{\eps} \int^{i \eps}_{(i-1)\eps} I(y) \, dy
\le \sum^N_{i=1} \frac{1}{\eps} \int^{i \eps}_{(i-1)\eps} y \, dy
= \frac{1}{2} \, N^2 \eps < \infty.
\end{align*}
Also, because $P(\mic)$ is finite dimensional, it follows that $P(\mic)$ is totally bounded.  Finally, Lemma 1 in Hanche-Olsen and Holden \cite{HH2010} implies that $\mic$ is totally bounded.
\end{proof}




\begin{proposition}\label{prop:exist}
There exists $\Is \in \mic$ such that
\begin{equation}\label{eq:exist}
\sup \limits_{I \in \mic}  \rho_b \big(u(w - X + I(X) - \pi_I) \big) =  \rho_b \big(u(w - X + \Is(X) - \pi_{\Is}) \big).
\end{equation}
\end{proposition}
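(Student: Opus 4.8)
The plan is to realize $\Is$ as the limit of a maximizing sequence, combining the compactness of $\mic$ from Lemma~\ref{lem:mic_compact} with the continuity of the objective functional $J(I):=\rho_b\big(u(w-X+I(X)-\pi_I)\big)$ along $L^1(\Fx)$-convergent sequences in $\mic$. Two uniform bounds are recorded first. For any $I\in\mic$ the pair $(I(X),\,X-I(X))$ consists of non-negative random variables that are comonotonic and sum to $X$, so Property~5 of Lemma~\ref{lem:pi_prop} gives $\pi(X)=\pi_I+\pi(X-I(X))$; since $\pi(Y)=(1+\tet)\E Y+\rho_k(Y)\ge 0$ whenever $Y\ge 0$ (because $\tet>-1$ and $k\ge 0$ by Proposition~\ref{prop:canon}), we get $0\le\pi_I\le\pi(X)<\infty$ uniformly in $I\in\mic$. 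Consequently, since $0\le I(X)\le X$, the terminal wealths $W_I:=w-X+I(X)-\pi_I$ satisfy $w-X-\pi(X)\le W_I\le w$, hence, by monotonicity of $u$, $u(w-X-\pi(X))\le u(W_I)\le u(w)$ for every $I\in\mic$; by the standing integrability hypotheses the outer terms have finite $\rho_b$-value, and $\rho_b(u(w))=u(w)$.

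Now let $(I_n)\subset\mic$ satisfy $J(I_n)\to V:=\sup_{I\in\mic}J(I)$; since $J$ is finite on $\mi$ by assumption and $J\le u(w)$ on $\mic$, we have $V\in\R$. By Lemma~\ref{lem:mic_compact}, $\mic$ is compact in $L^1(\Fx)$, so after passing to a subsequence $I_n\to\Is$ in $L^1(\Fx)$ for some $\Is\in\mic$, and, passing to a further subsequence (using that $L^1$ convergence implies convergence in probability), we may assume $I_n(X)\to\Is(X)$ $\P$-a.s. It then suffices to show $J(\Is)\ge\limsup_n J(I_n)=V$, since $J(\Is)\le V$ by definition. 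The first key point is $\pi_{I_n}\to\pi_{\Is}$: the expectation part is immediate from $|\E I_n(X)-\E\Is(X)|\le\|I_n-\Is\|_1\to 0$, while for $\rho_k(I_n(X))=\int_0^\infty k\big(\P(I_n(X)>t)\big)\,dt$ the a.s.\ convergence of $I_n(X)$ forces $\P(I_n(X)>t)\to\P(\Is(X)>t)$ at every continuity point of the limiting survival function, hence for a.e.\ $t$, and $k$ is continuous. For the domination, $I_n(X)\le X$ gives $\P(I_n(X)>t)\le\Sx(t)$; choosing $T$ with $\Sx(T)\le p^*$, where $p^*$ is a maximizer of the concave function $k$ (when $k\not\equiv 0$; otherwise $\rho_k\equiv 0$ and this part is trivial), so that $k$ is non-decreasing on $[0,p^*]$, one bounds $k\big(\P(I_n(X)>t)\big)$ by $\max_{[0,1]}k$ on $[0,T]$ and by $k(\Sx(t))$ on $(T,\infty)$, the latter integrable because $\int_0^\infty k(\Sx(t))\,dt=\rho_k(X)=\pi(X)-(1+\tet)\E X<\infty$. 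Dominated convergence yields $\rho_k(I_n(X))\to\rho_k(\Is(X))$, hence $\pi_{I_n}\to\pi_{\Is}$, and thus $W_{I_n}\to W_{\Is}$ and $u(W_{I_n})\to u(W_{\Is})$ $\P$-a.s. The second key point is to pass this through $\rho_b$: since $u(w-X-\pi(X))\le u(W_{I_n})\le u(w)$ with the outer terms having finite $\rho_b$-value, and since the capacity $b\circ\P$ is continuous from below and from above (as $\P$ is a probability measure and $b$ is continuous), a dominated-convergence argument for the Choquet integral $\rho_b$ gives $\rho_b(u(W_{I_n}))\to\rho_b(u(W_{\Is}))$; that is, $J(I_n)\to J(\Is)$, so $J(\Is)=V$, which is \eqref{eq:exist}.

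The main obstacle is precisely this limit interchange, because neither $\rho_k$ nor $\rho_b$ is additive or monotone in the convenient direction. For $\rho_k$, the non-monotonicity of the deviation distortion $k$ prevents a uniform domination of $k\big(\P(I_n(X)>t)\big)$ by $k(\Sx(t))$; the fix is to split the $t$-integral at a level $T$ past which the survival function has dropped below the mode $p^*$ of $k$, using the crude bound $\max_{[0,1]}k$ on the bounded part and $k(\Sx(t))$ (integrable since $\rho_k(X)<\infty$) on the tail. For $\rho_b$, the Choquet integral is not additive, so one controls the positive part of $u(W_{I_n})$ with the constant dominator $u(w)$ and its negative part with the fixed random dominator $u(w-X-\pi(X))$, whose finiteness under $\rho_b$ is exactly where the standing integrability assumptions on $b$ and $u$ enter; the continuity of $b\circ\P$ from below and above then legitimizes the passage to the limit.
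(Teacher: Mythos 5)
Your proof is correct and follows essentially the same route as the paper's: extract a maximizing sequence, use the compactness of $(\mic, \|\cdot\|_1)$ from Lemma~\ref{lem:mic_compact} to pass to a convergent subsequence, and conclude by continuity of the objective functional at the limit. The only difference is that the paper merely asserts the continuity of $I \mapsto \rho_b\big(u(w - X + I(X) - \pi_I)\big)$, whereas you supply the dominated-convergence arguments (for $\rho_k$, handling the non-monotonicity of $k$, and for the Choquet integral $\rho_b$) that justify it.
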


\begin{proof}
From Lemma \ref{lem:mic_compact}, we know that $\mic$ is compact under the $|| \cdot ||_1$-norm.  Also, $\rho_b \big(u(w - X + I(X) - \pi_I) \big)$ is continuous in $I$ with respect to this metric and bounded above by $\rho_b(u(w)) = b(1) u(w) < \infty$.  Thus, among other things, the supremum in \eqref{eq:max} is finite.

Now, define a sequence in $\mic$ as follows: for $n \in \N$, there exists $I_n \in \mic$ such that
\begin{equation}\label{eq:In}
\rho_b \big(u(w - X + I_n(X) - \pi_{I_n}) \big) + \dfrac{1}{n} \ge \sup \limits_{I \in \mic} \rho_b \big(u(w - X + I(X) - \pi_I) \big).
\end{equation}
Because $\mic$ is a compact metric space, the sequence $\{ I_n \}$ has a subsequence $\{ I_{n_k} \}$ that converges to a function $\Is$ in $\mic$.  Also, convexity of $b$ and concavity of $u$ guarantee that $\rho_b \big(u(w - X + I(X) - \pi_I) \big)$ is continuous in $I$, which implies
\[
\lim_{n_k \to \infty} \rho_b \big(u(w - X + I_{n_k}(X) - \pi_{I_{n_k}}) \big) = \rho_b \big(u(w - X + \Is(X) - \pi_{\Is}) \big).
\]
This limit and inequality \eqref{eq:In} imply \eqref{eq:exist}.
\end{proof}

\begin{remark}
The existence result of Proposition {\rm \ref{prop:exist}} slightly extends Part $1$ of Lemma $2.1$ of Chi and Zhuang {\rm \cite{CZ2020}}.  In their proof, Chi and Zhuang {\rm \cite{CZ2020}} use the $L^\infty$ norm because they assume their random variables are bounded, while we use the $L^1(\Fx)$ norm.  Note that any random variable in $L^\infty$ is automatically in $L^1(\Fx)$.  \qed
\end{remark}

In the next section, we solve the optimization problem in \eqref{eq:max}, and we need the following results, so we present them here.  Because $I \in \mic$ is Lipschitz continuous, there exists a function $I'$ such that
\begin{equation}\label{eq:I_deriv}
I(x) = \int_0^x I'(t) \, dt = \int_0^\infty I'(t) \, \id_{\{x > t\}} \, dt,
\end{equation}
Also, we can rewrite $\pi = \pi(I(X))$, for any $I \in \mic$, via the following sequence of equalities, in which we define $S_{Y}^{-1}$ by
\[
S_{Y}^{-1}(p) = \inf\big\{ t \in \R: S_{Y}(t) \le p \big\},
\]
for $0 \le p \le 1$, and in which we use $S_{I(X)}^{-1} = I(\Sx^{-1})$ (except at a countable number of points) from Proposition 4.1 of Denneberg \cite{D1994}:
\begin{align}\label{eq:pi_I}
\pi(I(X)) &= \int_0^\infty \big[(1 + \tet)S_{I(X)}(t) + k(S_{I(X)}(t)) \big] dt  \notag \\
&= \int_0^1 S_{I(X)}^{-1}(p) \, d((1 + \tet)p +  k(p)) \notag \\
&= \int_0^1 I(S_{X}^{-1}(p)) \, d((1 + \tet)p +  k(p)) \notag \\
&= - \int_0^\infty I(t) \, d\big((1 + \tet)\Sx(t) + k(\Sx(t))\big) \notag \\
&= \int_0^\infty I'(t) \big((1 + \tet)\Sx(t) + k(\Sx(t))\big) \, dt.
\end{align}
We can similarly rewrite $\rho_b\big( u(w - X + I(X) - \pi_I) \big)$ in \eqref{eq:RDEU}, as we prove in the following lemma.

\begin{lemma}\label{lem:RDEU2}
The following identity holds:
\begin{align}\label{eq:RDEU2}
\rho_b\big( u(w - X + I(X) - \pi_I) \big) &= \int_0^1 u(w - R(S_{X}^{-1}(1 - p)) - \pi_I) \, db(p) \notag \\
&= \int_0^\infty u(w - R(x) - \pi_I) \, db(\Fx(x)),
\end{align}
in which $R(x) = x - I(x)$.
\end{lemma}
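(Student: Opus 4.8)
The plan is to mirror the chain of equalities that produced \eqref{eq:pi_I}, the one new ingredient being monotonicity. Since $I\in\mic$, the retention $R(x)=x-I(x)$ satisfies $R(x)-R(y)=(x-y)-(I(x)-I(y))\ge 0$ whenever $x\ge y\ge 0$, so $R$ is non-decreasing; it is also Lipschitz, hence continuous. As $u$ is continuous and strictly increasing, $v(x):=u(w-R(x)-\pi_I)$ is continuous and non-increasing, and $u(w-X+I(X)-\pi_I)=v(X)$ is a non-increasing function of $X$. So the whole computation reduces to evaluating $\rho_b(v(X))$. The first tool I would record is the quantile representation
\[
\rho_b(Z)=\int_0^1 S_Z^{-1}(p)\,db(p),
\]
valid for every $Z$ for which $\rho_b(Z)$ is finite: for $Z\ge 0$ this is exactly the first two equalities in \eqref{eq:pi_I} with the distortion there replaced by $b$, and the general case follows either by the same layer-cake (Fubini) argument applied separately to the two integrals defining $\rho_b$ in \eqref{eq:dist}, or, when $Z$ is bounded below, by shifting $Z$ by a constant, which moves both sides by the same amount because $b(1)=1$ and $db$ has total mass $1$.

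Next I would transport the law of $X$ through $v$. Applying Proposition 4.1 of Denneberg (in the form quoted above, for the non-decreasing continuous function $-v$) gives $S_{-v(X)}^{-1}(p)=-\,v(\Sx^{-1}(p))$ for all $p\in(0,1)$ outside a countable set. Combining this with the elementary reflection identity $S_{v(X)}^{-1}(p)=-S_{-v(X)}^{-1}(1-p)$ --- which holds for all $p\in(0,1)$ outside the countable set of levels at which the distribution function of $v(X)$ is flat --- yields $S_{v(X)}^{-1}(p)=v(\Sx^{-1}(1-p))$ for all but countably many $p$. Since $b$ is continuously differentiable, $db$ is atomless, so these exceptional points are irrelevant, and substituting into the quantile representation gives the first asserted identity $\rho_b(u(w-X+I(X)-\pi_I))=\int_0^1 u(w-R(\Sx^{-1}(1-p))-\pi_I)\,db(p)$.

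For the second identity I would change the integration variable from $p$ to $x$. Because $b$ is continuous and strictly increasing, $G(x):=b(\Fx(x))$ is a bona fide distribution function on $[0,\infty)$ (non-decreasing, right-continuous, with $G(0^-)=b(0)=0$ and $G(\infty)=b(1)=1$), and its left-continuous inverse is $G^{-1}(p)=\Fx^{-1}(b^{-1}(p))=\Sx^{-1}(1-b^{-1}(p))$, where I use $\Sx=1-\Fx$, so that $\Sx^{-1}(1-q)=\Fx^{-1}(q)$ exactly. Hence, by the standard identity $\int_0^\infty\phi\,dG=\int_0^1\phi(G^{-1}(p))\,dp$ applied to $\phi(x)=u(w-R(x)-\pi_I)$, followed by the substitution $q=b^{-1}(p)$ (under which Lebesgue measure on $(0,1)$ pushes forward to $db$), one obtains
\[
\int_0^\infty u(w-R(x)-\pi_I)\,db(\Fx(x))=\int_0^1 u(w-R(\Sx^{-1}(1-q))-\pi_I)\,db(q),
\]
which is exactly the middle member of \eqref{eq:RDEU2}.

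The step I expect to be the most delicate is the passage $S_{v(X)}^{-1}(p)=v(\Sx^{-1}(1-p))$: one must track which one-sided generalized inverse is being used and must account for the atoms of $X$ and the flat stretches of $v$, each of which can perturb the identity on a countable set. The point that makes all of this harmless is the atomlessness of $db$ (which follows from $b\in\mC^1$), so that countable exceptional sets contribute nothing to any of the integrals $\int_0^1(\cdot)\,db(p)$ that appear.
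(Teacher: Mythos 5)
Your proof is correct and follows essentially the same route as the paper's: both reduce $\rho_b$ to a quantile integral against $db$ and then commute the monotone map $x\mapsto u(w-R(x)-\pi_I)$ with the quantile via Denneberg's Proposition 4.1, discarding a countable exceptional set that $db$ (atomless, since $b\in\mC^1$) does not see. The only cosmetic difference is that the paper first writes the integral with respect to the distorted measure $b\circ\P$ and substitutes $p'=b(p)$, then peels off $u$ and $R$ in two stages, whereas you apply the composition result once to the non-increasing map $v$ via the reflection identity; the two computations are interchangeable.
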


\begin{proof}
From pages 61f of Denneberg \cite{D1994}, we know that the expression for $\rho_b\big( u(w - X + I(X) - \pi_I) \big)$ in \eqref{eq:RDEU} equals
\begin{equation}\label{eq:RDEU2.5}
\int_0^{b(1)} \check S_{b \circ \P, \, u(w - R(X) - \pi_I)}(p') \, dp',
\end{equation}
in which $\check S_{\mu, f(X)}$ denotes any (pseudo-)inverse of $S_{\mu, f(X)}(\cdot) := \mu(f(X) > \cdot \,)$; see, page 5 of Denneberg \cite{D1994} for the definition of a (pseudo-)inverse of a non-increasing function.  Now, let $p' = b(p)$ in \eqref{eq:RDEU2.5} to obtain
\begin{equation}\label{eq:RDEU3}
\int_0^1 \check S_{b \circ \P, \, u(w - R(X) - \pi_I)}(b(p)) \, db(p).
\end{equation}
Next, from Proposition 4.1 of Denneberg \cite{D1994}, because $u$ is increasing, \eqref{eq:RDEU3} equals
\begin{equation}\label{eq:RDEU4}
\int_0^1 u\big(\check S_{b \circ \P, \, w - R(X) - \pi_I}(b(p)) \big) \, db(p),
\end{equation}
and because $b$ is increasing, \eqref{eq:RDEU4} simplifies to
\begin{equation}\label{eq:RDEU5}
\int_0^1 u\big(\check S_{\P, \, w - R(X) - \pi_I}(p) \big) db(p).
\end{equation}
Moreover, because $R(\cdot)$ is a non-decreasing function, we can rewrite \eqref{eq:RDEU5} as
\[
\int_0^1 u(w - R(S_X^{-1}(1 - p)) - \pi_I) \, db(p),
\]
which is the first expression for $\rho_b\big( u(w - X + I(X) - \pi_I) \big)$ in \eqref{eq:RDEU2}.  Recall that we can use any (pseudo-)inverse of $S_{R(X)}$ in this integral, including $R \circ S_X^{-1}$, which equals $S_{R(X)}^{-1}$, except at countably many points.  Finally, if we let $x = S_X^{-1}(1 - p)$, then we get the second expression in \eqref{eq:RDEU2}.
\end{proof}

We end this section with a proposition that gives two conditions under which $\Is \in \mic$ in \eqref{eq:exist} is unique, in which uniqueness means that if $I_1, I_2 \in \mic$ are optimal, then $I_1(X) = I_2(X), \, \P$-a.s.  We rely on the expression for $\rho_b\big( u(w - X + I(X) - \pi_I) \big)$ in \eqref{eq:RDEU2}.

\begin{proposition}\label{prop:unique}
Assume the utility function $u$ is strictly concave.  Then, the optimal indemnity $\Is \in \mic$ is unique if and only if at least one of the following two conditions holds:
\begin{enumerate}
\item[$1.$]  $\tet \ne 0$, in which $\tet$ equals the proportional loading factor in the distortion-deviation premium principle given in  \eqref{eq:canon}.

\item[$2.$] ${\rm ess \; inf \,}X = 0$, in which {\rm ess inf} is the $\P$-essential infimum of $X$.
\end{enumerate}
\end{proposition}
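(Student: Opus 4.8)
The plan is to prove the two implications separately. For sufficiency of either hypothesis, suppose $I_1, I_2 \in \mic$ are both optimal, put $R_i(x) = x - I_i(x)$ and $\pi_i = \pi(I_i(X))$, and let $V^*$ be the common optimal value. Since $\mic$ is convex, $I_\la := \la I_1 + (1 - \la) I_2 \in \mic$ for $\la \in (0,1)$, its retention is $\la R_1 + (1-\la)R_2$, and by Property 3 of Lemma \ref{lem:pi_prop}, $\pi(I_\la(X)) \le \la \pi_1 + (1-\la)\pi_2$. Using the representation of the objective from Lemma \ref{lem:RDEU2}, the monotonicity of $u$, and then the pointwise concavity of $u$ together with the identity
\[
w - \la R_1(x) - (1-\la)R_2(x) - \big(\la\pi_1 + (1-\la)\pi_2\big) = \la\big(w - R_1(x) - \pi_1\big) + (1-\la)\big(w - R_2(x) - \pi_2\big),
\]
one obtains $\rho_b\big(u(w - X + I_\la(X) - \pi(I_\la(X)))\big) \ge \la V^* + (1-\la)V^* = V^*$. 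As this is also $\le V^*$, every inequality in the chain must be an equality; in particular the pointwise concavity inequality is an equality for $db(\Fx(\cdot))$-a.e.\ $x$, so strict concavity of $u$ forces $R_1(x) - R_2(x) = \pi_2 - \pi_1 =: c$ for such $x$. Because $b$ is strictly increasing, $b \circ \Fx$ charges every relative neighbourhood of the support of $X$, and since $R_1 - R_2$ is continuous (each $I_i$ is $1$-Lipschitz), the identity $R_1 - R_2 \equiv c$ holds on the whole support; hence $R_1(X) - R_2(X) = c$, $\P$-a.s.

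It then remains to force $c = 0$. If (2) holds, $0$ lies in the support of $X$ and $R_1(0) = R_2(0) = 0$ since $I(0) = 0$ for all $I \in \mi$, so $c = 0$ by continuity. If (1) holds, then $I_1(X) = I_2(X) - c$ $\P$-a.s., so by additivity of the mean and the translation invariance $\rho_k(\,\cdot + c) = \rho_k(\,\cdot\,)$ --- a consequence of \eqref{eq:canon} and Property 1 of Lemma \ref{lem:pi_prop} --- we get $\pi_1 = (1 + \tet)\big(\E I_2(X) - c\big) + \rho_k(I_2(X)) = \pi_2 - (1 + \tet)c$; together with $c = \pi_2 - \pi_1$ this yields $\tet\, c = 0$, whence $c = 0$. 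In either case $I_1(X) = X - R_1(X) = X - R_2(X) = I_2(X)$ $\P$-a.s., i.e.\ the optimal indemnity is unique.

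For the converse I argue the contrapositive: assume $\tet = 0$ and $m := {\rm ess\,inf}\,X > 0$, and exhibit two distinct optimal indemnities. Take an optimal $I_1 \in \mic$ (Proposition \ref{prop:exist}); as $I_1(m) + R_1(m) = m > 0$, one of $I_1(m), R_1(m)$ is strictly positive. Pick $d > 0$ with $d \le I_1(m)$ (take $s = -1$) in the first case and $d \le R_1(m)$ (take $s = +1$) in the second, and define $I_2$ to equal $I_1 + s\,d$ on $[m, \infty)$ and to run linearly from $I_2(0) = 0$ to $I_2(m)$ on $[0, m)$; the constraint on $d$ makes $0 \le I_2(x) \le x$ on $[m,\infty)$ and $0 \le I_2(m) \le m$, so that $I_2 \in \mic$. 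Since $X \ge m$ $\P$-a.s., $I_2(X) = I_1(X) + s\,d$ $\P$-a.s., and then $\tet = 0$ and $\rho_k(\,\cdot + s\,d) = \rho_k(\,\cdot\,)$ give $\pi(I_2(X)) = \pi(I_1(X)) + s\,d$, hence $w - X + I_2(X) - \pi(I_2(X)) = w - X + I_1(X) - \pi(I_1(X))$ $\P$-a.s.; thus $I_2$ is also optimal while $\P(I_1(X) \ne I_2(X)) = 1$.

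The main obstacle is the step in the sufficiency argument that upgrades a ``$db(\Fx(\cdot))$-a.e.'' identity to one holding throughout the support of $X$: this is exactly where strict monotonicity of $b$ (not just monotonicity) is needed, combined with the Lipschitz continuity of the retentions. The verification that the shifted-and-interpolated $I_2$ lies in $\mic$ in the converse direction is routine but requires care at the junction $x = m$. Everything else follows from the strict-concavity step collapsing $R_1 - R_2$ to a constant.
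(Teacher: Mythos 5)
Your proof is correct and follows essentially the same route as the paper's: the convex-combination/strict-concavity argument forcing $I_1(X)-I_2(X)$ to be a.s.\ constant, killed either by $\tet\neq 0$ (you via translation invariance of $\rho_k$, the paper via the quantile integral — the same computation) or by ${\rm ess\,inf}\,X=0$ with $I(0)=0$; and for the converse, a constant shift of an optimal indemnity made admissible by ${\rm ess\,inf}\,X>0$ (you by linear interpolation on $[0,m)$, the paper by $\min/(\cdot)_+$ truncation). No gaps.
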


\begin{proof}  {\bf Proof of the {\it if} statement:}
Let $\mM$ denote $\max \limits_{I \in \mic}  \rho_b \big(u(w - X + I(X) - \pi_I) \big)$. Suppose $I_1, I_2 \in \mic$ are such that, for $i = 1, 2$,
\[
\rho_b \big(u(w - X + I_i(X) - \pi_i) \big) = \mM,
\]
in which $\pi_i = \pi(I_i(X))$.  Then, for $\la \in [0, 1]$, we know $I_\la := \la I_1 + (1 - \la) I_2 \in \mic$, and from Property 3 of Lemma \ref{lem:pi_prop}, we have
\[
\pi_{I_\la} \le \la \pi_1 + (1 - \la) \pi_2,
\]
which implies, because $u$ is increasing and concave,
\begin{align*}
\mM &\ge \rho_b \big(u(w - X + I_\la(X) - \pi_{I_\la}) \big) \ge \rho_b \big(u(w - X + I_\la(X) - (\la \pi_1 + (1 - \la) \pi_2)) \big) \\
&= \int_0^\infty u(w - x + \la(I_1(x) - \pi_1) + (1 - \la)(I_2(x) - \pi_2)) \, db(\Fx(x)) \\
&\ge \la \int_0^\infty u(w - x + I_1(x) - \pi_1) \, db(\Fx(x)) + (1 - \la) \int_0^\infty u(w - x + I_2(x) - \pi_2) \, db(\Fx(x)) \\
&= \mM.
\end{align*}
Thus,
\[
\rho_b \big(u(w - X + I_\la(X) - \pi_{I_\la}) \big) = \mM,
\]
for all $\la \in [0, 1]$, which implies, because $u$ is strictly concave,
\[
w - X + I_1(X) - \pi_1 = w - X + I_2(X) - \pi_2,
\]
almost surely with respect to the distorted (non-additive) measure $b \circ \P$, or equivalently, because $b$ is strictly increasing,
\begin{equation}\label{eq:I1eqI2}
I_1(X) - \pi_1 = I_2(X) - \pi_2, \qquad \P\hbox{-a.s.}
\end{equation}

Now, suppose $\tet \ne 0$, as in Condition 1; then,
\[
\int_0^1 S_{I_1(X) - \pi_1}^{-1}(p) \, d((1 + \tet)p +  k(p)) = \int_0^1 S_{I_2(X) - \pi_2}^{-1}(p) \, d((1 + \tet)p +  k(p)),
\]
which implies
\[
\int_0^1 S_{I_1(X)}^{-1}(p) \, d((1 + \tet)p +  k(p)) - (1 + \tet) \pi_1 = \int_0^1 S_{I_2(X)}^{-1}(p) \, d((1 + \tet)p +  k(p)) - (1 + \tet)\pi_2,
\]
which reduces to
\[
\tet \pi_1 = \tet \pi_2.
\]
Thus, because $\tet \ne 0$, we have $\pi_1 = \pi_2$, and \eqref{eq:I1eqI2} implies $I_1(X) = I_2(X), \, \P$-a.s.

Next, suppose ${\rm ess \; inf \,}X = 0$, as in Condition 2, that is,
\begin{equation}\label{eq:essinf}
\sup \big\{ t \in \R: \P(X < t) = 0 \big\} = 0.
\end{equation}
Define a sequence $\{ x_n: n \in \N \} \subset \R^+$ as follows: for $n \in \N$, because equation \eqref{eq:essinf} implies $\P(X < 1/n) > 0$, then there exists $x_n \in [0, 1/n)$ at which \eqref{eq:I1eqI2} holds with $X = x_n$.  Indeed, if there were {\it no} value of $x \in [0, 1/n)$ at which equation \eqref{eq:I1eqI2} holds with $X = x$, then \eqref{eq:I1eqI2} would not hold with at least probability $\P(X < 1/n) > 0$, a contradiction.  Then, we have, for $n \in \N$,
\[
\pi_1 - \pi_2 = I_1(x_n) - I_2(x_n).
\]
Because each $I_i$ is continuous with $I_i(0) = 0$, and because $\lim_{n \to \infty} x_n = 0$, we have
\[
\pi_1 - \pi_2 = \lim_{n \to \infty} \big( I_1(x_n) - I_2(x_n) \big) = I_1(0) - I_2(0) = 0,
\]
or equivalently, $\pi_1 = \pi_2$, which again implies $I_1(X) = I_2(X), \, \P$-a.s.

{\bf Proof of the {\it only if} statement:}  Suppose neither condition holds, that is, $\tet = 0$ and $a := {\rm ess \; inf \,} X > 0$.  Let $\Is$ be an optimal indemnity, which we know exists from Proposition \ref{prop:exist}.

If $\Is(a) = 0$, then define an indemnity function $\Is_a$ by
\[
\Is_a(x) = \min \big(\Is(x) + a, \, x \big), \qquad x \ge 0.
\]
Because $\Is \in \mathcal I_c$ is non-decreasing and 1-Lipschitz (specifically, $0 \le \Is(x) - \Is(y) \le x - y$ for all $0 \le y \le x$), we have, for $x \ge a$,
\[
\Is(x) + a \le  \Is(a) + x = x.
\]
Hence,
\[
\Is_a(x) = \min \big(\Is(x) + a, \, x \big) = x \id_{\{x < a\}} + (\Is(x) + a) \id_{\{x \ge a\}}, \qquad x \ge 0.
\]
Because both $\Is + a$ and $x$ are non-decreasing and 1-Lipschitz, so is their minimum.  Also, from the definition of $\Is_a$, we clearly have $0 \le \Is_a(x) \le x$ for $x \ge 0$.  We have, thus, shown $\Is_a \in \mic$.

On the other hand, $\Is_a(X) = \Is(X) + a$ ($\P$-)almost surely, because $X \ge a$ almost surely.  Because $\tet = 0$, Property 1 in Lemma \ref{lem:pi_prop} implies $\pi_{\Is_a} = \pi_{\Is} + a$.  Hence, $\Is_a(X) - \pi_{\Is_a} = \Is(X) - \pi_{\Is}$ almost surely, which implies
\[
\rho_b \big(u(w - X + \Is_a(X) - \pi_{\Is_a}) \big) = \rho_b \big(u(w - X + \Is(X) - \pi_{\Is}) \big),
\]
or equivalently, $\Is_a$ is an optimal indemnity, distinct from $\Is$ because $a > 0$.

If $\Is(a) > 0$, then define another indemnity function $\Is_a$ by
\[
\Is_a(x) = \big(\Is(x) - \Is(a) \big)_+, \qquad x \ge 0.
\]
It is straightforward to verify $\Is_a \in \mic$. Moreover, $\Is_a(X) = \Is(X) - \Is(a)$ almost surely.  Again, $\tet = 0$ implies $\pi_{\Is_a} = \pi_{\Is} - \Is(a)$.  Hence, $\Is_a(X) - \pi_{\Is_a} = \Is(X) - \pi_{\Is}$ almost surely, and $\Is_a$ is an optimal indemnity, distinct from $\Is$ because $\Is(a) > 0$.

In either case, $\Is$ is not the unique optimal indemnity.
\end{proof}

\begin{remark}
The {\bf if} statement of Proposition {\rm \ref{prop:unique}} is similar to Part $2$ of Lemma $2.1$ of Chi and Zhuang {\rm \cite{CZ2020}}, and the {\bf only if} statement is new.  Note that the proof of the {\bf only if} statement relies heavily on $I \in \mic$.

To interpret the results in Proposition {\rm \ref{prop:unique}}, assume $\theta=0$ and ${\rm ess inf} X >0$. 
Hypothetically, imagine that the condition $0\le I(x)\le  x $ is not required by an indemnity $I$.
In such a setting, adjusting an optimal indemnity function $\Is\in \mathcal I_c$ by a small amount  $\eps >0$ to  a new indemnity $\Is+ \eps$ or $\Is-\eps$ does not change the buyer's RDEU, since $\theta=0$.
The condition ${\rm ess inf} X >0$ gives some room to further adjust $\Is+\eps$ or $\Is+\eps$ to a new function $\tilde I$ in $\mathcal I_c$, i.e., satisfying $0\le \tilde I(x)\le x$, which is also optimal and thus the optimal indemnity is not unique. On the other hand,  such an adjustment is prohibited if ${\rm ess inf} X =0$, and it is no longer optimal if $\theta>0$.   \qed
\end{remark}

\section{Optimal insurance}\label{sec:optins}

\subsection{Main theorem and orders between distortion functions}\label{sec:main_amb}

We begin this section with the following theorem that characterizes optimal solutions of \eqref{eq:max}, which we know exist because of Proposition \ref{prop:exist}.  Recall that Proposition \ref{prop:unique} gives necessary and sufficient conditions for the uniqueness of the optimal indemnity.

\begin{theorem}\label{thm:optins}
An indemnity $\Is \in \mic$ is an optimal solution of \eqref{eq:max} if and only if it satisfies, for $t \ge 0$,
\begin{equation}\label{eq:Is_deriv}
(\Is)'(t) =
\begin{cases}
0, &\quad L(t) < 0, \\
v(t), &\quad L(t) = 0, \\
1, &\quad L(t) > 0,
\end{cases}
\end{equation}
almost surely with respect to Lebesgue measure, in which $v$ is some function on $\R^+$ taking values in $[0, 1]$, and $L$ is defined by
\begin{equation}\label{eq:L}
L(t) = \dfrac{\int_0^\infty u'(w - \Rs(x) - \pis) \, \id_{\{x > t\}} \, db(\Fx(x))}{\int_0^\infty u'(w - \Rs(x) - \pis) db(\Fx(x))} - \big((1 + \tet)\Sx(t) + k(\Sx(t))\big),
\end{equation}
for $t \in \R^+$, with $\Rs(x) = x - \Is(x)$ and $\pis = \pi(\Is(X))$.
\end{theorem}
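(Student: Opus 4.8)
The plan is to recognize \eqref{eq:max} as a concave maximization problem over the convex set $\mic$ and to derive \eqref{eq:Is_deriv} as the first-order (variational) optimality condition. First I would use the two rewritings established above: by Lemma \ref{lem:RDEU2}, the objective equals $\int_0^\infty u(w - R(x) - \pi_I)\, db(\Fx(x))$ with $R(x) = x - I(x)$, and by \eqref{eq:pi_I} the premium equals $\pi_I = \int_0^\infty I'(t)\big((1+\tet)\Sx(t) + k(\Sx(t))\big)\,dt$. Substituting $R(x) = x - \int_0^\infty I'(t)\id_{\{x>t\}}\,dt$ from \eqref{eq:I_deriv} expresses the whole objective as a functional $J(I')$ of the measurable function $I': \R^+ \to [0,1]$; the admissible set for $I'$ is exactly $\{\varphi: \R^+\to[0,1]\}$ (pointwise a.e.), which is convex, and $J$ is concave in $I'$ because $u$ is concave and $b$ is increasing (so $db$ is a positive measure) and $\pi_I$ is linear in $I'$. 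Concavity means a point is optimal iff it satisfies the standard variational inequality $\int_0^\infty \big(\varphi(t) - (\Is)'(t)\big)\, \frac{\delta J}{\delta I'}(t)\,dt \le 0$ for all admissible $\varphi$.

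Next I would compute the Gâteaux derivative. Perturbing $I' \mapsto \Is{}' + s(\varphi - \Is{}')$ and differentiating at $s=0$ under the integral sign, the chain rule gives a derivative proportional (after dividing by the positive normalizing constant $\int_0^\infty u'(w - \Rs(x) - \pis)\,db(\Fx(x))$, which is where the denominator in \eqref{eq:L} comes from) to $\int_0^\infty (\varphi(t) - (\Is)'(t))\, L(t)\, dt$, with $L$ exactly as in \eqref{eq:L}: the first term of $L$ is the functional derivative of $\int u(w-R(x)-\pi)\,db(\Fx)$ with respect to $I'(t)$ coming through $R(x) = x - \int I'\id_{\{x>t\}}\,dt$ and through $\pi$, while the subtracted term $(1+\tet)\Sx(t) + k(\Sx(t))$ is the derivative of $\pi_I$ with respect to $I'(t)$. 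One has to be a little careful: the dependence on $\pi_I$ enters $u$ through every $x$, but since $d\pi_I/dI'(t) = (1+\tet)\Sx(t)+k(\Sx(t))$ is a constant (in $x$) and $\int_0^\infty u'(\cdots)\,db(\Fx(x))$ is the normalizer, that contribution collapses precisely to the $-\big((1+\tet)\Sx(t)+k(\Sx(t))\big)$ term after normalization. The pointwise maximization of the linear-in-$\varphi(t)$ integrand $\varphi(t)L(t)$ over $\varphi(t)\in[0,1]$ then forces $\varphi(t)=1$ when $L(t)>0$, $\varphi(t)=0$ when $L(t)<0$, and leaves $\varphi(t)$ free in $[0,1]$ when $L(t)=0$; applied to the optimizer this is \eqref{eq:Is_deriv}, and conversely any $\Is \in \mic$ satisfying \eqref{eq:Is_deriv} makes the variational inequality hold, hence is optimal by concavity. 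Existence of an optimizer is guaranteed by Proposition \ref{prop:exist}, so the ``only if'' direction is non-vacuous.

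The main obstacle is the rigorous justification of differentiating under the integral sign and the interchange of the $t$-integral with the $x$-integral (a Fubini step), given that $u$ is defined on all of $\R$ and $u'$ may be unbounded, and that $db(\Fx(x))$ is a general Stieltjes measure with possible atoms. I would handle this by exploiting that along the segment $\Is{}' + s(\varphi-\Is{}')$ the retained loss $R_s(x)$ stays in the fixed bounded-below band $[\,0,\,x\,]$ shifted by the bounded premium perturbation, so $w - R_s(x) - \pi_s$ ranges over a set on which $u'$ is bounded on compacts uniformly in $s$ near $0$; combined with $\E X < \infty$ and the standing finiteness assumptions on $\rho_b(u(\cdots))$ and on $\pi$, this gives an integrable dominating function, legitimizing both the differentiation and the Fubini exchange. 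A secondary point worth stating carefully is that the ``essential'' qualifier ``almost surely with respect to Lebesgue measure'' in \eqref{eq:Is_deriv} is exactly what the pointwise-maximization argument yields, since altering $(\Is)'$ on a Lebesgue-null set changes neither $\Is$ nor the objective. This overall line of argument parallels the corresponding result in Chi and Zhuang \cite{CZ2020}, with the extra bookkeeping needed for the non-monotone deviation distortion $k$ and the $L^1(\Fx)$ (rather than $L^\infty$) setting.
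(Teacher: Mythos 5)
Your proposal is correct and follows essentially the same route as the paper's proof: the paper likewise perturbs along $\Ieps = (1-\eps)\Is + \eps I$, differentiates at $\eps = 0^+$ using concavity of $u$ (with a Dominated Convergence footnote for the interchange), applies Fubini together with \eqref{eq:I_deriv} and \eqref{eq:pi_I} to reduce the first-order condition to $\int_0^\infty ((\Is)'(t) - I'(t))L(t)\,dt \ge 0$, extracts \eqref{eq:Is_deriv} by the pointwise bang-bang argument, and runs the same chain of inequalities backwards for sufficiency. The only cosmetic difference is that you invoke linearity of $\pi_I$ in $I'$ on $\mic$ directly from \eqref{eq:pi_I}, whereas the paper first bounds $\pieps$ via convexity of $\pi$ and monotonicity of $u$ and $b$; both lead to the same inequality \eqref{eq:ineq2}.
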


\begin{proof}
Suppose $\Is \in \mic$ is the buyer's optimal indemnity; then, for any $I \in \mic$, the indemnity $\Ieps$ defined by
\[
\Ieps(x) = (1 - \eps) \Is(x) + \eps I(x),
\]
for $\eps \in (0, 1)$, is also in $\mic$ because $\mic$ is closed under convex combinations.  Because the premium principle $\pi$ is convex (recall Property 3 in Lemma \ref{lem:pi_prop}), we have
\begin{align*}
\pieps &:= \pi(\Ieps(X)) = \pi((1 - \eps) \Is(X) + \eps I(X)) \\
&\le (1 - \eps) \pi(\Is(X)) + \eps \pi(I(X)) \\
&=: (1 - \eps) \pis + \eps \pi.
\end{align*}
Because $\Is$ is optimal,
\[
\rho_b\big( u(w - \Reps(X) - \pieps)\big) \le \rho_b\big( u(w - \Rs(X) - \pis) \big),
\]
in which $\Reps(x) = x - \Ieps(x)$ and $\Rs(x) = x - \Is(x)$.  Because $u$ and $b$ are increasing, this inequality and $\pieps \le (1 - \eps) \pis + \eps \pi$ imply
\[
\rho_b\big( u(w - \Reps(X) - (1 - \eps) \pis - \eps \pi) \big) \le \rho_b\big( u(w - \Rs(X) - \pis) \big),
\]
or equivalently,
\[
\rho_b\big( u((w - \Rs - \pis) - \eps( (R - \Rs) + (\pi - \pis))) \big) \le \rho_b\big( u(w - \Rs - \pis) \big),
\]
which we rewrite, by using \eqref{eq:RDEU2}, as
\begin{align}\label{eq:ineq1}
&\int_0^\infty u\big((w - \Rs(x) - \pis) - \eps( (R(x) - \Rs(x)) + (\pi - \pis))\big) db(\Fx(x)) \notag \\
&\le \int_0^\infty u(w - \Rs(x) - \pis) \, db(\Fx(x)).
\end{align}
Because $u$ is differentiable and concave and because $R - \Rs = \Is - I$, inequality \eqref{eq:ineq1} implies
\begin{align*}
&\int_0^\infty u(w - \Rs(x) - \pis) db(\Fx(x)) \\
&\quad - \eps \int_0^\infty u'\big(  (w - \Rs(x) - \pis) - \eps((R(x) - \Rs(x)) + (\pi - \pis)) \big) \\
&~~ \qquad \qquad \times \big( (\Is(x) - I(x)) + (\pi - \pis) \big) db(\Fx(x)) \\
&\le \int_0^\infty u(w - \Rs(x) - \pis) db(\Fx(x)).
\end{align*}
After we cancel the term $\int_0^\infty u(w - \Rs(x) - \pis) db(\Fx(x))$ from each side, divide by $\eps$, and take a limit as $\eps \to 0^+$, we obtain\footnote{It is legitimate to switch the order of integration and limit by the Dominated Convergence Theorem, and we can take the limit ``inside'' $u'$ because $u$ is continuously differentiable.}
\begin{equation}\label{eq:ineq2}
\int_0^\infty u'(w - \Rs(x) - \pis) \big( (\Is(x) - I(x)) + (\pi - \pis) \big) db(\Fx(x)) \ge 0.
\end{equation}
Thus, by using \eqref{eq:I_deriv} and \eqref{eq:pi_I}, inequality \eqref{eq:ineq2} becomes
\begin{align}\label{eq:ineq3}
0 &\le \int_0^\infty u'(w - \Rs(x) - \pis) \int_0^\infty \big( (\Is)'(t) - I'(t) \big) \big\{ \id_{\{x > t\}} - \big((1 + \tet)\Sx(t) + k(\Sx(t))\big)  \big\} dt \, db(\Fx(x)) \notag \\
&= \int_0^\infty \big( (\Is)'(t) - I'(t) \big) \int_0^\infty u'(w - \Rs(x) - \pis) \big\{ \id_{\{x > t\}} - \big((1 + \tet)\Sx(t) + k(\Sx(t))\big) \big\} db(\Fx(x)) \, dt \notag \\
&= \int_0^\infty u'(w - \Rs(x) - \pis) db(\Fx(x)) \cdot \int_0^\infty \big( (\Is)'(t) - I'(t) \big) L(t) dt,
\end{align}
in which $L$ is given in \eqref{eq:L}.  Because $u$ is increasing, $u'(w - \Rs - \pis) > 0$ almost surely, which implies that inequality \eqref{eq:ineq3} is equivalent to
\begin{equation}\label{eq:ineq4}
\int_0^\infty \big( (\Is)'(t) - I'(t) \big) L(t) dt \ge 0.
\end{equation}
Because $I \in \mic$ is arbitrary, we deduce that $\Is$ necessarily satisfies \eqref{eq:Is_deriv}.

Conversely, suppose $\Is$ satisfies \eqref{eq:Is_deriv}; then, the above calculations imply, for any $I \in \mic$,
\begin{align*}
&\rho_b \big( u(w - \Rs - \pis) \big) - \rho_b \big( u(w - R - \pi)\big) \\
&\ge \int_0^\infty u'(w - \Rs(x) - \pis) \big( (\Is(x) - I(x)) + (\pi - \pis) \big) db(\Fx(x))  \\
&= \int_0^\infty u'(w - \Rs(x) - \pis) db(\Fx(x)) \cdot \int_0^\infty \big( (\Is)'(t) - I'(t) \big) L(t) dt \ge 0,
\end{align*}
which implies that $\Is$ is optimal.
\end{proof}

\begin{remark}
Theorem {\rm \ref{thm:optins}} is parallel to Theorem $3.1$ in Chi and Zhuang {\rm \cite{CZ2020}}.  The expectation $\E^\P$ and the probability function $t\mapsto \mathbb{Q}(X > t)$ in their theorem correspond to expectation with respect to the $b$-distorted probability measure and the $($possibly non-monotone$)$ distorted probability function $t\mapsto (1 + \tet)\Sx(t) + k(\Sx(t))$ in \eqref{eq:L}, respectively.   \qed
\end{remark}

In corollaries of Theorem \ref{thm:optins}, we consider three orders between the distortion functions to determine optimal solutions of \eqref{eq:max}.  One can loosely think of $(1 + \tet) \Sx(x) + k(\Sx(x))$ and $1 - b(\Fx(x))$ as defining survival functions of two random variables.  We say {\it loosely} because $(1 + \tet)p + k(p)$ is not necessarily monotone.  If we  use law-invariant orders to compare these random variables, this amounts to comparing the distortions $(1 + \tet)p + k(p)$ and $1 - b(1 - p)$.  For ease of notation, we first define distortions corresponding to these two functions.

\begin{definition}{\rm
Let $\tk, \tb \in \mD$ denote the distortions defined by, respectively,
\begin{equation}\label{eq:tk}
\tk(p) = (1 + \tet)p + k(p),
\end{equation}
and
\begin{equation}\label{eq:tb}
\tb(p) = 1 - b(1 - p),
\end{equation}
for all $p \in [0, 1]$.  Note that $\tb$ is concave because $b$ is convex; also, $\tk$ is concave because $k \in \mD$ is concave.  \qed
}
\end{definition}

We, next, define orders between members of $\mD$ that correspond to the usual definitions between random variables.  We will apply these orders to compare $\tk$ and $\tb$. For an introduction to stochastic orders, we recommend Shaked and Shanthikumar \cite{SS2007}.

\begin{definition}\label{def:ambig}
{\rm
Let $j_1, j_2 \in \mD$ be two distortions.
\begin{enumerate}
\item{}  If $j_1(p) \le j_2(p)$ for all $p \in [0, 1]$, then we say that $j_1$ is less than $j_2$ in {\it first-order stochastic dominance} (FSD) and write $j_1 \preceq_{fsd} j_2$.

\item{}  If
\begin{equation}\label{eq:HR_distortion}
\dfrac{j_1(p)}{j_2(p)}
\end{equation}
is non-decreasing with respect to $p \in (0, 1)$, then we say that $j_1$ is less than $j_2$ in {\it hazard rate} (HR) order and write $j_1 \preceq_{hr} j_2$.

\item{} If
\begin{equation}\label{eq:LR_distortion}
\dfrac{j'_1(p)}{j'_2(p)}
\end{equation}
is non-decreasing with respect to $p \in (0, 1)$, then we say that $j_1$ is less than $j_2$ in {\it likelihood ratio} (LR) order and write $j_1 \preceq_{lr} j_2$.  Here, we assume $j_1$ and $j_2$ are continuously differentiable. 
\qed
\end{enumerate}
}
\end{definition}

In Chapter 1, Shaked and Shanthikumar \cite{SS2007} prove
\[
j_1 \preceq_{lr} j_2  \implies j_1 \preceq_{hr} j_2,
\]
and
\[
 j_1 \preceq_{hr} j_2 \hbox{ and } j_1(1) \le j_2(1) \implies j_1 \preceq_{fsd} j_2.
\]
Note that $j_1 \preceq_{fsd} j_2$ if and only if the ratio in \eqref{eq:HR_distortion} is uniformly bounded above by $1$.

\subsection{Optimal insurance when $\tk \preceq_{fsd} \tb$ or $\tk \preceq_{hr} \tb$}\label{sec:k_prec_b}

In this section, we prove two corollaries of Theorem \ref{thm:optins} when $\tk \preceq_{fsd} \tb$ or $\tk \preceq_{hr} \tb$.  For the first corollary, we consider a slightly weaker version of the relation $\tk \preceq_{fsd} \tb$.  In this case, full insurance is optimal, as we show in the following corollary.

\begin{corollary}\label{cor:FSD}
Full insurance is an optimal solution of \eqref{eq:max} if and only if $\tk(p) \le \tb(p)$ for all $p \in [0, \Sx(0)]$.\footnote{If $\Sx(0) = 1$, then $\tk(p) \le \tb(p)$ for all $p \in [0, \Sx(0)]$ is equivalent to $\tk \preceq_{fsd} \tb$. }
\end{corollary}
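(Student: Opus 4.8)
The plan is to obtain Corollary~\ref{cor:FSD} by specializing Theorem~\ref{thm:optins} to the full-insurance policy $\Is(x)=x$, so that $(\Is)'\equiv 1$, $\Rs\equiv 0$, and $\pis=\pi(X)$. The key simplification is that with $\Rs\equiv 0$ the factor $u'(w-\Rs(x)-\pis)=u'(w-\pi(X))$ appearing in \eqref{eq:L} is independent of $x$, hence cancels between the numerator and the denominator of $L$. Evaluating the two integrals, $\int_0^\infty\id_{\{x>t\}}\,db(\Fx(x))=1-b(\Fx(t))=\tb(\Sx(t))$ and $\int_0^\infty db(\Fx(x))=b(1)=1$, and using $(1+\tet)\Sx(t)+k(\Sx(t))=\tk(\Sx(t))$, the function $L$ collapses to $L(t)=\tb(\Sx(t))-\tk(\Sx(t))$ for all $t\ge 0$.

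With this in hand, Theorem~\ref{thm:optins} says full insurance is optimal if and only if the constant function $(\Is)'\equiv 1$ can be made to satisfy \eqref{eq:Is_deriv}: the value $1$ is incompatible with $L(t)<0$ and compatible with $L(t)=0$ (choosing $v\equiv 1$ there) and with $L(t)>0$, so the condition is exactly that $\{t\ge 0:L(t)<0\}$ be Lebesgue-null, i.e.\ $\tk(\Sx(t))\le\tb(\Sx(t))$ for a.e.\ $t\ge 0$. It then remains to identify this with the stated inequality $\tk(p)\le\tb(p)$ for all $p\in[0,\Sx(0)]$. One direction is immediate, since $\Sx(t)\in[0,\Sx(0)]$ for every $t\ge 0$; for the converse I would use that $\Sx$ is non-increasing and right-continuous with $\Sx(0^+)=\Sx(0)$ and $\Sx(t)\downarrow 0$, together with the continuity of the concave distortions $\tk$ and $\tb$, to propagate the almost-everywhere inequality to every $p\in[0,\Sx(0)]$.

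I would also record a self-contained proof of the ``if'' direction that bypasses the characterization and reads more cleanly. For any $I\in\mic$ write $I(x)=\int_0^\infty I'(t)\,\id_{\{x>t\}}\,dt$ with $0\le I'\le 1$, apply Jensen's inequality to the concave $u$ against the probability measure $db(\Fx(\cdot))$ to get $\rho_b\big(u(w-R(X)-\pi_I)\big)\le u\big(w-\pi_I-\int_0^\infty R(x)\,db(\Fx(x))\big)$ with $R(x)=x-I(x)$, and then substitute $\pi_I=\int_0^\infty I'(t)\,\tk(\Sx(t))\,dt$ from \eqref{eq:pi_I} together with $\int_0^\infty R(x)\,db(\Fx(x))=\int_0^\infty(1-I'(t))\,\tb(\Sx(t))\,dt$ (Fubini, using the integral computed above). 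Since $\pi(X)=\int_0^\infty\tk(\Sx(t))\,dt$, the argument of $u$ reduces to $w-\pi(X)-\int_0^\infty(1-I'(t))\big(\tb(\Sx(t))-\tk(\Sx(t))\big)dt$, and the integral is nonnegative because $1-I'\ge 0$ and $\tk\le\tb$ on $[0,\Sx(0)]$; hence $\rho_b\big(u(w-R(X)-\pi_I)\big)\le u(w-\pi(X))=\rho_b\big(u(w-\Rs(X)-\pis)\big)$ for all $I\in\mic$, which is exactly optimality of full insurance.

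The step I expect to be the real obstacle is the last part of the ``only if'' direction: Theorem~\ref{thm:optins} constrains $L$ only Lebesgue-almost-everywhere, hence only records the behaviour of $\tk-\tb$ on the essential range of $\Sx$, whereas the statement asks for the inequality on all of $[0,\Sx(0)]$. When $X$ has a continuous distribution this is harmless (the range of $\Sx$ is all of $[0,1]=[0,\Sx(0)]$), and in general one must argue carefully, via right-continuity of $\Sx$ and continuity of the two distortions, that no $p\in[0,\Sx(0)]$ can evade the inequality; this is the part of the write-up that needs the most care.
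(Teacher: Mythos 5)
Your main route coincides with the paper's proof of Corollary~\ref{cor:FSD}: substitute $\Rs\equiv 0$, $\pis=\pi(X)$ into \eqref{eq:L}, note that $u'(w-\pis)$ is constant and cancels, obtain $L(t)=\tb(\Sx(t))-\tk(\Sx(t))$, and read off from Theorem~\ref{thm:optins} that $(\Is)'\equiv 1$ is admissible in \eqref{eq:Is_deriv} exactly when $L\ge 0$ (a.e., upgraded to everywhere by right-continuity of $\Sx$ and continuity of the distortions). The paper's proof is precisely this and nothing more. Your supplementary Jensen argument for the ``if'' direction is a genuinely different, self-contained route the paper does not take: it delivers sufficiency directly from concavity of $u$ and the representations \eqref{eq:pi_I} and $\int_0^\infty R(x)\,db(\Fx(x))=\int_0^\infty(1-I'(t))\,\tb(\Sx(t))\,dt$, without invoking the variational characterization; it is correct and arguably cleaner for that half.

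The obstacle you flag in the ``only if'' direction is real, and in fact worse than an a.e.-versus-everywhere issue: right-continuity only gives $\tk\le\tb$ on the \emph{closure of the range of} $\Sx$, and if $X$ has an atom in $(0,\infty)$ this closure has a gap inside $[0,\Sx(0)]$ into which concavity of $\tk$ and $\tb$ does \emph{not} propagate the inequality. Concretely, take $b(p)=p$ and $\tk(p)=\sqrt{p/2}$ (so $\tet=\sqrt{1/2}-1>-1$ and $k(p)=\sqrt{p/2}-\sqrt{1/2}\,p$ is a valid deviation distortion): then $\tk\le\tb$ on $\{0\}\cup[1/2,1]$ but $\tk>\tb$ on $(0,1/2)$. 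For an $X$ with $\P(X=1)=1/2$ and the remaining mass uniform on $(0,1)$, the survival function takes values only in $\{0\}\cup(1/2,1]$, so $L\ge 0$ everywhere and full insurance is optimal, yet the stated condition fails at $p=1/4\in[0,\Sx(0)]$. So the equivalence as literally stated requires the range of $\Sx$ to be dense in $[0,\Sx(0)]$ (e.g.\ a continuous distribution with connected support), or a restatement over the closure of the range of $\Sx$. The paper's own proof asserts the final equivalence in one line and silently elides exactly this point, so this is a defect of the corollary's statement shared with the published proof rather than a flaw introduced by your argument; your instinct that this step ``needs the most care'' is correct, but the careful argument you hope for cannot close it in full generality.
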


\begin{proof}
If we set $\Rs \equiv 0$ and $\pis = \pi(X)$ in the expression for $L$ in \eqref{eq:L}, then we get
\[
L_{\{\Rs \equiv 0 \}}(x) = \big(1 - b(\Fx(x)) \big) - \big((1 + \tet)\Sx(x) + k(\Sx(x))\big) = \tb(\Sx(x)) - \tk(\Sx(x)).
\]
It follows from Theorem \ref{thm:optins} that full insurance is optimal if and only if $L_{\{\Rs \equiv 0 \}}(x) \ge 0$ for all $x \ge 0$, which is equivalent to $\tk(p) \le \tb(p)$ for all $p \in [0, \Sx(0)]$.
\end{proof}

\begin{remark}
Corollary {\rm \ref{cor:FSD}} is consistent with Theorem $4.1(ii)$ of Ghossoub and He {\rm \cite{GH2020}}, who proved that if the underwriter/insurer has a linear utility function and if $\tk$ and $\tb$ are concave, then optimal insurance is full coverage $($a so-called {\rm firm-commitment contract)} if and only if $\tk \le \tb$, modulo a technical detail concerning the reservation utility of the insurer.  \qed
\end{remark}


For the second corollary, we suppose $\tk \preceq_{hr} \tb$, which implies
\begin{equation}\label{eq:hazard}
\dfrac{\tk'(\Sx(x))}{\tk(\Sx(x))} \ge \dfrac{\tb'(\Sx(x))}{\tb(\Sx(x))},
\end{equation}
an inequality between generalized hazard rate functions (modulo $X$'s probability density function, if it has one), hence, the name: hazard rate order.  
Note that inequality \eqref{eq:hazard} requires, among other things, that $\tk$ be strictly increasing on $[0, \Sx(0)]$ because $\tb$ is automatically strictly increasing in that interval.  In this case, we show that a deductible insurance is optimal.

\begin{corollary}\label{cor:HR}
If $\tk \preceq_{hr} \tb$, then there exists $d \ge 0$ such that $\Is(x) = (x - d)_+$ is an optimal solution of \eqref{eq:max}.
\end{corollary}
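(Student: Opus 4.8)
\emph{The plan} is to guess that some deductible contract $\Id(x)=(x-d)_+$, $d\ge 0$, solves \eqref{eq:max} and to verify the characterization \eqref{eq:Is_deriv} of Theorem \ref{thm:optins}. Write $\Rd(x)=x\wedge d$ and $\pid=\pi((X-d)_+)$, let $L_d$ denote the function \eqref{eq:L} evaluated at $\Rs=\Rd$, $\pis=\pid$, and put $D(d)=\int_0^\infty u'(w-\Rd(x)-\pid)\,db(\Fx(x))>0$. Since $(\Id)'(t)=\id_{\{t>d\}}$ for a.e.\ $t$, Theorem \ref{thm:optins} says that $\Id$ is optimal precisely when $L_d(t)\le 0$ for a.e.\ $t<d$ and $L_d(t)\ge 0$ for a.e.\ $t>d$; so it suffices to understand the sign of $t\mapsto L_d(t)$ for a fixed $d$ and then choose $d$ so that this sign flips at $t=d$.

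\emph{Sign of $L_d$ for fixed $d$.} For $t\ge d$ the retention $\Rd$ equals $d$ on $\{x>t\}$, so the numerator of the first term of $L_d$ is $u'(w-d-\pid)(1-b(\Fx(t)))=u'(w-d-\pid)\tb(\Sx(t))$, whence
\[
L_d(t)=\tb(\Sx(t))\left(\frac{u'(w-d-\pid)}{D(d)}-\frac{\tk(\Sx(t))}{\tb(\Sx(t))}\right),\qquad t\ge d,\ \Sx(t)>0
\]
(with $L_d\equiv 0$ on $[d,\infty)$ once $\Sx(t)=0$). For $t<d$, the retention equals $x\le d$ on $\{t<x\le d\}$, so concavity of $u$ gives $u'(w-x-\pid)\le u'(w-d-\pid)$ there, and splitting $\{x>t\}=\{t<x\le d\}\cup\{x>d\}$ yields
\[
\int_0^\infty u'(w-\Rd(x)-\pid)\,\id_{\{x>t\}}\,db(\Fx(x))\ \le\ u'(w-d-\pid)\,\tb(\Sx(t)),\qquad t<d.
\]
Now invoke the hazard-rate order: by Definition \ref{def:ambig}, $\tk\preceq_{hr}\tb$ means $p\mapsto\tk(p)/\tb(p)$ is non-decreasing, hence $t\mapsto\tk(\Sx(t))/\tb(\Sx(t))$ is non-increasing. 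From the first display, $L_d(d)\ge 0$ (equivalently $u'(w-d-\pid)/D(d)\ge\tk(\Sx(d))/\tb(\Sx(d))$) propagates to $L_d(t)\ge 0$ for all $t\ge d$; from the second display, if $L_d(d)\le 0$ then for $t<d$
\[
L_d(t)\ \le\ \frac{u'(w-d-\pid)}{D(d)}\,\tb(\Sx(t))-\tk(\Sx(t))\ \le\ \frac{\tk(\Sx(d))}{\tb(\Sx(d))}\,\tb(\Sx(t))-\tk(\Sx(t))\ \le\ 0,
\]
the last inequality because $\tb(\Sx(t))\ge 0$ and $\tk(\Sx(d))/\tb(\Sx(d))\le\tk(\Sx(t))/\tb(\Sx(t))$ (valid since $\Sx(t)\ge\Sx(d)$). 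Hence: \emph{if some $d\ge 0$ satisfies $L_d(d)=0$, then $\Id(x)=(x-d)_+$ obeys \eqref{eq:Is_deriv} and solves \eqref{eq:max}.}

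\emph{Existence of $d$.} If $\tk(\Sx(0))\le\tb(\Sx(0))$, monotonicity of $\tk/\tb$ forces $\tk\le\tb$ on $[0,\Sx(0)]$ and Corollary \ref{cor:FSD} already gives that full insurance ($d=0$) is optimal. Otherwise $L_0(0)=\tb(\Sx(0))-\tk(\Sx(0))<0$. When $\Fx$ is atomless, $d\mapsto\pid=\int_d^\infty\tk(\Sx(t))\,dt$, $d\mapsto D(d)$, and therefore $d\mapsto L_d(d)$ are continuous; moreover $L_d(d)=0$ for every $d\ge\mathrm{ess\,sup}\,X$, so $A:=\{d\ge 0: L_d(d)\ge 0\}$ is nonempty and closed. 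Then $\ds:=\inf A>0$, $L_{\ds}(\ds)\ge 0$ by closedness, and $L_{\ds}(\ds)\le 0$ as a limit of points $d<\ds$ (for which $L_d(d)<0$); thus $L_{\ds}(\ds)=0$, and the deductible with level $\ds$ is optimal. The borderline case $\ds=\mathrm{ess\,sup}\,X$ is no insurance, which one checks directly by letting $t\uparrow\mathrm{ess\,sup}\,X$ in the estimates above.

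\emph{Main obstacle.} The crux is the second step: converting the hazard-rate comparison $\tk\preceq_{hr}\tb$ into the single-crossing property of $L_d$, where the bound for $t<d$ is delicate because it needs \emph{both} the concavity of $u$ and the hazard-rate order. The remaining nuisance is handling the (at most countably many) atoms of $X$ in the existence step, where $d\mapsto L_d(d)$ is only right-continuous with left limits; this is dealt with via one-sided limits, or by first treating continuous $\Fx$ and passing to a limit, as in Chi and Zhuang \cite{CZ2020}.
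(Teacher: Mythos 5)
Your proof is correct, and while it shares the paper's overall skeleton (guess a deductible, verify the characterization of Theorem \ref{thm:optins} via the single-crossing of $L$, locate the deductible level as an infimum), the two key technical steps are executed quite differently. For the sign of $L_d$ on $t<d$, the paper proves that the normalized function $J(d,t)$ in \eqref{eq:J} is non-decreasing in $t$, via integration by parts and a derivative computation; you instead bound the numerator directly by $u'(w-d-\pid)\,\tb(\Sx(t))$ using only $u'(w-x-\pid)\le u'(w-d-\pid)$ for $x\le d$, and then close the estimate with the hazard-rate order. This is shorter and avoids the differentiability bookkeeping. For the existence of the deductible level, the paper proves that $d\mapsto J(d,d)$ is non-decreasing (when $\tk(\Sx(d))\le 1$, with a separate argument when $\tk(\Sx(d))>1$) through the lengthy $\kap'(d)$ computation; you replace this entirely by an intermediate-value argument on $d\mapsto L_d(d)$. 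What the paper's monotonicity buys is that no continuity in $d$ is ever needed; what your route buys is the elimination of the $\kap$ computation and of the case split on $\tk(\Sx(d))\lessgtr 1$, at the price of the continuity issues you already flag: at an atom of $X$ the map $d\mapsto L_d(d)$ jumps \emph{upward} (since $\Sx(d)$ drops and $\tk/\tb$ is non-decreasing), and one checks that the deductible at $\ds=\inf A$ still verifies \eqref{eq:Is_deriv} because the left limit of $L_d(d)$ controls $\sup_{t<\ds}$ of the relevant ratio while $L_{\ds}(\ds)\ge0$ controls $t\ge\ds$ — so your one-sided-limit patch does work. Two small corrections to the write-up: in the case $A=\emptyset$ (possible only when ${\rm ess\,sup}\,X=\infty$), the no-insurance conclusion follows by fixing $t$ and letting $d\to\infty$ in your bound $L_d(t)\le \tb(\Sx(t))\bigl(\tk(\Sx(d))/\tb(\Sx(d))-\tk(\Sx(t))/\tb(\Sx(t))\bigr)\le 0$ together with dominated convergence of $\pid$ and $D(d)$ — not by letting $t\uparrow{\rm ess\,sup}\,X$; and note that $\inf A\in A$ follows from right-continuity of $d\mapsto L_d(d)$ rather than closedness of $A$ per se.
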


\begin{proof}
Consider deductible insurance with indemnity $\Id(x) = (x - d)_+$ and retention $\Rd(x) = \min(x, d)$ for some $d \ge 0$.  Let $\pid$ denote the corresponding premium for this insurance, which equals
\[
\pid = \int_d^\infty \tk(\Sx(x))dx.
\]
Define the function $J$ on $\R^+ \times \R^+$ by
\begin{equation}\label{eq:J}
J(d, t) = \dfrac{\int_{t^+}^\infty u'(w - \Rd(x) - \pid) \frac{db(\Fx(x))}{1 - b(\Fx(t))}}{\int_0^\infty u'(w - \Rd(x) - \pid) db(\Fx(x))} - \dfrac{\tk(\Sx(t))}{1 - b(\Fx(t))},
\end{equation}
in which
\begin{align}\label{eq:Jpart}
&\int_{t^+}^\infty u'(w - \Rd(x) - \pid) \frac{db(\Fx(x))}{1 - b(\Fx(t))} \notag \\
&=
\begin{cases}
\dfrac{\int_{t^+}^d u'(w - x - \pid) db(\Fx(x)) + u'(w - d - \pid)(1 - b(\Fx(d)))}{1 - b(\Fx(t))} , &\quad 0 \le t < d, \\
u'(w - d - \pid), &\quad t \ge d.
\end{cases}
\end{align}
We assert that $J(d, t)$ is a non-decreasing function of $t$; because the ratio $\tk(\Sx(t))/(1 - b(\Fx(t)))$ is non-increasing with respect to $t \ge 0$, we only need to show that the expression in \eqref{eq:Jpart} is non-decreasing with respect to $t$ when $0 \le t < d$. To show this monotonicity, first, use integration by parts to rewrite the numerator in the first line of \eqref{eq:Jpart}:
\begin{align*}
&\int_{t^+}^d u'(w - x - \pid) db(\Fx(x)) + u'(w - d - \pid)(1 - b(\Fx(d))) \\
&= \int_{t^+}^d u''(w - x - \pid) b(\Fx(x)) dx + u'(w - d - \pid) - u'(w - t - \pid)b(\Fx(t)).
\end{align*}
Then, the expression in \eqref{eq:Jpart} is non-decreasing with respect to $t$ when $0 \le t < d$ if and only if the following is non-negative:
\begin{align*}
&- (1 - b(\Fx(t))) \, u'(w - t - \pid)b'(\Fx(t)) \\
&\quad + \left\{ \int_{t^+}^d u''(w - x - \pid) b(\Fx(x)) dx + u'(w - d - \pid) - u'(w - t - \pid)b(\Fx(t)) \right\} b'(\Fx(t)) \\
&\propto \int_{t^+}^d u''(w - x - \pid) b(\Fx(x)) dx + u'(w - d - \pid) - u'(w - t - \pid) \\
&= \int_{t^+}^d u''(w - x - \pid) (b(\Fx(x)) - 1 ) dx,
\end{align*}
which is non-negative for $t < d$ because $u$ is concave and $b(p) \le 1$.  (The symbol $\propto$ means {\it non-negatively} proportional to.)  Thus, we have shown that $J(d, t)$ is non-decreasing with respect to $t$.

Now, consider $J$ evaluated at $(d, d)$ for any $d \ge 0$:
\[
J(d, d) = \dfrac{u'(w - d - \pid)}{\int_0^\infty u'(w - \Rd(x) - \pid) db(\Fx(x))} - \dfrac{\tk(\Sx(d))}{1 - b(\Fx(d))}.
\]
If $\tk(\Sx(d)) \le 1$, then we assert that $J(d, d)$ is non-decreasing with respect to $d$.  As before, because the ratio in $\tk(\Sx(d))/(1 - b(\Fx(d)))$ is non-increasing, we only need to show that
\[
\dfrac{u'(w - d - \pid)}{\int_0^\infty u'(w - \Rd(x) - \pid) db(\Fx(x))}
\]
is non-decreasing with respect to $d$.  Equivalently, we only need to show that $\kap(d)$ in non-increasing with respect to $d$ when $\tk(\Sx(d)) \le 1$, in which $\kap$ is defined by
\[
\kap(d) = \dfrac{\int_0^\infty u'(w - \Rd(x) - \pid) db(\Fx(x))}{u'(w - d - \pid)},
\]
which equals (via integration by parts)
\[
\kap(d) = 1 + \dfrac{\int_0^d u''(w - x - \pid) b(\Fx(x)) dx - u'(w - \pid)b(\Fx(0))}{u'(w - d - \pid)}.
\]
Now,
\begin{align*}
\kap'(d) &\propto \left\{ \int_0^d u'''(w - x - \pid) b(\Fx(x)) dx - u''(w - \pid)b(\Fx(0)) \right\} u'(w - d - \pid)\tk(\Sx(d)) \\
&\quad + u'(w - d - \pid) u''(w - d - \pid) b(\Fx(d)) \\
&\quad -  \left\{ \int_0^d u''(w - x - \pid) b(\Fx(x)) dx - u'(w - \pid)b(\Fx(0)) \right\} \\
&\quad \qquad \times u''(w - d - \pid) (-1 + \tk(\Sx(d))) \\
&=  \left\{ \int_0^d u''(w - x - \pid) db(\Fx(x)) - u''(w - d - \pid)b(\Fx(d)) \right\} \\
&\qquad \times u'(w - d - \pid)\tk(\Sx(d))) \\
&\quad + u'(w - d - \pid) u''(w - d - \pid) b(\Fx(d)) \\
&\quad -  \left\{ \int_0^d u'(w - x - \pid) db(\Fx(x)) - u'(w - d - \pid)b(\Fx(d)) \right\} \\
&\quad \qquad \times u''(w - d - \pid) (-1 + \tk(\Sx(d))) \\
&= u'(w - d - \pid) \tk(\Sx(d)) \int_0^d u''(w - x - \pid) db(\Fx(x)) \\
&\quad - u''(w - d - \pid) (-1 + \tk(\Sx(d))) \int_0^d u'(w - x - \pid) db(\Fx(x)) \\
&\le - u''(w - d - \pid) (-1 + \tk(\Sx(d))) \int_0^d u'(w - x - \pid) db(\Fx(x)) \\
&\le 0,
\end{align*}
in which the last inequality follows from $\tk(\Sx(d)) \le 1$.  Thus, we have shown that $J(d, d)$ is non-decreasing with respect to $d$ when $\tk(\Sx(d)) \le 1$.  Moreover, when $\tk(\Sx(d)) > 1$, we have
\[
J(d, d) \le \dfrac{1}{1 - b(\Fx(d))} - \dfrac{\tk(\Sx(d))}{1 - b(\Fx(d))} < 0.
\]

Next, define $\ds$ by
\[
\ds = \inf \big\{ d \ge 0: J(d, d) \ge 0 \big\},
\]
with $\ds = \infty$ if $J(d, d) < 0$ for all $d \ge 0$.  If $\ds < \infty$, then $J(\ds, t) \le 0$ for $t < \ds$ and $J(\ds, t) \ge 0$ for $t > \ds$, which implies that
\[
L_{\{\Is = I_{\ds}\}}(t) = J(\ds, t) (1 - b(\Fx(t)))
\]
satisfies the conditions of Theorem \ref{thm:optins}.  Thus, $I_{\ds}$ is an optimal indemnity.  If $\ds = \infty$, then
\[
L_{\{\Is \equiv 0\}}(t) = \lim_{d \to \infty} J(d, t) (1 - b(\Fx(t)) \le  \lim_{d \to \infty} J(d, d) (1 - b(\Fx(t)) \le 0,
\]
which implies that no insurance ($\ds = \infty$) is optimal.
\end{proof}

\begin{remark}
If $\tk(\Sx(0)) \le \tb(\Sx(0))$ in Corollary \ref{cor:HR}, then $\ds = 0$, which means full insurance is optimal.  Note that $\tk(\Sx(0)) \le \tb(\Sx(0))$ and $\tk \preceq_{hr} \tb$ imply $\tk(p) \le \tb(p)$ for all $p \in [0, \Sx(0)]$, and Corollary \ref{cor:FSD} implies full insurance is optimal.  Thus, Corollaries \ref{cor:FSD} and \ref{cor:HR} are consistent.  \qed
\end{remark}

\begin{remark}
Corollary {\rm \ref{cor:HR}} extends the theorem of Arrow {\cite{A1963}}, which we stated in the Introduction, to the case for which premium is computed according to a distorted-deviation premium principle.   By contrast, Arrow assumed that the premium was an increasing function of $\E I(X)$.  Furthermore, $b(p) = p$ in Arrow's work.  \qed 
\end{remark}

\subsection{Optimal insurance when $\tb \preceq_{hr} \tk$ or $\tb \preceq_{lr} \tk$}\label{sec:b_prec_k}

In Section \ref{sec:k_prec_b}, we assume that $\tk$ precedes $\tb$, and in this section, we consider the case for which $\tb$ precedes $\tk$.  For the next corollary, we assume that $\tb \preceq_{hr} \tk$ under the special case of $u$ representing risk-neutral preferences, that is, $u(y) = y$ for all $y \in \R$.  In this case, insurance with a maximum limit is an optimal solution of \eqref{eq:max}, which we prove in the following.

\begin{corollary}\label{cor:bhrk}
If $\tb \preceq_{hr} \tk$, and if $u$ is the identity function on $\R$, then there exists $m \in [0, \infty]$ such that $\Is(x) = \min(x, m)$ is an optimal solution of \eqref{eq:max}.
\end{corollary}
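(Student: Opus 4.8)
The plan is to specialize Theorem~\ref{thm:optins} to the case $u(y)=y$, where the function $L$ in \eqref{eq:L} collapses to an explicit expression, and then to use the hazard-rate hypothesis to show that $L$ changes sign at most once as $t$ increases — from nonnegative to nonpositive — which is exactly the sign pattern that makes $\Is(x)=\min(x,m)$ satisfy \eqref{eq:Is_deriv}. First I would reduce $L$: with $u$ the identity, $u'\equiv1$, so every factor $u'(\cdot)$ in \eqref{eq:L} equals $1$; because $X\ge0$ we have $\int_0^\infty db(\Fx(x))=b(1)-b(0)=1$, and for all but countably many $t$,
\[
\int_0^\infty\id_{\{x>t\}}\,db(\Fx(x))=b(1)-b(\Fx(t))=1-b(1-\Sx(t))=\tb(\Sx(t)),
\]
so that $L(t)=\tb(\Sx(t))-\tk(\Sx(t))$ for a.e.\ $t\ge0$. (Any other value of the denominator would only rescale $\tb$ by a positive constant, changing neither the signs below nor the hazard-rate comparison, so the precise convention is immaterial.)

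The key step is the single-crossing claim: \emph{if $L(t_0)<0$ for some $t_0\ge0$, then $L(t)\le0$ for all $t\ge t_0$}. Write $p_0=\Sx(t_0)$. Since $\tb$ is concave with $\tb(0)=0$ and $\tb(1)=1$, we have $\tb(p)\ge p>0$ on $(0,1]$, and because $\tb\preceq_{hr}\tk$ presupposes that $\tb/\tk$ is a well-defined nondecreasing function on $(0,1)$, necessarily $\tk>0$ there; thus $L(t_0)<0$ reads $\tb(p_0)/\tk(p_0)<1$. For $t\ge t_0$, monotonicity of $\Sx$ gives $p:=\Sx(t)\le p_0$; if $p=0$ then $L(t)=0$, while if $p\in(0,p_0]$ then $\tb\preceq_{hr}\tk$ yields $\tb(p)/\tk(p)\le\tb(p_0)/\tk(p_0)<1$, whence $L(t)=\tk(p)\bigl(\tb(p)/\tk(p)-1\bigr)<0$. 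This proves the claim.

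Now define $m=\inf\{t\ge0:L(t)<0\}\in[0,\infty]$, with $m=\infty$ when $L\ge0$ everywhere. By definition $L\ge0$ on $[0,m)$, and the claim gives $L\le0$ on $(m,\infty)$. The candidate $\Is(x)=\min(x,m)$ lies in $\mic$, and $(\Is)'(t)=\id_{\{t<m\}}$ for a.e.\ $t$: this equals $1$ on all of $[0,m)$ — which contains $\{t:L(t)>0\}$ up to the Lebesgue-null point $t=m$ — and equals $0$ on all of $(m,\infty)$ — which contains $\{t:L(t)<0\}$ up to that same point — while its values on $\{L=0\}$ play the role of the free function $v$ in \eqref{eq:Is_deriv}. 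Hence $\Is$ satisfies the characterization of Theorem~\ref{thm:optins} and is therefore optimal. The degenerate cases are consistent with the statement: $m=0$ gives no insurance ($\Is\equiv0$) and $m=\infty$ gives full insurance ($\Is(x)=x$), both of the form $\min(x,m)$, so indeed $m\in[0,\infty]$.

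I do not expect a serious obstacle here: the reduction of $L$ is bookkeeping, and matching $\min(x,m)$ to \eqref{eq:Is_deriv} modulo null sets (including the boundary point $t=m$ and the cases $m\in\{0,\infty\}$, which is precisely why the conclusion permits $m\in[0,\infty]$ rather than $m\in(0,\infty)$) is routine. The one point needing a moment's thought is the single-crossing claim, which becomes transparent once one performs the monotone change of variable $p=\Sx(t)$ and uses the hazard-rate inequality in its ratio form — this is the mirror image of what drives the deductible conclusion in Corollary~\ref{cor:HR}, and the special assumption that $u$ is linear is what lets us avoid the auxiliary monotonicity arguments needed there.
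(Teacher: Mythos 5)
Your proposal is correct and follows essentially the same route as the paper: with $u$ linear, $L$ reduces to $\tb(\Sx(t))-\tk(\Sx(t))$, the hazard-rate hypothesis makes the ratio $\tk(\Sx(t))/\tb(\Sx(t))$ monotone in $t$ and hence forces a single sign change of $L$, and the threshold $m$ then makes $\Is(x)=\min(x,m)$ satisfy the characterization \eqref{eq:Is_deriv} of Theorem \ref{thm:optins}. Your extra care with the edge cases ($p=0$, positivity of $\tk$, the null point $t=m$) is consistent with, and slightly more explicit than, the paper's argument.
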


\begin{proof}
Because $u(y) = y$ for all $y \in \R$, then
\[
\dfrac{\int_0^\infty u'(w - \Rs(x) - \pis) \, \id_{\{x > t\}} \, db(\Fx(x))}{\int_0^\infty u'(w - \Rs(x) - \pis) db(\Fx(x))} = \int_t^\infty  db(\Fx(x)) = 1 - b(\Fx(t)),
\]
independent of $\Is \in \mic$. Thus, we have
\[
\dfrac{L(t)}{1 - b(\Fx(t))} = 1 - \dfrac{\tk(\Sx(t))}{1 - b(\Fx(t))} = 1 - \dfrac{\tk(\Sx(t))}{\tb(\Sx(t))},
\]
and the relation $\tb \preceq_{hr} \tk$ implies this expression is non-increasing with respect to $t \ge 0$.  Define $m$ by
\[
m = \inf \big\{ t \ge 0: L(t) \le 0 \big\},
\]
in which $m = \infty$ if $L(t) > 0$ for all $t \ge 0$.  Note that $L(t) \ge 0$ for all $t < m$ and $L(t) \le 0$ for all $t > m$, which implies $\Is(x) = \min(x, m)$ is an optimal solution of \eqref{eq:max}.
\end{proof}

As a specific application of Corollary \ref{cor:bhrk}, we present the following example.  

\begin{example}
{\rm Suppose $\tk(p) = (1 + \tet)p + \alp (p \wedge (1-p))$ and $\tb(p) = p$ for $p \in [0, 1]$. Recall in Example \ref{ex:1}, the distortion $h(p) = p \wedge (1-p)$ yields the mean-median deviation measure. We see in this case $\tb \preceq_{hr}\tk$, because
\[
\dfrac{\tk(p)}{\tb(p)} = (1+\tet) + \alp (1 \wedge ({1}/{p} - 1))
\]
is non-increasing with respect to $p \in (0, 1)$. According to Corollary \ref{cor:bhrk}, if $u$ is the identity function on $\R$, then $I^*(x) = \min(x,m)$ is an optimal solution. Furthermore, by the proof of Corollary \ref{cor:bhrk}, one can derive $m=0$. Thus, no insurance is optimal, that is, $I^* \equiv 0$.
\qed
}
\end{example}

For the remainder of the paper, we impose the following conditions on our model.

\begin{assumption}\label{assum}
\begin{itemize}
\item[$(a)$] $u$ is strictly concave.

\item[$(b)$] $X$'s distribution has a point mass $1 - q \in [0, 1)$ at $0$ with a continuous density $\fx(x)$ for $x > 0$, that is, for $x \ge 0$,
\[
\Fx(x) = \P(X \le x) = (1 - q) + \int_0^x \fx(t) dt,
\]
in which $\int_0^\infty \fx(t) dt = q$.

\item[$(c)$] There exists $M > 0$ such that $\fx(x) > 0$ for all $x \in (0, M)$ and $\fx(x) = 0$ for all $x > M;$ $M$ might equal infinity.  \qed
\end{itemize}
\end{assumption}

Under Assumption \ref{assum}, Proposition \ref{prop:unique} implies that that $\Is$ is unique on $[0, M)$.  Use $X$'s model in Assumption \ref{assum}(b) to rewrite $L(t)$ and to compute $L'(t)$: for $t \ge 0$,
\begin{equation}\label{eq:L2}
L(t) = \dfrac{\int_t^M u'(w - \Rs(x) - \pis) \tb'(\Sx(x)) \fx(x) dx}{u'(w - \pis) b(1 - q) + \int_0^M u'(w - \Rs(x) - \pis) \tb'(\Sx(x)) \fx(x) dx} - \tk(\Sx(t)),
\end{equation}
and
\begin{equation}\label{eq:Lprime}
L'(t) = \fx(t) \tb'(\Sx(t)) \phi(t).
\end{equation}
in which $\phi$ is given by
\begin{equation}\label{eq:phi}
\phi(t) = \dfrac{\tk'(\Sx(t))}{\tb'(\Sx(t))} - \dfrac{u'(w - \Rs(t) - \pis)}{u'(w - \pis) b(1 - q) + \int_0^M u'(w - \Rs(x) - \pis) \tb'(\Sx(x)) \fx(x) dx}.
\end{equation}

\medskip

For the fourth corollary, we assume that $\tb \preceq_{lr} \tk$, with $\tet \ge 0$.  In this case, optimal insurance has a deductible $d$, which could equal $0$ or $\infty$, with partial insurance above that.

\begin{corollary}\label{cor:LR}
Suppose Assumption {\rm \ref{assum}} holds.  If $\tet \ge 0$ and $\tb \preceq_{lr} \tk$, then $L(t) \le 0$ for all $t \ge 0$, in which $L$ is given in \eqref{eq:L2}.
\end{corollary}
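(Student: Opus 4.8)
The plan is to argue by contradiction. Suppose $\Is\in\mic$ is optimal but $A:=\{t\ge 0:L(t)>0\}\neq\emptyset$; note $A$ is relatively open in $[0,M)$ since $L$ in \eqref{eq:L2} is continuous, that $L(t)=0$ for $t\ge M$, and that $\lim_{t\uparrow M}L(t)=0$ (the integrals over $[t,M]$ vanish and $\tk(\Sx(t))\to\tk(0)=0$). I would first record three consequences of the hypotheses. Since $\tb(1)=1\le 1+\tet=\tk(1)$, the relation $\tb\preceq_{lr}\tk$ gives $\tb\preceq_{hr}\tk$ and hence $\tb\preceq_{fsd}\tk$, i.e.\ $\tb(p)\le\tk(p)$ on $[0,1]$. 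Next, $\tb\preceq_{lr}\tk$ (together with $\tb'\ge 0$) forces $\tk$ to be non-decreasing on $[0,1]$, so $\tk'\ge 0$. Finally, as $\tk'/\tb'$ is non-increasing and $\Sx$ is non-increasing, $t\mapsto\tk'(\Sx(t))/\tb'(\Sx(t))$ is non-decreasing.

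Next I would show no connected component of $A$ avoids $0$. Let $(a,b)$ be such a component with $a>0$; by continuity $L(a)=L(b)=0$ (read $\lim_{t\uparrow M}L=0$ if $b=M$), while $L>0$ on $(a,b)$. By Theorem~\ref{thm:optins} and \eqref{eq:Is_deriv}, $(\Is)'=1$ a.e.\ on $(a,b)$, so $\Is(x)=\Is(a)+(x-a)$ and $\Rs(x)\equiv\Rs(a)$ there; then $u'(w-\Rs(\cdot)-\pis)$ is constant on $(a,b)$, and with the third consequence above, $\phi$ in \eqref{eq:phi} is non-decreasing on $(a,b)$. Since $L'=\fx\,\tb'(\Sx)\,\phi$ by \eqref{eq:Lprime} with $\fx\,\tb'(\Sx)>0$ on $(0,M)$, the function $L$ is non-increasing then non-decreasing on $[a,b]$, which is incompatible with $L$ being positive on $(a,b)$ and vanishing at both endpoints. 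Hence $A=[0,b)$ for a single $b\in(0,M]$, with $L(0)>0$ and $(\Is)'=1$ a.e.\ on $[0,b)$, i.e.\ $\Is(x)=x$, $\Rs(x)=0$ on $[0,b)$. If $b=M$, then $\Rs\equiv 0$ on $[0,M)$ and \eqref{eq:L2} collapses to $L(t)=\tb(\Sx(t))-\tk(\Sx(t))\le 0$, a contradiction; so $b<M$ and $L(b)=0$.

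The crux is then to compute $L(0)$ for this remaining configuration. Writing $P:=u'(w-\pis)>0$ (the constant value of $u'(w-\Rs(\cdot)-\pis)$ on $[0,b)$), $p_b:=\Sx(b)\in(0,q)$, $\sigma:=\tb(p_b)\in(0,1)$, and $B:=\int_b^M u'(w-\Rs(x)-\pis)\,\tb'(\Sx(x))\,\fx(x)\,dx\ge 0$, and using $\int_t^b\tb'(\Sx(x))\fx(x)\,dx=\tb(\Sx(t))-\sigma$ and $b(1-q)+\tb(q)=1$, formula \eqref{eq:L2} becomes, for $t\in[0,b)$,
\[
L(t)=\frac{P\,[\tb(\Sx(t))-\sigma]+B}{P\,(1-\sigma)+B}-\tk(\Sx(t)).
\]
Letting $t\uparrow b$ and imposing $L(b)=0$ forces $\tk(p_b)<1$ and $B(1-\tk(p_b))=P(1-\sigma)\tk(p_b)$, so $P(1-\sigma)+B=P(1-\sigma)/(1-\tk(p_b))$. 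Substituting back into $L(0)$ makes the factor $P$ (and $B$) cancel, and after simplification
\[
L(0)=\frac{\mathcal N(p_b)}{1-\tb(p_b)},\qquad \mathcal N(x):=\bigl(\tb(q)-\tk(q)\bigr)+\tb(x)\bigl(\tk(q)-1\bigr)+\tk(x)\bigl(1-\tb(q)\bigr).
\]
Since $1-\tb(p_b)>0$, it suffices to show $\mathcal N(p_b)\le 0$.

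Finally I would check $\mathcal N(q)=0$ and $\mathcal N'(x)=\tb'(x)(\tk(q)-1)+\tk'(x)(1-\tb(q))$. If $\tk(q)\ge 1$ this is $\ge 0$ because $\tk'\ge 0$ and $\tb(q)\le 1$. If $\tk(q)<1$, then $q<1$ (as $\tk(1)=1+\tet\ge 1$), so $1-\tb(q)=\int_q^1\tb'>0$ and $1-\tk(q)=\int_q^1\tk'-\tet\le\int_q^1\tk'$; since $\tk'/\tb'$ is non-increasing, $\int_q^1\tk'\big/\int_q^1\tb'\le\tk'(q)/\tb'(q)$, whence $\tfrac{1-\tk(q)}{1-\tb(q)}\le\tfrac{\tk'(q)}{\tb'(q)}\le\tfrac{\tk'(x)}{\tb'(x)}$ for $x\le q$, i.e.\ $\mathcal N'(x)\ge 0$. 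So $\mathcal N$ is non-decreasing on $[0,q]$ and $\mathcal N(p_b)\le\mathcal N(q)=0$, giving $L(0)\le 0$ and contradicting $L(0)>0$; hence $A=\emptyset$, i.e.\ $L\le 0$ on $[0,\infty)$. The main obstacle will be the third step: realizing that a naive ``$\phi\le 0$ everywhere'' attempt fails (since $\phi$ is a difference of two non-decreasing functions), that the only dangerous component of $A$ is the one containing $0$, where $\Is$ is full insurance, and that imposing $L(b)=0$ eliminates the utility-dependent constants $P,B$ and reduces everything to the purely distortion-theoretic quantity $\mathcal N$, whose sign the likelihood-ratio comparison controls (with $\tet\ge 0$ entering through $1-\tk(q)\le\int_q^1\tk'$). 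Steps one and two are the routine part, parallel to Corollaries~\ref{cor:HR} and~\ref{cor:bhrk} and to Chi and Zhuang~\cite{CZ2020}.
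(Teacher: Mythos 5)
Your proof is correct, and while it shares the paper's overall strategy---argue by contradiction from the sign structure of $L$, using that $(\Is)'=1$ (hence $\Rs$ constant) on any interval where $L>0$, so that $\phi$ in \eqref{eq:phi} inherits the monotonicity of $\tk'/\tb'$---the two decisive steps are executed quite differently. For components of $\{L>0\}$ away from the origin, the paper extracts the one-sided derivative condition $L'(x_1-)\ge 0$ at the left endpoint and propagates $L'\ge 0$ rightward, pushing positivity all the way to $M$ by an iterative argument; you instead note that $\phi$ non-decreasing forces $L$ to be decreasing-then-increasing between two zeros, which is incompatible with a positive hump---a cleaner, purely local argument that avoids the iteration. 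For the remaining component $[0,b)$, the paper's contradiction comes from the chain $\tk/\tb\ge\tk'/\tb'$ applied on the terminal interval together with the endpoint derivative inequality; you instead use $L(b)=0$ to eliminate the utility-dependent constants $P$ and $B$ and obtain $L(0)=\mathcal N(p_b)/(1-\tb(p_b))$ in closed form, reducing the whole problem to the distribution- and utility-free inequality $\mathcal N\le 0$, verified from $\mathcal N(q)=0$ and $\mathcal N'\ge 0$. The hypotheses enter in the same places ($\tet\ge 0$ through $1-\tk(q)\le\int_q^1\tk'$, the LR order through monotone ratios of increments), so neither proof is more general, but yours isolates the distortion-theoretic content more transparently and makes the role of $L(b)=0$ explicit. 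One small remark: your Case-1 justification of $\mathcal N'\ge 0$ when $\tk(q)\ge 1$ invokes $\tk'\ge 0$, which the paper deliberately avoids assuming (its reading of $\preceq_{lr}$ tolerates $\tk'\le 0$ in a left-neighborhood of $1$); this costs nothing, since your Case-2 chain $\frac{1-\tk(q)}{1-\tb(q)}\le\frac{\int_q^1\tk'}{\int_q^1\tb'}\le\frac{\tk'(q)}{\tb'(q)}\le\frac{\tk'(x)}{\tb'(x)}$ is valid regardless of the sign of $\tk'$ and already covers both cases.
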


\begin{proof}
The ratio $\frac{\tb'(p)}{\tk'(p)}$ is non-decreasing with respect to $p$ if and only if $\frac{\tk'(p)}{\tb'(p)}$ is non-increasing with respect to $p$, and the later is automatically non-increasing at any point for which $\tk'(p) \le 0$, which at most occurs in a left-neighborhood of $1$.

On the contrary, suppose $L(x_0) > 0$ for some $x_0 \in (0, M)$; then, define
\[
x_1 = \inf\big\{ x \in [0, x_0]: L(y) > 0, \, \forall y \in (x, x_0] \big\},
\]
and
\[
x_2 = \sup\big\{ x \in (x_0, M): L(y) > 0, \, \forall y \in [x_0, x) \big\}.
\]
If $x_1 > 0$, then $L'(x_1-) \ge 0$, which is equivalent to
\begin{align}\label{eq:ineq_L'}
\dfrac{\tk'(\Sx(x_1))}{\tb'(\Sx(x_1))} - \dfrac{u'(w - \Rs(x_1) - \pis)}{u'(w - \pis)b(1 - q) + \int_0^M u'(w - \Rs(x) - \pis) \tb'(\Sx(x)) \fx(x) dx} \ge 0.
\end{align}
Because $L$ is continuous on $\R^+$, the interval $(x_1, x_2)$ is non-empty, and for all $t \in (x_1, x_2)$, we have $L(t) > 0$, which implies $(\Is)'(t) = 1$, or $\Rs(t) = \Rs(x_1)$.  Thus, for all $t \in (x_1, x_2)$, we have
\begin{align*}
&L'(t) = \fx(t) \tb'(\Sx(t)) \left(\dfrac{\tk'(\Sx(t))}{\tb'(\Sx(t))} - \dfrac{u'(w - \Rs(t) - \pis)}{u'(w - \pis) b(1 - q) + \int_0^M u'(w - \Rs(x) - \pis) \tb'(\Sx(x)) \fx(x) dx} \right)\\
&= \fx(t) \tb'(\Sx(t)) \left(\dfrac{\tk'(\Sx(t))}{\tb'(\Sx(t))} - \dfrac{u'(w - \Rs(x_1) - \pis)}{u'(w - \pis) b(1 - q) + \int_0^M u'(w - \Rs(x) - \pis) \tb'(\Sx(x)) \fx(x) dx} \right) \\
&\ge \fx(t) \tb'(\Sx(t)) \left(\dfrac{\tk'(\Sx(x_1))}{\tb'(\Sx(x_1))} - \dfrac{u'(w - \Rs(x_1) - \pis)}{u'(w - \pis) b(1 - q) + \int_0^M u'(w - \Rs(x) - \pis) \tb'(\Sx(x)) \fx(x) dx} \right) \\
&\ge 0,
\end{align*}
in which the first inequality follows from $\tb \preceq_{lr} \tk$; the second, from \eqref{eq:ineq_L'}.  It follows that $L(x_2) > 0$, so $x_2$ is not the maximal point $x$ for which $L(y) > 0$ for all $y \in [x_0, x)$. By iterative analysis, we must have $L(t) > 0$ for all $t \in (x_1, M)$, which implies $\Rs(x) = \Rs(x_1)$ for all $x \in [x_1, M)$ and
\begin{align*}
0 &< \dfrac{\int_t^M u'(w - \Rs(x) - \pis) \tb'(\Sx(x)) \fx(x) dx}{u'(w - \pis) b(1 - q) + \int_0^M u'(w - \Rs(x) - \pis)  \tb'(\Sx(x)) \fx(x) dx} - \tk(\Sx(t)) \\
&= \dfrac{u'(w - \Rs(x_1) - \pis)  \tb(\Sx(t))}{u'(w - \pis) b(1 - q) + \int_0^M u'(w - \Rs(x) - \pis) \tb'(\Sx(x)) \fx(x) dx} - \tk(\Sx(t)-) \\
&= \tb(\Sx(t)) \left( \dfrac{u'(w - \Rs(x_1) - \pis)}{u'(w - \pis) b(1 - q) + \int_0^M u'(w - \Rs(x) - \pis) \tb'(\Sx(x)) \fx(x) dx} - \dfrac{\tk(\Sx(t))}{\tb(\Sx(t))} \right)\\
&\le \tb(\Sx(t)) \left( \dfrac{u'(w - \Rs(x_1) - \pis)}{u'(w - \pis) b(1 - q) + \int_0^M u'(w - \Rs(x) - \pis) \tb'(\Sx(x)) \fx(x) dx} - \dfrac{\tk'(\Sx(t))}{\tb'(\Sx(t))} \right) \\
&\le \tb(\Sx(t)) \left( \dfrac{u'(w - \Rs(x_1) - \pis)}{u'(w - \pis) b(1 - q) + \int_0^M u'(w - \Rs(x) - \pis) \tb'(\Sx(x)) \fx(x) dx} - \dfrac{\tk'(\Sx(x_1))}{\tb'(\Sx(x_1))} \right),
\end{align*}
in which the second inequality follows from $\frac{\tk(\Sx(t))}{\tb(\Sx(t))} \ge \frac{\tk'(\Sx(t))}{\tb'(\Sx(t))}$ when the latter is non-decreasing with respect to $t$,\footnote{Here is a short proof of that fact: Suppose $\frac{\tk'(\Sx(t))}{\tb'(\Sx(t))}$ is non-decreasing with respect to $t$, or equivalently, $\frac{\tk'(p)}{\tb'(p)}$ is non-increasing with respect to $p$.  Then, for $0 \le \hat p \le p \le 1$, $\tb' > 0$ and $\frac{\tk'(\hat p)}{\tb'(\hat p)} \ge \frac{\tk'(p-)}{\tb'(p)}$ implies
\begin{align*}
&\tk'(\hat p) \tb'(p) \ge \tk'(p) \tb'(\hat p) \implies \int_0^p \tk'(\hat p) d\hat p \cdot \tb'(p) \ge \int_0^p \tb'(\hat p) d\hat p \cdot \tk'(p) \\
&\implies \tk(p) \tb'(p) \ge \tb(p) \tk'(p) \implies \frac{{\tk(p)}}{\tb(p)} \ge \frac{\tk'(p)}{\tb'(p)}.
\end{align*}} and the third inequality follows from $\tb \preceq_{lr} \tk$, which is equivalent to the ratio $\frac{\tk'(\Sx(t))}{\tb'(\Sx(t))}$ non-decreasing with respect to $t$.  The positivity of the expression in square brackets contradicts inequality \eqref{eq:ineq_L'}, from which we deduce that $x_1 = 0$ if $L(x_0) > 0$ for some $x_0 \in (0, M)$.

Now, consider $x_1 = 0$.  If $q = 1$ and $\tet > 0$, then $L(0) = -\tet < 0$, which contradicts $x_1 = 0$ because $L$ is continuous.  If $q = 1$ and $\tet = 0$, then $L(0) = 0$, and $x_1 = 0$ implies $L'(0) \ge 0$ because $L > 0$ in a right-neighborhood of $0$. Otherwise, if $q < 1$ and $x_1 = 0$, then we have
\begin{align*}
0 &\le L(0) = \dfrac{\int_0^M u'(w - \Rs(x) - \pis) \tb'(\Sx(x)) \fx(x) dx}{u'(w - \pis) b(1 - q) + \int_0^M u'(w - \Rs(x) - \pis) \tb'(\Sx(x)) \fx(x) dx} - \tk(q) \\
&= 1 - \tk(q) - \dfrac{u'(w - \pis)b(1 - q)}{u'(w - \pis)b(1 - q) + \int_0^M u'(w - \Rs(x) - \pis) \tb'(\Sx(x)) \fx(x) dx},
\end{align*}
which implies
\begin{align*}
L'(0) &= \fx(0) \tb'(q) \left( \dfrac{\tk'(q)}{\tb'(q)} - \dfrac{u'(w - \pis)}{u'(w - \pis) b(1 - q) + \int_0^M u'(w - \Rs(x) - \pis) \tb'(\Sx(x)) \fx(x) dx} \right) \\
&\ge \fx(0) \tb'(q) \left( \dfrac{\tk'(q)}{\tb'(q)} - \dfrac{1 - \tk(q)}{b(1 - q)} \right) \\
&= \fx(0) \tb'(q) \left( \dfrac{\tk'(q)}{\tb'(q)} - \dfrac{1 - \tk(q)}{1 - \tb(q)} \right) \\
&= \fx(0) \tb'(q) \left( \dfrac{\tk'(q)}{\tb'(q)} - \dfrac{\tk(1) - \tk(q)}{\tb(1) - \tb(q)} + \dfrac{\tet}{b(1 - q)} \right) \\
& \ge \fx(0) \tb'(q) \, \dfrac{\tet}{b(1 - q)} \ge 0,
\end{align*}
in which the second inequality follows from $\tb \preceq_{lr} \tk$ via a proof similar to the one in the most recent footnote.  Thus, for all $q \in (0, 1]$, we have $L'(0) \ge 0$, and the contradiction we obtained in the case for which $x_1 > 0$ also occurs when $x_1 = 0$.  It follows that $L \le 0$ on $[0, M)$.
\end{proof}

\begin{remark}\label{rem:Y1999}
The model in Young {\rm \cite{Y1999}} satisfies the conditions in Corollary {\rm \ref{cor:LR}} because, in that paper,  $b = \tb$ is the identity, and $\tk$ is an increasing, concave distortion with $\tk(0) = 0$ and $\tk(1) = 1$, which implies $\tet = 0$ and $\tb \preceq_{lr} \tk$. \qed
\end{remark}

Consider the condition $L(x) = 0$.  If $L(x) = 0$ holds on a non-empty interval, then, for all $x$ in that interval, we have
\[
\int_x^M  u'(w - \Rs(t) - \pis)  \tb'(\Sx(t)) \fx(t) dt = \tk(\Sx(x)) \Ups,
\]
in which $\Ups$ equals the constant
\begin{align}\label{eq:Ups}
\Ups = u'(w - \pis)b(1 - q) + \int_0^M u'(w - \Rs(t) - \pis) \tb'(\Sx(t)) \fx(t) dt.
\end{align}
By differentiating the above equation under the hypotheses of Corollary \ref{cor:LR}, we get
\[
- u'(w - \Rs(x) - \pis) \tb'(\Sx(x)) \fx(x) = - \fx(x) \tk'(\Sx(x)) \Ups,
\]
or equivalently,
\begin{equation}\label{eq:3.5}
u'(w - \Rs(x) - \pis) = \ell(x) \Ups,
\end{equation}
in which we define $\ell$, the analog of the likelihood ratio, by
\begin{equation}\label{eq:ell}
\ell(x) = \dfrac{\tk'(\Sx(x))}{\tb'(\Sx(x))}.
\end{equation}
Equation \eqref{eq:3.5} generalizes equation (3.5) in Young \cite{Y1999}.  Thus, we have the following corollary (of both Theorem \ref{thm:optins} and Corollary \ref{cor:LR}), and this corollary emends Theorem 3.6 of Young \cite{Y1999}.

\begin{corollary}\label{cor:Young1999}
Suppose Assumption {\rm \ref{assum}} holds.  If $\tet \ge 0$ and $\tb \preceq_{lr} \tk$, then for any $x \ge 0$, either $(\Is)'(x) = 0$ or \eqref{eq:3.5} holds.  \qed
\end{corollary}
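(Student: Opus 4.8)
\emph{Proof proposal.} The plan is to read the corollary off Theorem~\ref{thm:optins} and Corollary~\ref{cor:LR}, together with the computation leading to \eqref{eq:3.5} that is carried out just before the statement. By Corollary~\ref{cor:LR} we have $L(t)\le 0$ for all $t\in[0,M)$, so the branch $L(t)>0$ in the trichotomy \eqref{eq:Is_deriv} never occurs; hence for almost every $x\in(0,M)$ we are in exactly one of two cases: $L(x)<0$, where \eqref{eq:Is_deriv} forces $(\Is)'(x)=0$, or $L(x)=0$. It therefore suffices to show that \eqref{eq:3.5} holds at almost every point of $Z:=\{x\in(0,M): L(x)=0\}$.

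First I would note that, under Assumption~\ref{assum}, the function $L$ of \eqref{eq:L2} is differentiable on $(0,M)$ outside a countable set, with $L'(t)=\fx(t)\,\tb'(\Sx(t))\,\phi(t)$ as in \eqref{eq:Lprime}--\eqref{eq:phi}, where $\phi(t)=\ell(t)-u'(w-\Rs(t)-\pis)/\Ups$, $\ell$ is the likelihood-ratio analogue \eqref{eq:ell}, and $\Ups$ is the constant \eqref{eq:Ups}: the numerator in \eqref{eq:L2} is $C^1$ in $t$ by continuity of its integrand, while $\tk\circ\Sx$ is differentiable off a countable set because $\tk$ is concave and $\Sx$ is strictly decreasing and absolutely continuous on $(0,M)$. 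Since $L\equiv 0$ on $Z$, its derivative vanishes at almost every point of $Z$ (almost every point of a measurable set is a Lebesgue density point, and a differentiable function has zero derivative at a density point of one of its level sets). On $(0,M)$ one has $\fx>0$ and $\tb'(\Sx(\cdot))=b'(\Fx(\cdot))>0$, so $L'(x)=0$ yields $\phi(x)=0$, which is precisely \eqref{eq:3.5}. Together with the case $L(x)<0$, this gives the corollary for almost every $x\in(0,M)$; the point $x=0$ is negligible, and the behaviour of $\Is$ on $[M,\infty)$ is irrelevant since $\P(X>M)=0$, so it is enough to work on $[0,M)$, as elsewhere in the paper.

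I expect the only genuinely delicate step to be the passage from ``$L\equiv 0$ on $Z$'' to ``$L'=0$ a.e.\ on $Z$''. One cannot simply differentiate the identity $\int_x^M u'(w-\Rs(t)-\pis)\,\tb'(\Sx(t))\,\fx(t)\,dt=\tk(\Sx(x))\,\Ups$ at a point $x\in Z$, since this requires $Z$ to contain a neighbourhood of $x$, whereas $Z$ need not contain any interval (it could be a positive-measure closed set with empty interior). This is exactly what the density-point argument above repairs; in the concrete models of Section~\ref{sec:Y1999}, where $Z$ is a locally finite union of intervals, one may instead differentiate the identity directly on the interior of $Z$ and recover \eqref{eq:3.5}, which is the derivation already displayed before the corollary. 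In either case the real content lies in Theorem~\ref{thm:optins}, Corollary~\ref{cor:LR}, and that derivation, and the corollary is in essence a tidy restatement.
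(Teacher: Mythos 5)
Your proposal is correct and follows the paper's own route: Corollary \ref{cor:LR} eliminates the branch $L>0$ in \eqref{eq:Is_deriv}, Theorem \ref{thm:optins} gives $(\Is)'=0$ wherever $L<0$, and on the zero set of $L$ one differentiates to obtain $\phi=0$, i.e.\ \eqref{eq:3.5}. Your density-point argument is a small genuine refinement of the paper's treatment, whose displayed derivation of \eqref{eq:3.5} only handles the case where $\{L=0\}$ contains a non-empty interval; your version also covers a zero set of positive measure with empty interior, and correctly reads the conclusion as holding for almost every $x$.
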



Propositions 4.3 and 4.4 in Chi and Zhuang \cite{CZ2020} give sufficient conditions that ensure \eqref{eq:3.5} holds at most on a single interval $(a, b) \subset \R^+$.  
We state the conditions in the following corollary without proof because the proofs of Propositions 4.3 and 4.4 in Chi and Zhuang \cite{CZ2020} apply to our model.

\begin{corollary}\label{cor:Prop4.3_4.4}
Suppose Assumption {\rm \ref{assum}} holds.  Furthermore, suppose $\tet \ge 0$ and that either of following sets of conditions holds$:$
\begin{enumerate}
\item[$1.$]  $u''' \ge 0$, and $\ell$ in \eqref{eq:ell} is increasing and concave.

\item[$2.$]  $u$ exhibits hyperbolic absolute risk aversion $(HARA)$, that is, $-\frac{u''(x)}{u'(x)} = \frac{1}{ax + m}$ for some $a \ge 0$ and $m \in \R$, and $\ln(\ell)$ is increasing and concave.
\end{enumerate}
Then, $\Is$ is given by
\begin{equation}\label{eq:DIML}
\Is(x) =
\begin{cases}
0, &\quad 0 \le x \le d, \\
x - w + \pis + (u')^{-1}(\ell(x) \Ups), &\quad d < x \le m, \\
m - w + \pis + (u')^{-1}(\ell(m) \Ups), &\quad x > m,
\end{cases}
\end{equation}
for some $0 \le d \le m \le M$.  \qed
\end{corollary}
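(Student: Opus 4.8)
The plan is to combine the global sign information of Corollary~\ref{cor:LR} with the pointwise optimality condition of Theorem~\ref{thm:optins}, and then to pin down the shape of the set on which $L$ vanishes. By Corollary~\ref{cor:LR}, $L\le 0$ on $[0,M)$, so the ``$L>0$'' branch of \eqref{eq:Is_deriv} never arises and $(\Is)'(t)=0$ for almost every $t$ with $L(t)<0$. Since $\Is\in\mic$ is $1$-Lipschitz with $\Is(0)=0$, this already produces a deductible: writing $d$ for the left endpoint of $\{t\in[0,M):L(t)=0\}$ (and $d=M$ if this set is empty, the degenerate ``no insurance'' instance of \eqref{eq:DIML}), we have $\Is\equiv 0$ on $[0,d]$. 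On any open subinterval of $(0,M)$ on which $L\equiv 0$, Corollary~\ref{cor:Young1999} gives \eqref{eq:3.5}, and solving that equation for the retention yields $\Rs(x)=w-\pis-(u')^{-1}(\ell(x)\Ups)$, i.e.\ $\Is(x)=x-w+\pis+(u')^{-1}(\ell(x)\Ups)$; because $\Is\in\mic$, the resulting derivative $1+\ell'(x)\Ups/u''(w-\Rs(x)-\pis)$ automatically lies in $[0,1]$, so admissibility on such a subinterval needs no separate verification.

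The crux is then to show that $\{t\in(0,M):L(t)=0\}$ is a single (possibly empty or degenerate) interval $[d,m]$. Granting this, \eqref{eq:DIML} drops out: $\Is\equiv 0$ on $[0,d]$ as above; on $(d,m)$, $\Is$ is given by the first-order formula just obtained; and on $(m,M)$ we have $L<0$, so $(\Is)'=0$ and $\Is$ is constant there, equal by continuity to $\Is(m)=m-w+\pis+(u')^{-1}(\ell(m)\Ups)$, with the same value used to extend $\Is$ beyond $M$ (its values there being immaterial, as $X\le M$ almost surely and $\Is$ is pinned down only on $[0,M)$ by Proposition~\ref{prop:unique}).

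To prove the single-interval claim I argue by contradiction: if $\{L=0\}$ is not an interval, there is a maximal open ``gap'' $(\alpha,\beta)\subset(0,M)$ on which $L<0$, with $d<\alpha$, $L(\alpha)=L(\beta)=0$, and $L\equiv 0$ on a left-neighbourhood of $\alpha$, so that \eqref{eq:3.5} holds at $\alpha$ by continuity. On $[\alpha,\beta]$ we have $(\Is)'\equiv 0$, hence $\Rs(x)=x-c$ with $c=\Is(\alpha)$ constant. Writing $L'=\fx\,\tb'(\Sx)\,\phi$ as in \eqref{eq:Lprime}--\eqref{eq:phi}, \eqref{eq:3.5} at $\alpha$ together with continuity of $\Rs$ gives $\phi(\alpha^{+})=0$, and since $L$ cannot become positive just to the right of $\alpha$ (where $L(\alpha)=0$) we also get $\phi'(\alpha^{+})\le 0$. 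Under hypothesis~$1$, along $\Rs(x)=x-c$ one computes $\phi''(x)=\ell''(x)-u'''(w-x+c-\pis)/\Ups\le 0$, because $\ell$ is concave and $u'''\ge 0$; thus $\phi$ is concave on $(\alpha,\beta)$, and a concave function vanishing with non-positive slope at the left endpoint stays non-positive. Hence $L'=\fx\,\tb'(\Sx)\,\phi\le 0$ on $(\alpha,\beta)$, so $L$ is non-increasing there, forcing $L\equiv 0$ on $[\alpha,\beta]$ and contradicting $L<0$ on the gap. Under hypothesis~$2$ the identical scheme is applied to $\tilde\phi(x):=\ln\ell(x)-\ln\!\big(u'(w-\Rs(x)-\pis)/\Ups\big)$, which shares the sign and endpoint behaviour of $\phi$: the HARA form makes $x\mapsto\ln u'(w-x+c-\pis)$ affine when $a=0$ and convex when $a>0$ (a negative multiple of the logarithm of a decreasing affine function), so $\tilde\phi$ is concave precisely because $\ln\ell$ is. The genuinely delicate point is exactly this single-interval structure --- in particular the regularity of $\{L=0\}$, ruling out isolated zeros and thin oscillations, which is handled as in the proofs of Propositions~$4.3$ and~$4.4$ of Chi and Zhuang~\cite{CZ2020}; everything else is the bookkeeping indicated above.
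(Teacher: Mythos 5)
The paper gives no proof of this corollary at all: it simply asserts that the proofs of Propositions 4.3 and 4.4 of Chi and Zhuang \cite{CZ2020} carry over. Your proposal therefore supplies an argument where the paper supplies none, and the argument you give is the natural one (and, in substance, the Chi--Zhuang one): use Corollary \ref{cor:LR} to kill the $L>0$ branch, so $\Is$ is constant on $\{L<0\}$ and satisfies \eqref{eq:3.5} on the interior of $\{L=0\}$, and then rule out a ``gap'' $(\alpha,\beta)$ with $L(\alpha)=L(\beta)=0$ and $L<0$ in between by showing that $\phi$ (or $\ln\ell-\ln(u'/\Ups)$ in the HARA case) is concave along the straight retention $\Rs(x)=x-c$, vanishes at $\alpha$, and has non-positive right derivative there, whence $L$ is non-increasing on $(\alpha,\beta)$ and cannot return to zero. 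The concavity computations under both hypotheses are correct, and your observation that admissibility of the formula on $(d,m]$ is automatic (because the corollary describes the $\Is\in\mic$ already known to exist, rather than constructing a candidate) is a legitimate shortcut.

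The one soft spot is the step you yourself flag: you obtain \eqref{eq:3.5} at $\alpha$ by assuming $L\equiv 0$ on a left-neighbourhood of $\alpha$, and then defer ``isolated zeros and thin oscillations'' of $\{L=0\}$ to Chi and Zhuang. This deferral is unnecessary. Under Assumption \ref{assum} (continuous $\fx$, continuously differentiable $b$ and $\tk$, Lipschitz $\Rs$), $L$ is $C^1$ on $(0,M)$ with $L'=\fx\,\tb'(\Sx)\,\phi$ and $\fx\,\tb'(\Sx)>0$. Since $L\le 0$ everywhere by Corollary \ref{cor:LR}, every interior zero $\alpha$ of $L$ is a global maximum, so $L'(\alpha)=0$ and hence $\phi(\alpha)=0$ --- which is exactly \eqref{eq:3.5} at $\alpha$ --- with no assumption on the structure of $\{L=0\}$ near $\alpha$. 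The same maximality gives $\phi'(\alpha^+)\le 0$ exactly as you argue. With this observation your gap argument applies to any two points of $\{L=0\}$ separated by a point where $L<0$, which shows directly that $\{L=0\}\cap(0,M)$ is an interval $[d,m]$; the DIML form \eqref{eq:DIML} then follows by the bookkeeping you describe. So the proof is correct and, once this small patch is made, self-contained.
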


Note that $\ell$ depends both on the distortions $\tb$ and $\tk$ and on the distribution of $X$; thus, $\ell$ or $\ln(\ell)$ increasing and concave is not a distribution-free statement relating $\tb$ and $\tk$, as opposed to the three distribution-free orders in Definition \ref{def:ambig}.

We refer to an indemnity such as the one given in \eqref{eq:DIML} {\it deductible insurance with a maximum limit} (DIML). Note that we will not necessarily have $(\Is)'(x) = 1$ for $d < x < m$ as in a standard DIML policy, so we are extending the notion of DIML.

In the next section, we revisit some of the examples from Young \cite{Y1999} and emend or confirm them.

\section{Examples}\label{sec:Y1999}

Recall that the seller's distortion $\tk$ is continuous and concave.  Throughout this section, we further assume that $\tk(1) = 1 + \tet \ge 1$, that is, $\tet \ge 0$.  Initially, assume the buyer's distortion function $b$ equals the identity, but in Section \ref{sec:power}, we also consider a non-trivial, convex function $b$.  Young \cite{Y1999} assumes $\tet = 0$, $\tk$ is increasing, and $b$ equals the identity, so we generalize the examples in that paper.

Suppose the buyer's utility function $u$ is such that $u'(x) = e^{-\gam x}$ for $x \in \R$ and for some parameter $\gam > 0$; then, $\gam$ equals the (constant) absolute risk aversion, a special case of HARA preferences.   For all the examples in this section, assume the positive part of $X$ has the probability density function
\[
\fx(x) = q \la e^{-\la x},  \qquad x \ge 0,
\]
for some parameter $\la > 0$.  Then,
\begin{equation}\label{eq:L_expexp}
L(t) = \dfrac{q \la \int_{t}^\infty e^{\gam \Rs(x) - \la x} \, dx}{\xi} - \tk\big(qe^{-\la t} \big),
\end{equation}
in which $\xi$ equals the constant
\begin{equation}\label{eq:xi_expexp}
\xi = (1 - q) + q \la \int_0^\infty e^{\gam \Rs(x) - \la x} \, dx.
\end{equation}
Because the essential infimum of $X$ equals 0, Proposition \ref{prop:unique} implies that the optimal indemnity $\Is$ is unique.  Thus, throughout this section, we refer to {\it the} optimal solution determined by Theorem \ref{thm:optins}.

\subsection{Power distortion}\label{sec:power}

Suppose $\tk$ is a power distortion with $\tk(p) = (1 + \tet) p^c$ for $p \in [0, 1]$, and for some parameters $0 < c < 1$ and $\tet \ge 0$.  Also, suppose $b$ is the identity function; at the end of this section, we consider a non-trivial convex $b$.  Then, $\ell$ in \eqref{eq:ell} equals
\begin{equation}\label{eq:ell_expexp_power}
\ell(x) = c(1 + \tet)q^{c-1} \, e^{\la(1 - c)x}.
\end{equation}
The function $\ln(\ell)$ is linear with a positive slope, so $\ln(\ell)$ is increasing and concave.  Corollary \ref{cor:Prop4.3_4.4} implies that optimal insurance is the following DIML policy:
\begin{equation}\label{eq:DIML_expexp_power}
\Is(x) =
\begin{cases}
0, &\quad 0 \le x \le d, \vspace{0.5em} \\
x - \dfrac{1}{\gam} \, \ln(\ell(x) \xi), &\quad d < x \le m, \vspace{0.5em}  \\
m - \dfrac{1}{\gam} \, \ln(\ell(m) \xi), &\quad x > m,
\end{cases}
\end{equation}
for some $0 \le d \le m \le \infty$.  For $d < x \le m$, we have
\[
\Is(x) = \left(1 - \dfrac{\la(1 - c)}{\gam} \right) x - \dfrac{1}{\gam} \, \ln(c(1 + \tet) q^{c-1} \xi),
\]
a linear function of $x$.  Continuity of $\Is$ requires $\Is(d) = 0$, or equivalently,
\[
\left(1 - \dfrac{\la(1 - c)}{\gam} \right) d = \dfrac{1}{\gam} \, \ln(c(1 + \tet) q^{c-1} \xi),
\]
which implies
\[
\Is(x) = \left(1 - \dfrac{\la(1 - c)}{\gam} \right) (x - d),
\]
for $d < x \le m$, which only makes sense if the coefficient of $(x - d)$ is positive.  Indeed, if $\la(1 - c) \ge \gam$, then no insurance is optimal, as we prove in the following proposition.

\begin{proposition}\label{prop:expexp_power_no}
If $\la(1 - c) \ge \gam$ for the model in this section, then $\Is \equiv 0$.
\end{proposition}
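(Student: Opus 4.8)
The plan is to read off the conclusion directly from the explicit DIML representation \eqref{eq:DIML_expexp_power}, which is available here because the hypotheses of Corollary \ref{cor:Prop4.3_4.4} hold: $u$ is HARA with constant absolute risk aversion $\gam$ (so $-u''/u' = \gam$), and $\ell$ in \eqref{eq:ell_expexp_power} satisfies $\ln(\ell(x)) = \ln\!\big(c(1+\tet)q^{c-1}\big) + \la(1-c)x$, which is affine with positive slope, hence increasing and concave. Thus $\Is$ has the form \eqref{eq:DIML_expexp_power} for some $0 \le d \le m \le \infty$, and, as already derived in the text right before the proposition, continuity of the $1$-Lipschitz map $\Is$ at $x=d$ (where the first piece forces $\Is(d)=0$) gives $\Is(x) = \big(1 - \tfrac{\la(1-c)}{\gam}\big)(x-d)$ for $d < x \le m$.

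The key steps are then: first, assume $\la(1-c) \ge \gam$, so the slope $1 - \tfrac{\la(1-c)}{\gam}$ is $\le 0$; hence on $(d,m]$ we would have $\Is(x) \le 0$, and since $\Is \ge 0$ this is possible only if $\Is \equiv 0$ on $(d,m]$. When the slope is strictly negative this means the middle interval is empty, i.e.\ $d = m$ (or $d=m=\infty$, in which case we are already done); when the slope is exactly zero, $\Is$ is already identically $0$ on $(d,m]$. In all cases $\Is = 0$ on $[0,m]$ and in particular $\Is(m) = 0$. Second, for $x > m$ the third piece of \eqref{eq:DIML_expexp_power} is the constant $m - \tfrac1\gam\ln(\ell(m)\xi)$; by continuity of $\Is$ at $m$ this constant equals $\Is(m) = 0$, so $\Is$ vanishes on $(m,\infty)$ as well. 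Combining the three regions yields $\Is \equiv 0$.

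I do not expect a genuine obstacle: the substantive work — the DIML characterization and the affine form of the middle piece — is already provided by Corollary \ref{cor:Prop4.3_4.4} and the preceding computation. The only points requiring a little care are the borderline case $\la(1-c) = \gam$, where the middle slope vanishes so one must observe $\Is$ is constantly zero on $(d,m]$ rather than concluding $d=m$, and the case $m = \infty$, where there is no tail piece but the same sign argument on the (now unbounded) middle piece forces $\Is \equiv 0$ directly. (As an alternative route one could instead argue from Theorem \ref{thm:optins} and Corollary \ref{cor:Young1999} that, with $u'(x)=e^{-\gam x}$ and $\ell$ as in \eqref{eq:ell_expexp_power}, any solution of \eqref{eq:3.5} would have $(\Rs)' = \la(1-c)/\gam \ge 1$, i.e.\ $(\Is)' \le 0$, which is incompatible with $\Is \ge 0$ unless $\Is \equiv 0$; but the DIML route above is the cleaner presentation.)
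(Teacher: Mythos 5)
Your proof is correct, but it takes a genuinely different route from the paper's. You argue by \emph{necessity}: Corollary \ref{cor:Prop4.3_4.4} forces the optimum into the DIML form \eqref{eq:DIML_expexp_power}, the middle piece must be the affine function $\big(1-\tfrac{\la(1-c)}{\gam}\big)(x-d)$, and non-negativity of $\Is$ then kills the middle piece when the slope is $\le 0$, after which continuity kills the tail; your handling of the borderline slope-zero case and of $m=\infty$ is the right level of care. The paper instead argues by \emph{sufficiency}: it plugs $\Is\equiv 0$ directly into the optimality criterion of Theorem \ref{thm:optins}, computes $L_{\{\Is\equiv 0\}}(t)=\frac{q\la e^{-(\la-\gam)t}}{\la-(1-q)\gam}-(1+\tet)q^ce^{-\la ct}$, observes that non-positivity for all $t$ reduces to non-positivity at $t=0$, and closes with the elementary tangent-line inequality $q^{1-c}\le c+(1-c)q$. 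The trade-off is this: your argument is shorter and more conceptual, but it rests entirely on Corollary \ref{cor:Prop4.3_4.4}, which the paper states without proof (deferring to Chi and Zhuang), and it also implicitly uses existence and uniqueness of the optimum (Propositions \ref{prop:exist} and \ref{prop:unique}, the latter applicable since ${\rm ess\,inf\,}X=0$) so that "the" optimal indemnity is pinned down by the structure theorem. The paper's verification is self-contained given Theorem \ref{thm:optins} and doubles as a consistency check on the chain of corollaries; it also yields the quantitative observation that the threshold $\la(1-c)\ge\gam$ is independent of $\tet$ and $q$. Your parenthetical alternative via Corollary \ref{cor:Young1999} is essentially the same necessity argument in different clothing. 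Either way, the result stands.
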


\begin{proof}
If we show $L_{\{\Is \equiv 0\}}(t) \le 0$ for all $t \ge 0$, then the proposition follows from Theorem \ref{thm:optins}.  We compute
\[
L_{\{\Is \equiv 0\}}(t) = \dfrac{q \la e^{-(\la - \gam)t}}{\la - (1 - q)\gam} - (1 + \tet) q^c e^{-\la c t},
\]
which is less than or equal to $0$ for all $t \ge 0$ if and only if $L_{\{\Is \equiv 0\}}(0) \le 0$, which is equivalent to
\[
q^{1-c} \la \le (1 + \tet) \big(\la - (1 - q)\gam \big).
\]
Because $\tet \ge 0$, this inequality holds if it holds when $\tet = 0$, that is, if
\begin{equation}\label{ineq1}
q^{1-c} \la \le \la - (1 - q)\gam,
\end{equation}
and the right side of \eqref{ineq1} equals
\[
\la - (1 - q)\gam = (\la(1 - c) - \gam)(1 - q) + \la(c + (1 - c)q),
\]
with $(\la(1 - c) - \gam)(1 - q) \ge 0$.  So, inequality \eqref{ineq1} holds if the following stronger inequality holds:
\begin{equation}\label{eq:q_1-c}
q^{1 - c} \le c + (1 - c)q,
\end{equation}
for all $0 \le q \le 1$.  Because $0 < c < 1$, $q^{1-c}$ is a concave function of $q$, so its graph lies below its tangent lines, and $c + (1-c)q$ is its tangent line at $q = 1$.  Thus, inequality \eqref{eq:q_1-c} holds, and we have proved this proposition.
\end{proof}

Essentially, Proposition \ref{prop:expexp_power_no} says that if the coefficient of absolute risk aversion $\gam$ is small enough, then it is optimal for the individual not to buy insurance.  It is interesting that ``small enough'' only depends on $\la$ and $c$; it is independent of both the proportional risk loading $\tet$ and the probability of a positive loss $q$.

Henceforth, in this section, assume $\la(1 - c) < \gam$, which implies that the slope of $\Is$ on $(d, m)$ is strictly between $0$ and $1$.  In that case, optimal insurance is deductible insurance with a constant rate of coinsurance and no maximum limit, that is, $\ms = \infty$, as we prove in the following proposition.

\begin{proposition}\label{prop:expexp_power_yes}
If $\la(1 - c) < \gam$ for the model in this section, then
\[
\Is(x) = \left(1 - \dfrac{\la(1 - c)}{\gam} \right) (x - \ds)_+,
\]
in which $\ds \ge 0$ uniquely solves
\begin{equation}\label{eq:ds}
\begin{cases}
e^{(\gam - \la)d} \left\{ q^{1-c} e^{\la c d} - (1 + \tet) q \, \dfrac{\gam - \la(1 - c)}{\gam - \la} \right\} = c (1 + \tet) \left(1 - \dfrac{q\gam}{\gam - \la} \right), &\quad \la \ne \gam, \vspace{0.5em} \\
q^{1-c} e^{\la c d} - (1 + \tet)q \la c d = (1 + \tet) \big( c + (1 - c) q \big), &\quad \la = \gam.
\end{cases}
\end{equation}
Furthermore, $\ds > 0$ if and only if either $q < 1$ or $\tet > 0$.
\end{proposition}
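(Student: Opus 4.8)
The plan is to combine the structural results already available with one explicit computation. Since ${\rm ess\,inf}\,X=0$, Proposition \ref{prop:unique} guarantees the optimal indemnity is unique, so it suffices to exhibit one indemnity meeting the criterion \eqref{eq:Is_deriv} of Theorem \ref{thm:optins}. In the present setting $b$ is the identity, $u$ is HARA with $-u''/u'\equiv\gamma$, and $\ln\ell$ is affine, hence increasing and concave; so Corollary \ref{cor:Prop4.3_4.4} puts $\Is$ in the DIML form \eqref{eq:DIML_expexp_power}, which, as the discussion preceding the statement shows, reduces to $\Is(x)=\beta(x-d)$ on $(d,m)$ with $\beta:=1-\lambda(1-c)/\gamma\in(0,1)$, $\Is\equiv0$ on $[0,d]$, and $\Is$ flat on $(m,\infty)$. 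Also, $\tb'(p)/\tk'(p)=p^{1-c}/(c(1+\theta))$ is non-decreasing, so $\tb\preceq_{lr}\tk$ and Corollary \ref{cor:LR} gives $L(t)\le0$ for all $t\ge0$. It remains to (i) rule out $m<\infty$, (ii) derive the equation for $d$, and (iii) show that equation has a unique nonnegative root, equal to $0$ precisely when $q=1$ and $\theta=0$.

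For (i), suppose $m<\infty$; then $\Rs(x)=x-\Is(m)$ for $x>m$, so $\gamma\Rs(x)-\lambda x=(\gamma-\lambda)x-\gamma\Is(m)$ there, and \eqref{eq:L_expexp} produces a tail integral $\int_t^\infty e^{(\gamma-\lambda)x}\,dx$. If $\gamma\ge\lambda$ this diverges, giving $L\equiv+\infty$ on $(m,\infty)$, against $L\le0$. If $\gamma<\lambda$, then $L(t)=C_1e^{(\gamma-\lambda)t}-C_2e^{-\lambda ct}$ with $C_1,C_2>0$, which is strictly positive for all large $t$ since $\gamma-\lambda>-\lambda c$ — and $\gamma-\lambda>-\lambda c$ is exactly the standing hypothesis $\lambda(1-c)<\gamma$ — again contradicting $L\le0$. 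Hence $m=\infty$ and $\Is(x)=\beta(x-\ds)_+$ for some $\ds\ge0$.

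For (ii), I would compute $\xi$ of \eqref{eq:xi_expexp} in two ways. Substituting $\Rs(x)=x$ on $[0,\ds]$ and $\Rs(x)=(1-\beta)x+\beta\ds$ on $(\ds,\infty)$ and using $\gamma(1-\beta)=\lambda(1-c)$, the tail of the integral collapses to $e^{(\gamma-\lambda)\ds}/(\lambda c)$ (or to a linear term when $\gamma=\lambda$), expressing $\xi$ explicitly as a function of $\ds$. On the other hand, continuity of $\Is$ (automatic since $\Is\in\mic$ is $1$-Lipschitz) forces $\Is(\ds)=0$, equivalently $\xi=q^{1-c}e^{\gamma\beta\ds}/(c(1+\theta))$ with $\gamma\beta=\gamma-\lambda(1-c)$. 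Equating the two expressions for $\xi$ and clearing denominators yields exactly \eqref{eq:ds}, the two lines corresponding to $\lambda\ne\gamma$ and $\lambda=\gamma$.

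For (iii), set $\psi(d):=\xi(d)-q^{1-c}e^{\gamma\beta d}/(c(1+\theta))$, so \eqref{eq:ds} reads $\psi(d)=0$. One computes $\psi'(d)=\tfrac{\gamma\beta}{c}e^{(\gamma-\lambda)d}\big(q-\tfrac{q^{1-c}}{1+\theta}e^{\lambda c d}\big)$, whose sign changes at most once, from $+$ to $-$, so $\psi$ is strictly decreasing or strictly unimodal, and $\psi(d)\to-\infty$ since $\gamma\beta$ exceeds both $0$ and $\gamma-\lambda$. Because $\psi(0)=(1-q)+q/c-q^{1-c}/(c(1+\theta))\ge0$ — this reduces to $q^{1-c}\le c+(1-c)q$, the inequality \eqref{eq:q_1-c} proved within Proposition \ref{prop:expexp_power_no} — the function $\psi$ has a unique root $\ds\in[0,\infty)$; and $\psi(0)=0$ is equivalent to $q^{1-c}=(1+\theta)(c+(1-c)q)$, which, since $q^{1-c}\le c+(1-c)q$ with equality only at $q=1$ and $1+\theta\ge1$, holds if and only if $q=1$ and $\theta=0$, so $\ds>0$ iff $q<1$ or $\theta>0$. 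To finish, $\Is(x)=\beta(x-\ds)_+\in\mic$, and with $\xi=\xi(\ds)$ one checks from \eqref{eq:L_expexp} that $L\equiv0$ on $[\ds,\infty)$, while on $(0,\ds)$, where $\Rs(t)=t$, formulas \eqref{eq:Lprime}--\eqref{eq:phi} reduce to $L'(t)=\fx(t)\big(\ell(t)-e^{\gamma t}/\xi\big)>0$ (the two exponentials coincide at $\ds$ and $e^{\gamma t}/\xi$ has the larger rate), so $L<0$ on $[0,\ds)$; hence $\Is$ satisfies \eqref{eq:Is_deriv} and, by Proposition \ref{prop:unique}, is the optimal indemnity. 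The delicate points are step (i) and the sign of $\psi$ near $0$; both rest entirely on the two inequalities $\gamma>\lambda(1-c)$ and $q^{1-c}\le c+(1-c)q$, so the substance is in recognizing that these already-established facts are exactly what is required.
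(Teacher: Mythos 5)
Your proof is correct and, at its core, is the paper's proof: you establish that \eqref{eq:ds} has a unique non-negative root (your $\psi$ equals the paper's $G$ up to the negative factor $-c(1+\tet)$, so the sign analysis is identical), and you then verify optimality directly through Theorem \ref{thm:optins} by showing $L\equiv 0$ on $[\ds,\infty)$ and $L<0$ on $[0,\ds)$ --- your comparison of the two exponentials $\ell(t)$ and $e^{\gam t}/\xi$, which coincide at $\ds$ with the latter growing faster, is exactly the computation the paper packages into its auxiliary function $H$. The one blemish is in your (dispensable) step (i): when $\gam\ge\la$ the denominator $\xi$ diverges along with the numerator, so $L$ is an indeterminate $\infty/\infty$ rather than $+\infty$ (the clean way out is that such an indemnity drives the exponential-utility objective to $-\infty$, or violates the standing integrability assumption); since your closing verification is self-contained and never invokes step (i), this does not affect the validity of the proof.
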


\begin{proof}
We prove this proposition when $\la \ne \gam$ because the proof when $\la = \gam$ is similar.  We begin by demonstrating that \eqref{eq:ds} has a unique solution.  Let $G = G(d)$ denote the left side of \eqref{eq:ds} minus the right; thus, we wish to show that $G$ has a unique non-negative zero $\ds$.  To that end, note that
\[
G(0) = q^{1-c} - (1 + \tet) \big( c + (1 - c)q \big) \le 0,
\]
in which the inequality follows from $\tet \ge 0$ and inequality \eqref{eq:q_1-c}.  Also,
\[
\lim_{d \to \infty} G(d) = \infty,
\]
because the term $e^{(\gam - \la(1 - c))d}$ dominates $G$ for $d$ large.  Finally,
\[
G'(d) = \big(\gam - \la(1 - c) \big) e^{(\gam - \la)d} \big( q^{1-c} e^{\la c d} - (1 + \tet) q \big),
\]
which implies that, as $d$ increases from $0$ to infinity, either $(i)$ $G$ first decreases from a non-positive number and then increases to infinity or $(ii)$ $G$ increases monotonically to infinity.  In either case, $G$ has a unique non-negative zero $\ds$, and $\ds$ is strictly positive if and only if $G(0) < 0$, which is true if and only if either $q < 1$ or $\tet > 0$.

If we show that $L(t) \le 0$ for all $t \ge 0$, in which $L = L_{\{\Is(x) = \alp(x - \ds)_+\}}$ and $\alp = 1 - \la(1 - c)/\gam$, then the optimality of $\Is$ follows from Theorem \ref{thm:optins}.  For $t > \ds$,  $L(t) = 0$, so it is enough to show that $L(t) \le 0$ for all $0 \le t \le \ds$ with $L(\ds) = 0$.

First, calculate $\xi$, writing $d$ in place of $\ds$ for simplicity:
\begin{align*}\label{eq:xi_expexp_power}
\xi &= (1 - q) + q \la \int_0^d e^{\gam x} e^{- \la x} \, dx + q \la \int_d^\infty e^{\gam(1 - \alp)x + \gam \alp d} e^{- \la x} \, dx \notag \\
&= (1 - q) + q \la \int_0^d e^{(\gam - \la)x} \, dx + q \la e^{(\gam - \la)d} \int_d^\infty e^{- \la c(x - d)} \, dx \notag \\
&= (1 - q) + q \la \, \dfrac{e^{(\gam - \la)d} - 1}{\gam - \la} + \dfrac{q}{c} \, e^{(\gam - \la)d}.
\end{align*}
For $0 \le t \le d \; (= \ds)$, by using the expression for $d = \ds$ in \eqref{eq:ds} when $\la \ne \gam$, one can show that
\begin{align*}
L(t) &= \dfrac{1}{\xi} \left[\frac{q \la}{\gam - \la} \left(e^{(\gam - \la)d} - e^{(\gam - \la)t}\right) + \frac{q}{c} \, e^{(\gam - \la)d} \right] - (1 + \tet) q^c e^{-\la c t} \le 0,
\end{align*}
if and only if $H(t) \le 0$, in which $H$ is defined by
\[
H(t) = e^{(\gam - \la)d} \left(1 - e^{\la c(d - t)} \right) - \dfrac{\la c}{\gam - \la} \left( e^{(\gam - \la)t} - e^{(\gam - \la)d} \right).
\]
Note that $H(d) = 0$, and $H'(t)$ is positively proportional to
\[
e^{(\gam - \la(1 - c))d} - e^{(\gam - \la(1 - c))t},
\]
which is non-negative for $0 \le t \le d$.  Thus, $L(t) \le 0$ for $0 \le t \le \ds$ with $L(\ds) = 0$, and we have proved this proposition.
\end{proof}

Now, suppose $b(p) = 1 - (1-p)^a$, in which $c < a < 1$, then 
\[
b(\Fx(x)) = 1 - \big(q e^{-\la x}\big)^a = 1 - q^a e^{-\la a x},
\]
so we essentially replace $X$ with a different mixture $X'$ of a point mass at zero and an exponential random variable such that $q' = q^a$ and $\la' = \la a$.  Also, we replace $\tk$ with a different power distortion $\hat k = (1 + \tet) p^{c'}$ such that $c'$ solves
\[
\tk(\Sx(x)) = \hat k(S_{X'}(x)) \iff \big(q e^{-\la x} \big)^c = \big(q^a e^{-\la a x}\big)^{c'}
\]
or equivalently, $c' = c/a < 1$.   Then, Propositions \ref{prop:expexp_power_no} and \ref{prop:expexp_power_yes} hold with $b(p) = p$ and $(q, \la, c)$ replaced by $b(p) = 1 - (1-p)^a$ and $(q^a, \la a, c/a)$, respectively.

\subsection{Dual power distortion}\label{sec:dual}

Suppose $\tk$ is a dual power distortion with $\tk(p) = (1 + \tet) (1-(1-p)^c)$ for $p \in [0,1]$, and for some parameters $ c > 1$ and $\tet \ge 0$.  Also, suppose $b$ is the identity function.  Then, $\ell$ in \eqref{eq:ell} equals
\begin{equation}\label{eq:ell_exp_dpower}
\ell(x) = c(1 + \tet) \big(1 - qe^{-\la x} \big)^{c-1},
\end{equation}
and
\begin{align*}
\ln \ell(x) &= \ln c + \ln(1 + \tet) + (c-1) \ln \big(1-q e^{- \la x}\big), \\
(\ln \ell(x))' &= \dfrac{(c-1)q \la  e^{- \la x}}{1-q e^{- \la x}} > 0,\\
(\ln \ell(x))'' &= - \, \dfrac{(c-1)q \la^2e^{- \la x} }{(1-q e^{- \la x})^2} < 0,
\end{align*}
that is, $\ln (\ell)$ is increasing and concave.  Corollary \ref{cor:Prop4.3_4.4}, then, implies that optimal insurance is a DIML policy, as in \eqref{eq:DIML_expexp_power}, for some $0 \le d \le m \le \infty$.  For $d < x \le m$, we have
\begin{align*}
\Is(x) & = x - \dfrac{1}{\gam} \, \ln(\ell(x) \xi) \\
       & = x - \dfrac{1}{\gam} \, \ln \big(c(1 + \tet)(1 - qe^{-\la x})^{c-1} \xi \big) \\
       & = x - \dfrac{c - 1}{\gam} \, \ln \big(1 - qe^{-\la x} \big) - \dfrac{1}{\gam} \, \ln(c(1 + \tet) \xi)
\end{align*}
Continuity of $\Is$ at $x = d$ requires $\Is(d) = 0$, that is,
\[
d - \dfrac{c - 1}{\gam} \, \ln \big(1 - qe^{-\la d} \big) = \dfrac{1}{\gam} \, \ln(c(1 + \tet) \xi)
\]
which implies
\[
\Is(x) = x - d - \dfrac{c - 1}{\gam} \ln \bigg( \dfrac{1 - qe^{-\la x}} {1 - qe^{-\la d}}\bigg),
\]
for $d < x \le m$.

As for the power-distortion model in Section \ref{sec:power}, we consider when no insurance or deductible insurance might be optimal in the following two propositions.

\begin{proposition}\label{prop:expexp_dpower_no}
$I \equiv 0$ is never optimal for the model in this section.
\end{proposition}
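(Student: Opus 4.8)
The plan is to apply Theorem~\ref{thm:optins} with the candidate no-insurance indemnity, i.e.\ to show that $\Is\equiv 0$ fails the optimality characterization, and then conclude that $\Is\equiv 0$ is never optimal. Concretely, I would compute $L_{\{\Is\equiv 0\}}(t)$ from \eqref{eq:L_expexp}--\eqref{eq:xi_expexp} with $\Rs(x)=x$, obtaining
\[
L_{\{\Is\equiv 0\}}(t) = \dfrac{q\la\int_t^\infty e^{(\gam-\la)x}\,dx}{(1-q)+q\la\int_0^\infty e^{(\gam-\la)x}\,dx} - (1+\tet)\big(1-(1-qe^{-\la t})^c\big),
\]
and then show that this expression is strictly positive for $t$ in some right-neighborhood of $0$ (in fact at $t=0$ it should already be positive, or at least nonnegative with positive derivative). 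Since \eqref{eq:Is_deriv} requires $(\Is)'(t)=0$ precisely on $\{L<0\}$, the existence of an interval where $L_{\{\Is\equiv 0\}}>0$ forces $(\Is)'=1$ there, contradicting $\Is\equiv 0$; hence no insurance cannot be optimal.

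The key steps, in order, would be: (i)~evaluate the integrals explicitly, splitting into the cases $\la\ne\gam$ and $\la=\gam$ as in the previous proofs, to get a closed form for $L_{\{\Is\equiv 0\}}$; (ii)~evaluate $L_{\{\Is\equiv 0\}}(0)$ and simplify using $(1-(1-q)^c)$ and the normalizing denominator; (iii)~show $L_{\{\Is\equiv 0\}}(0)\ge 0$ --- I expect this to reduce to an inequality like $1-(1-q)^c \le \tfrac{q}{\,q+(1-q)\cdot(\text{something positive})\,}$ or, after clearing denominators, to a convexity/concavity comparison of $p\mapsto (1-p)^c$ against its tangent line (here $c>1$ makes $(1-p)^c$ convex, so the inequality goes the favorable way, mirroring the trick with \eqref{eq:q_1-c} in Proposition~\ref{prop:expexp_power_no}); (iv)~if $L_{\{\Is\equiv 0\}}(0)=0$ occurs as a boundary case (e.g.\ when $q=1$, $\tet=0$), compute $L_{\{\Is\equiv 0\}}'(0)$ and show it is strictly positive, so that $L_{\{\Is\equiv 0\}}>0$ just to the right of $0$; (v)~conclude via Theorem~\ref{thm:optins}.

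The main obstacle I anticipate is step~(iii)/(iv): unlike the power-distortion case, where $\tk(\Sx(t))=(1+\tet)q^c e^{-\la c t}$ factors cleanly, here $\tk(\Sx(t))=(1+\tet)(1-(1-qe^{-\la t})^c)$ does not decouple the $q$ and $t$ dependence, so I cannot reduce ``$L\le 0$ for all $t$'' to ``$L(0)\le 0$'' --- and indeed the claim is the opposite inequality. The delicate point is that $L_{\{\Is\equiv 0\}}(0)$ may be exactly zero for a thin set of parameters (the natural candidate being $\tet=0$ together with $q=1$, so that $\Fx$ has no mass at $0$), and there the sign of the first nonzero derivative at $0$ must be checked. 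I would handle this by a Taylor expansion of both terms of $L_{\{\Is\equiv 0\}}$ near $t=0$: the derivative of the distortion term at $0$ is $(1+\tet)c q(1-q)^{c-1}\la$ (times the appropriate sign), while the derivative of the first ratio involves $(\gam-\la)$ and the denominator $\xi$; comparing these, using $\tet\ge 0$, $c>1$, and $0<q\le 1$, should yield $L_{\{\Is\equiv 0\}}'(0)>0$ in the borderline case. Once positivity of $L_{\{\Is\equiv 0\}}$ on a nondegenerate interval is established, the contradiction with \eqref{eq:Is_deriv} is immediate, so the bulk of the work is purely the elementary-inequality bookkeeping in steps (i)--(iv).
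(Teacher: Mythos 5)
There is a genuine gap: you are looking for positivity of $L_{\{I\equiv 0\}}$ in the wrong place. Your step (iii) claims $L_{\{I\equiv 0\}}(0)\ge 0$, but this is false in general. For instance, with $q=1$ the expected-value term at $t=0$ equals $\tfrac{\la}{\la}=1$ while the distortion term equals $(1+\tet)\bigl(1-(1-e^{0})^c\bigr)=1+\tet$, so $L_{\{I\equiv 0\}}(0)=-\tet\le 0$; more generally, for any fixed $q>0$ one can make $L_{\{I\equiv 0\}}(0)$ as negative as one likes by taking $\tet$ large, since the first term is bounded by $\tfrac{q\la}{\la-(1-q)\gam}$ while the second grows like $(1+\tet)(1-(1-q)^c)$. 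This is not a pathology to be Taylor-expanded away: $L<0$ near $t=0$ is exactly what one expects, because the optimal contract in this model carries a strictly positive deductible whenever $\tet>0$ or $q<1$ (Proposition \ref{prop:expexp_dpower_yes}), and a deductible forces $L<0$ on $[0,\ds)$. So no contradiction with $\Is\equiv 0$ can be extracted from the behavior of $L$ near the origin, and the convexity-of-$(1-p)^c$ trick you invoke by analogy with \eqref{eq:q_1-c} pushes the inequality in the direction opposite to the one you need.

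The correct location for the contradiction is $t\to\infty$, which is what the paper's proof uses. For $\la>\gam$ the first term of $L_{\{I\equiv 0\}}(t)$ decays like $e^{-(\la-\gam)t}$, whereas $\tk(\Sx(t))=(1+\tet)\bigl(1-(1-qe^{-\la t})^c\bigr)\sim (1+\tet)cq\,e^{-\la t}$ decays strictly faster; hence $L_{\{I\equiv 0\}}(t)>0$ for all large $t$, which violates \eqref{eq:Is_deriv} for $\Is\equiv 0$. (The case $\la\le\gam$ must be treated separately, since your displayed integrals $\int_t^\infty e^{(\gam-\la)x}\,dx$ diverge there; the paper disposes of it by noting the marginal-utility integral $\xi$ is infinite at $I\equiv 0$, so increasing coverage is strictly improving.) Your framework --- compute $L_{\{I\equiv 0\}}$ and find a nondegenerate set where it is positive --- is the right one, but the asymptotic comparison at infinity, not the value or derivative at zero, is the step that actually closes the argument.
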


\begin{proof}
First, if $\la \le \gam$, then the marginal utility $\xi$ is infinite when we evaluate it at $I \equiv 0$, that is, it is optimal to increase coverage above $I \equiv 0$.  Thus, if $\la \le \gam$, then no insurance cannot be optimal.

Second, if $\la > \gam$, then $L_{\{I \equiv 0\}}$ equals
\begin{align*}
L_{\{I \equiv 0\}}(t) &= \dfrac{q \la e^{-(\la - \gam)t}}{\la - (1 - q)\gam} - (1 + \tet)\left\{1 - (1- q e^{-\la t})^c\right\} \\
&= \dfrac{e^{-(\la - \gam)t}}{\la - (1- q)\gam} \left[ q \la - (1 + \tet)(\la - (1 - q)\gam) e^{(\la - \gam)t}\left\{1 - (1- q e^{-\la t})^c\right\} \right] \\
&= \dfrac{e^{-(\la - \gam)t}}{\la - (1- q)\gam} \, \big[ q \la - (1 + \tet)(\la - (1 - q)\gam) f(t) \big],
\end{align*}
in which $f$ is defined by
\[
f(t) = e^{(\la - \gam)t}\left\{1 - (1- q e^{-\la t})^c\right\} > 0.
\]
From Theorem \ref{thm:optins}, we know that no insurance is optimal if and only if $L_{\{I \equiv 0\}}(t) \le 0$ for all $t \ge 0$.  However,
\begin{align*}
\lim_{t \to \infty} f(t) &= \lim_{t \to \infty} \dfrac{1 - (1- q e^{-\la t})^c}{e^{-(\la - \gam)t}} = \lim_{t \to \infty} \dfrac{q c \la e^{-\la t}(1- q e^{-\la t})^{c-1}}{(\la - \gam)e^{-(\la - \gam)t}} \\
&= \lim_{t \to \infty} \dfrac{q c \la(1- q e^{-\la t})^{c-1}}{(\la - \gam)e^{\gam t}} = 0,
\end{align*}
which implies that, for $t$ large enough, $L_{\{I \equiv 0\}}(t) > 0$.  Thus, if $\la > \gam$, then no insurance cannot be optimal.
\end{proof}

If the risk aversion parameter $\gam$ is large enough, then optimal insurance has no maximum limit, as we show in the following proposition.

\begin{proposition}\label{prop:expexp_dpower_yes}
If $\gam (1 - q) > q \la (c - 1)$ for the model in this section, then
\begin{align}\label{eq:dIs}
\Is(x) = 
\begin{cases}
0, &\quad 0 \le x \le \ds, \\
(x - \ds) - \dfrac{c - 1}{\gam} \, \ln \dfrac{1 - q e^{- \la x}}{1 - qe^{- \la \ds}}, &\quad x > \ds,
\end{cases}
\end{align}
in which $\ds \ge 0$ uniquely solves
\begin{equation}\label{eq:dds}
\begin{cases}
e^{\gam d}\, \dfrac{(1+\tet)(1 - q e^{-\la d})^c - \tet}{(1 - q e^{- \la d})^{c - 1} } = c(1 + \tet) \left(
1 - q + \dfrac{q \la (e^{ (\gam - \la) d}-1)}{\gam - \la} \right), &\quad \la \ne \gam, \vspace{1em} \\
e^{\la d} \, \dfrac{(1+\tet)(1 - q e^{-\la d})^c - \tet}{(1 + \tet)(1 - q e^{- \la d})^{c - 1} } =  c(1 + \tet) \big(1 - q + q \la d \big), &\quad \la = \gam.
\end{cases}
\end{equation}
Furthermore, $\ds > 0$ if and only if either $q < 1$ or $\tet > 0$.
\end{proposition}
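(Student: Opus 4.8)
The plan is to mirror the proof of Proposition~\ref{prop:expexp_power_yes}: first show that \eqref{eq:dds} has a unique non-negative root $\ds$ (strictly positive under the stated hypothesis), and then verify directly, via Theorem~\ref{thm:optins}, that the indemnity in \eqref{eq:dIs} with this $\ds$ and no maximum limit satisfies \eqref{eq:Is_deriv}, working with the representation of $L$ in \eqref{eq:L_expexp}. I will treat the case $\la \ne \gam$; the case $\la = \gam$ is identical after replacing $\int_0^{\ds} e^{(\gam - \la)x}\,dx$ by $\ds$. The key preliminary remark is that the hypothesis $\gam(1 - q) > q\la(c - 1)$ says exactly that $(\ln \ell)'(0) < \gam$, where $(\ln \ell)'(x) = \frac{(c-1)q\la e^{-\la x}}{1 - qe^{-\la x}}$ was shown, just before the proposition, to be positive with $(\ln \ell)'' < 0$; hence $0 < (\ln \ell)'(x) < \gam$ for all $x \ge 0$. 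In particular $q < 1$, and on $(\ds, \infty)$ the DIML curve $\Is(x) = x - \frac1\gam \ln(\ell(x)\xi)$ has derivative $(\Is)'(x) = 1 - (\ln \ell)'(x)/\gam \in (0, 1)$, so that $\Is \in \mic$ and no binding maximum limit occurs ($m = \infty$); this is consistent with Corollary~\ref{cor:Prop4.3_4.4}, which already guarantees that an optimal indemnity has the DIML form.

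For the uniqueness of $\ds$, let $G(d)$ denote the left side of \eqref{eq:dds} minus the right side, and write $p = \Fx(d) = 1 - qe^{-\la d} \in [1 - q, 1)$, which increases in $d$ with $p' = \la(1 - p)$. The first term of $G$ equals $e^{\gam d}\big((1 + \tet)p - \tet p^{1-c}\big)$, and a direct differentiation yields the clean factorization
\[
G'(d) = e^{\gam d}\, p^{-c}\,\big((1 + \tet)p^{c} - \tet\big)\,\big(\gam p + \la(1 - c)(1 - p)\big).
\]
The factor $\gam p + \la(1 - c)(1 - p)$ is increasing in $p$ (its $p$-derivative is $\gam + \la(c - 1) > 0$) and at $p = 1 - q$ equals $\gam(1 - q) - q\la(c - 1) > 0$ by hypothesis, so it is positive on the whole range $[1-q,1)$; hence $G'(d)$ has the sign of $(1 + \tet)p^{c} - \tet$, which is increasing in $d$ and therefore changes sign at most once, from negative to positive. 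Thus $G$ is either increasing throughout, or first decreasing and then increasing. Since $c > 1$ and $\tet \ge 0$, clearing the positive factor $(1 - q)^{c-1}$ shows that $G(0)$ has the sign of $(1 + \tet)(1 - c)(1 - q)^{c} - \tet < 0$; and since the left side of \eqref{eq:dds} grows like $e^{\gam d}$ while the right side is of strictly smaller order (bounded, linear, or $\sim e^{(\gam - \la)d}$ according as $\la > \gam$, $\la = \gam$, or $\la < \gam$), we get $G(d) \to \infty$. Consequently $G$ has a unique zero $\ds$, and $\ds > 0$ because $G(0) < 0$; the equivalence ``$\ds > 0$ iff $q < 1$ or $\tet > 0$'' holds here trivially, since the hypothesis already forces $q < 1$.

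To verify optimality, I will first compute $\xi$ from \eqref{eq:xi_expexp} by splitting the integral at $\ds$: on $[0, \ds)$ one has $\Rs(x) = x$, and on $[\ds, \infty)$ the form \eqref{eq:dIs} gives $e^{\gam\Rs(x)} = \ell(x)\xi = c(1 + \tet)(1 - qe^{-\la x})^{c-1}\xi$; the substitution $v = 1 - qe^{-\la x}$ makes the tail integral elementary, and equating the resulting expression for $\xi$ with the value $\xi = e^{\gam\ds}/\big(c(1 + \tet)(1 - qe^{-\la\ds})^{c-1}\big)$ dictated by continuity of the DIML form recovers exactly \eqref{eq:dds}. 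Using $e^{\gam\Rs(x)} = \ell(x)\xi$ and the same substitution once more, $L(t) = (1 + \tet)\big(1 - (1 - qe^{-\la t})^{c}\big) - \tk(qe^{-\la t}) = 0$ for every $t \ge \ds$. For $0 \le t \le \ds$, substituting the closed form of $\xi$ shows $L(t) = (1 + \tet)H(t)$, where
\[
H(t) = \frac{q\la c\,(1 - qe^{-\la\ds})^{c-1}}{e^{\gam\ds}(\gam - \la)}\,\big(e^{(\gam-\la)\ds} - e^{(\gam-\la)t}\big) + (1 - qe^{-\la t})^{c} - (1 - qe^{-\la\ds})^{c},
\]
which satisfies $H(\ds) = 0$ and
\[
H'(t) = q\la c\, e^{-\la t}\,\Big((1 - qe^{-\la t})^{c-1} - (1 - qe^{-\la\ds})^{c-1} e^{\gam(t - \ds)}\Big).
\]
For $t < \ds$, taking logarithms and applying the mean value theorem to $s \mapsto \ln(1 - qe^{-\la s})$, the inequality $H'(t) > 0$ reduces to $(\ln \ell)'(\zet) < \gam$ for some $\zet \in (t, \ds)$, which holds because $(\ln \ell)'$ is decreasing with $(\ln \ell)'(0) < \gam$. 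Hence $H$ is increasing on $[0, \ds]$, so $H \le 0$ and thus $L \le 0$ there; combined with $L \equiv 0$ on $[\ds, \infty)$, $(\Is)' = 0$ on $[0, \ds)$, and $(\Is)' \in (0, 1)$ on $(\ds, \infty)$, the indemnity in \eqref{eq:dIs} satisfies \eqref{eq:Is_deriv}, so Theorem~\ref{thm:optins} yields its optimality. I expect the main obstacle to be purely computational: extracting the two clean factorizations of $G'$ and of $H'$ (together with the bookkeeping for $\xi$ via the substitution $v = 1 - qe^{-\la x}$); once these are in place, the sign analysis is short, and the hypothesis $\gam(1 - q) > q\la(c - 1)$ enters exactly where it is needed, namely to keep the last factor of $G'$ positive, to keep $(\Is)'$ in $(0, 1)$, and to force $H' > 0$ on $[0, \ds)$.
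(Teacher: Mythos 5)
Your proposal is correct and follows essentially the same route as the paper's proof: establish that $G$ has a unique non-negative zero, compute $\xi$ via the substitution $v = 1 - q e^{-\la x}$ so that the continuity condition at $\ds$ reproduces \eqref{eq:dds}, reduce $L \le 0$ on $[0,\ds]$ to the sign of the same function $H$ (yours differs from the paper's only by the positive constant $e^{\gam \ds}/\bigl(c(1-qe^{-\la \ds})^{c-1}\bigr)$), and invoke Theorem~\ref{thm:optins}. The only departures are cosmetic and both check out: you prove uniqueness of $\ds$ by globally factoring $G'$ (showing it changes sign at most once, from negative to positive) where the paper evaluates $G'$ at an arbitrary zero of $G$, and you obtain $H' > 0$ on $[0,\ds)$ via logarithms and the mean value theorem, exploiting $(\ln \ell)'(x) \le (\ln\ell)'(0) < \gam$, where the paper differentiates a second auxiliary function $h$.
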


\begin{proof}
We prove this proposition when $\la \neq \gam$ because the proof when $\la = \gam$ is similar.  We begin by demonstrating that \eqref{eq:dds} has a unique solution. Let $G= G(d)$, in which
\[
G(d) = e^{\gam d} \left\{(1+\tet)(1-q e^{-\la d})^c - \tet \right\} - c(1 + \tet) (1-qe^{-\la d})^{c-1}\left(1 - q + \dfrac{q \la (e^{(\gam - \la) d}-1)}{\gam - \la}\right),
\]
which equals left side of \eqref{eq:dds} minus the right, all multiplied by $(1-qe^{-\la d})^{c-1}$.  We wish to show $G$ has a unique non-negative zero $\ds$.  To that end, note that
\[
G(0) = - \tet - (1 + \tet)(c - 1) (1 - q)^c \le 0,
\]
and
\[
\lim_{d \to \infty} G(d) = \infty,
\]
from which it follows that $G$ has at least one non-negative zero.  Let $d_0$ denote a zero of $G$.  By differentiating $G$ and using $G(d_0) = 0$, we obtain
\begin{align*}
G'(d_0) & = \gam e^{\gam d_0} \left\{(1+\tet)(1 - qe^{-\la d_0})^c - \tet\right\}  \\
& \qquad - c(c - 1)(1 + \tet)q \la e^{-\la d_0}(1 - q e^{-\la d_0})^{c - 2} \left\{1 - q + \dfrac{q \la (e^{(\gam - \la)d_0}-1)}{\gam - \la}\right\} \\
& =  \gam c(1+\tet)(1-qe^{-\la d_0})^{c-1}\left\{1 - q + \dfrac{q \la (e^{(\gam - \la) d_0}-1)}{\gam - \la}\right\}\\
& \qquad - c(c - 1)(1 + \tet)q \la e^{-\la d_0}(1 - q e^{-\la d_0})^{c - 2} \left\{1 - q + \dfrac{q \la (e^{(\gam - \la)d_0}-1)}{\gam - \la}\right\} \\
&\propto \gam (1-qe^{-\la d_0}) - q \la (c-1)e^{-\la d_0} \ge \gam(1 - q) - q\la(c - 1) > 0,
\end{align*}
in which the last inequality follows from the hypothesis of the proposition.  Therefore, $G$ has a unique non-negative zero $\ds$, and $\ds$ is strictly positive if and only if $G(0) < 0$, which is true if and only if either $q < 1$ or $\tet > 0$.

If we show $L(t) \le 0$ for all $t \ge 0$, in which $L = L_{\Is}$ and $\Is$ is defined in \eqref{eq:dIs}, then the optimality of $\Is$ follows from Theorem \ref{thm:optins}.  For $t > \ds$,  $L(t) = 0$, so it is enough to show that $L(t) \le 0$ for all $0 \le t \le \ds$ with $L(\ds) = 0$.

First, we calculate the following integral for $0 \le t \le \ds$, writing $d$ in place of $\ds$ for simplicity:
\begin{align*}
\int^{\infty}_t e^{\gam R^*(x) - \la x} d x
& = \int^{\infty}_t e^{(\gam - \la)x - \gam I^*(x)} d x \\
& = \int^{d}_t e^{(\gam - \la)x} d x + e^{\gam d} \int^{\infty}_d e^{-\la x} \left(\dfrac{1 - qe^{- \la x}}{1 - q e^{-\la d}}\right)^{c - 1} dx\\
& = \frac{1}{\gam - \la}\left(e^{(\gam - \la)d} - e^{(\gam - \la)t}\right) + \dfrac{e^{\gam d}}{q \la c}\cdot \dfrac{1 - (1 - qe^{-\la d})^c}{(1 - qe^{- \la d})^{c - 1}},
\end{align*}
from which we deduce
\begin{align*}\label{eq:xi_expexp_dpower}
\xi &= (1 - q) + q \la \int_0^{\infty} e^{\gam R^*(x) - \la x} \, dx  \notag \\
&= (1 - q) + q \la \, \dfrac{ e^{(\gam - \la)d} - 1}{\gam - \la} + \dfrac{e^{\gam d}}{c}\cdot \dfrac{1 - (1 - qe^{-\la d})^c}{(1 - qe^{- \la d})^{c - 1}}.
\end{align*}
Then, $L(t)$, for $0 \le t \le d \; (= \ds)$, equals
\begin{align*}
L(t) &= \dfrac{1}{\xi} \left[\frac{q \la}{\gam - \la} \left(e^{(\gam - \la)d} - e^{(\gam - \la)t}\right) + \frac{e^{\gam d}}{c}\dfrac{1 - (1 - qe^{-\la d})^c}{(1 - qe^{- \la d})^{c - 1}} \right] - (1 + \tet)\left(1 - \left( 1 - q e^{-\la t}\right)^c\right),
\end{align*}
and by using the expression for $d = \ds$ in \eqref{eq:dds}, one can show that $L(t) \le 0$ if and only if $H(t) \le 0$, in which $H$ equals
\begin{equation}\label{eq:H1_d}
H(t) = \frac{q \la}{\gam - \la} \left(e^{(\gam - \la)d} - e^{(\gam - \la)t}\right) + \frac{e^{\gam d}}{c} \cdot \dfrac{1 - (1 - qe^{-\la d})^c}{(1 - qe^{- \la d})^{c - 1}} \left\{1 - \dfrac{1 - \left( 1 - q e^{-\la t}\right)^c}{1 - \left( 1 - q e^{-\la d}\right)^c} \right\}.
\end{equation}
Note that $H(d) = 0$, and
\begin{align}\label{eq:derivH}
H'(t) &= -q \la e^{(\gam - \la)t} + q \la e^{-\la t} e^{\gam d} \left(\dfrac{1 - qe^{-\la t}}{1 - qe^{-\la d}}\right)^{c-1} \notag\\
& \propto - 1 + e^{\gam (d - t)} \left(\dfrac{1 - qe^{-\la t}}{1 - qe^{-\la d}}\right)^{c-1}.
\end{align}
Denote the right side of \eqref{eq:derivH} by $h$, then $h(d) = 0$ and
\begin{align*}
h'(t) & =  - \gam e^{\gam (d - t)} \left(\dfrac{1 - qe^{-\la t}}{1 - qe^{-\la d}}\right)^{c-1}
+ e^{\gam (d - t)}  (c - 1)q \la e^{-\la t} \, \dfrac{(1 - qe^{-\la t})^{c - 2}}{(1 - qe^{-\la d})^{c-1}}\\
& \propto - \gam (1 - qe^{-\la t}) + (c - 1)q \la e^{-\la t} \\
& \le - \gam (1 - q) + (c - 1)q \la < 0,
\end{align*}
in which the last inequality follows from the hypothesis of the proposition.  Thus, $h(t)$ is non-negative for $0 \le t \le d$, which implies $H'(t) \ge 0$ for $0 \le t \le d$.  Because $H(d) = 0$, we deduce that $H(t) \le 0$ for all $0 \le t \le d$. Hence, $L(t) \le 0$ for all $0 \le t \le \ds$ with $L(\ds) = 0$, and we have proved this proposition.
\end{proof}


\subsection{Gini deviation}\label{sec:Gini}

Suppose $\tk(p) = (1 + \tet) p + \alp (p - p^2)$ for $p \in [0,1]$, and for some parameters $\tet \ge 0$ and $\alp \ge 0$.  From Example \ref{ex:1}, recall that the term $p - p^2$ yields the Gini deviation measure. We also see that $\tk$ is concave but not increasing.  Also, suppose $b$ is the identity function. Then, $\ell$ in \eqref{eq:ell} equals
\begin{equation}\label{eq:ell_exp_gini}
\ell(x) = (1 + \tet) + \alp (1 - 2qe^{- \la x}),.
\end{equation}
By differentiating $\ell$, we obtain
\[
\ell'(x) = 2 \alp q \la e^{-\la x} > 0,
\]
and
\[
\ell''(x) = - 2 \alp q \la^2 e^{-\la x} < 0,
\]
which shows that $\ell$ is increasing and concave.  Also, $u'''(x) \ge 0$, and it follows from Corollary \ref{cor:Prop4.3_4.4} that optimal insurance is a DIML policy, as in \eqref{eq:DIML_expexp_power}, for some $0 \le d \le m \le \infty$.  For $d < x \le m$, we have
\begin{align*}
\Is(x) & = x - \dfrac{1}{\gam} \, \ln(\ell(x) \xi) \\
       & = x - \dfrac{1}{\gam} \, \ln((1 + \tet)\xi + \alp (1 - 2qe^{-\la x}) \xi)
\end{align*}
Continuity of $\Is$ at $x = d$ requires $\Is(d) = 0$, that is,
\[
d = \dfrac{1}{\gam} \, \ln((1 + \tet) \xi + \alp (1 - 2qe^{-\la d}) \xi),
\]
which implies
\[
\xi = \dfrac{e^{\gam d}}{(1 + \tet) + \alp (1 - 2qe^{-\la d})}.
\]
Thus,
\[
\Is(x) = x - d - \frac{1}{\gam} \ln \dfrac{(1 + \tet) + \alp (1 - 2qe^{-\la x})}{(1 + \tet) + \alp (1 - 2qe^{-\la d})},
\]
for $d < x \le m$.

As in the previous two sections, we consider when no insurance or deductible insurance might be optimal in the following two propositions.

\begin{proposition}
$I \equiv 0$ is never optimal for the model in this section.
\end{proposition}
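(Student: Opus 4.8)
The plan is to follow the strategy of Proposition~\ref{prop:expexp_dpower_no} and split into two cases according to the sign of $\la - \gam$.

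If $\la \le \gam$, I would first observe that evaluating the normalizing constant $\xi$ of \eqref{eq:xi_expexp} at the retention $\Rs(x) = x$ corresponding to $I \equiv 0$ gives $\xi = (1-q) + q\la\int_0^\infty e^{(\gam-\la)x}\,dx = \infty$, equivalently $\rho_b\big(u(w - X)\big) = -\infty$ because $\E[e^{\gam X}] = \infty$ and $u$ is an (up to an additive constant) negative exponential utility. Since, for instance, full insurance $I(X) = X$ produces the finite value $u(w - \pi(X))$, this shows at once that $I \equiv 0$ cannot be optimal when $\la \le \gam$: increasing coverage above $I \equiv 0$ strictly raises the buyer's RDEU.

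If $\la > \gam$, I would instead compute $L_{\{\Is \equiv 0\}}$ directly from \eqref{eq:L_expexp} with $\Rs(x) = x$. Here $\int_t^\infty e^{(\gam-\la)x}\,dx = e^{-(\la-\gam)t}/(\la-\gam)$ and $\xi = \big(\la - (1-q)\gam\big)/(\la-\gam) \in (0,\infty)$, so that
\[
L_{\{\Is \equiv 0\}}(t) = \dfrac{q\la\, e^{-(\la-\gam)t}}{\la - (1-q)\gam} - (1+\tet)q e^{-\la t} - \alp q e^{-\la t} + \alp q^2 e^{-2\la t}.
\]
Since $\gam > 0$, the first term decays at the strictly slower exponential rate $\la - \gam < \la$; dividing through by $e^{-(\la-\gam)t}$ and letting $t \to \infty$ gives the limit $q\la/\big(\la - (1-q)\gam\big) > 0$, hence $L_{\{\Is \equiv 0\}}(t) > 0$ for all sufficiently large $t$. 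By Theorem~\ref{thm:optins}, an optimal indemnity with $(\Is)'(t) = 0$ for Lebesgue-almost every $t$ would force $L_{\{\Is \equiv 0\}} \le 0$ almost everywhere, a contradiction; so $I \equiv 0$ is not optimal in this case either.

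The computations are entirely routine, so I do not expect a substantive obstacle. The only points needing a little care are (i) in the first case, interpreting $\xi = \infty$ correctly as infeasibility/strict suboptimality of no insurance rather than as a vacuous instance of the criterion of Theorem~\ref{thm:optins}; and (ii) in the second case, checking that $\la - (1-q)\gam > 0$ (so that $\xi$ is a genuine positive constant, using $q \in (0,1]$ and $\gam < \la$) and that $\la - \gam$ is strictly positive, which is exactly what makes the $e^{-(\la-\gam)t}$ term dominate all the remaining terms in the tail.
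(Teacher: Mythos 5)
Your proposal is correct and follows essentially the same route as the paper's proof: the same case split on $\la \le \gam$ versus $\la > \gam$, the same observation that $\xi$ (equivalently the buyer's RDEU at $I\equiv 0$) is infinite in the first case, and the same computation of $L_{\{I\equiv 0\}}$ showing it is eventually positive in the second case, contradicting the optimality criterion of Theorem~\ref{thm:optins}. Your added care in case (i) --- comparing against the finite value of full insurance --- and your check that $\la-(1-q)\gam \ge \la-\gam>0$ are welcome but do not change the argument.
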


\begin{proof}
First, if $\la \le \gam$, then the marginal utility $\xi$ is infinite when we evaluate it at $I \equiv 0$, that is, it is optimal to increase coverage above $I \equiv 0$.  Thus, if $\la \le \gam$, then no insurance cannot be optimal.

Second, if $\la > \gam$, then $L_{\{I \equiv 0\}}$ equals
\begin{align*}
L_{\{I \equiv 0\}}(t) & = \dfrac{q \la e^{-(\la - \gam)t}}{\la - (1 - q)\gam} - (1 + \tet)q e^{-\la t} - \alp \big(q e^{- \la t} -q^2 e^{-2 \la t} \big) \\
& = q e^{-\la t} \left\{\dfrac{ \la e^{\gam t}}{\la - (1 - q)\gam} - (1 + \tet) - \alp \big(1 -q e^{-\la t} \big)\right\}.
\end{align*}
From Theorem \ref{thm:optins}, we know that no insurance is optimal if and only if $L_{\{I \equiv 0\}}(t) \le 0$ for all $t \ge 0$.  However,
\begin{align*}
\lim_{t \to \infty} \dfrac{\la e^{\gam t}}{\la - (1 - q)\gam} - (1 + \tet) - \alp\big(1 -q e^{-\la t} \big) = \infty,
\end{align*}
which implies that, for $t$ large enough, $L_{\{I \equiv 0\}}(t) > 0$.  Thus, if $\la > \gam$, then no insurance cannot be optimal.
\end{proof}

If the risk aversion parameter $\gam$ is large enough, then optimal insurance has no maximum limit, as we show in the following proposition.

\begin{proposition}\label{prop:expexp_gini_yes}
If $\gam \big((1 + \tet) + \alp(1 - 2q) \big) > 2\alp q \la$  for the model in this section, then
\begin{align}\label{eq:giIs}
I^*(x) =
\begin{cases}
0, &\quad 0 < x \le \ds, \vspace{0.5em} \\
x - \ds - \dfrac{1}{\gam} \, \ln \dfrac{1 + \tet + \alp (1 - 2q e^{- \la x})}{1 + \tet + \alp (1 - 2q e^{- \la \ds})}, &\quad x>\ds.
\end{cases}
\end{align}
in which $\ds \ge 0$ uniquely solves
\begin{equation}\label{eq:gids}
\begin{cases}
\dfrac{e^{\gam d}( 1 - \alp q^2 e^{-2\la d})}{(1 + \tet) + \alp(1 - 2 q e^{-\la d})} = 1 + q \gam \, \dfrac{e^{(\gam - \la)d} - 1}{\gam - \la}, &\quad \la \ne \gam,   \vspace{0.5em}\\
e^{\la d} = (1 + q \la d)\big((1 + \tet) + \alp(1 - 2 q e^{-\la d})\big) + \alp q^2 e^{-\la d}, &\quad \la = \gam.
\end{cases}
\end{equation}
Furthermore, $\ds > 0$ if and only if either $\tet > 0$ or both $q < 1$ and $\alp > 0$.
\end{proposition}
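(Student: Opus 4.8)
The plan is to mirror the proofs of Propositions~\ref{prop:expexp_power_yes} and~\ref{prop:expexp_dpower_yes}: first show that \eqref{eq:gids} has a unique non-negative root $\ds$ and identify when $\ds>0$, and then verify that the $\Is$ in \eqref{eq:giIs} meets the characterization of Theorem~\ref{thm:optins}, namely that the associated $L$ in \eqref{eq:L_expexp} is non-positive on $\R^+$ and vanishes on $[\ds,\infty)$. Before doing either, I would record the one fact the hypothesis buys us. Since $\gam>0$, the inequality $\gam\big((1+\tet)+\alp(1-2q)\big)>2\alp q\la\ge 0$ forces $\ell(0)=(1+\tet)+\alp(1-2q)>0$, and as $\ell$ in \eqref{eq:ell_exp_gini} is increasing, $\ell(x)>0$ for all $x\ge 0$. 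Moreover the hypothesis is exactly $\gam\ell(0)-\ell'(0)>0$; since
\[
\gam\ell(t)-\ell'(t)=\gam\big((1+\tet)+\alp\big)-2\alp q(\gam+\la)e^{-\la t}
\]
is increasing in $t$, it is strictly positive for every $t\ge 0$, i.e.\ $t\mapsto\ell(t)e^{-\gam t}$ is strictly decreasing. This monotonicity drives both parts of the argument. I will also use repeatedly the two identities $\tk'(qe^{-\la x})=\ell(x)$ (because $b$ is the identity, so $\tb'\equiv 1$) and $q\la\int_t^\infty\ell(x)e^{-\la x}\,dx=\tk(qe^{-\la t})$, the latter checked by differentiation.

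For uniqueness of $\ds$ I treat $\la\ne\gam$ (the case $\la=\gam$ is analogous). Let $G(d)$ be the left side of \eqref{eq:gids} minus the right, cleared of the denominator $\ell(d)$:
\[
G(d)=e^{\gam d}\big(1-\alp q^2 e^{-2\la d}\big)-\Big(1+q\gam\,\tfrac{e^{(\gam-\la)d}-1}{\gam-\la}\Big)\ell(d).
\]
A direct computation gives $G(0)=-\tet-\alp(1-q)^2\le 0$, which is strictly negative precisely when $\tet>0$ or both $q<1$ and $\alp>0$; this yields the last assertion of the proposition. Because the $e^{\gam d}$ term dominates, $\lim_{d\to\infty}G(d)=\infty$. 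Differentiating $G$ and using $G(d_0)=0$ to remove the $e^{\gam d_0}$ term, I expect the derivative at a zero to factor as
\[
G'(d_0)=\Big(1-\tfrac{q\gam}{\gam-\la}+\tfrac{q\la e^{(\gam-\la)d_0}}{\gam-\la}\Big)\big(\gam\ell(d_0)-\ell'(d_0)\big);
\]
the first factor equals $1-q$ at $d_0=0$ and has positive derivative $q\la e^{(\gam-\la)d_0}$, hence is $\ge 1-q\ge 0$, while the second is $>0$ by the observation above. So $G'>0$ at every positive zero, which forces a unique non-negative zero $\ds$ (in the sub-case $q=1$, $\tet=0$ one has $G(0)=0$ and checks directly that $\ds=0$).

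For optimality, set $\Is$ as in \eqref{eq:giIs}, so $\Rs(x)=x$ on $[0,\ds]$ and $e^{\gam\Rs(x)}=e^{\gam\ds}\ell(x)/\ell(\ds)$ on $(\ds,\infty)$. Plugging this into \eqref{eq:xi_expexp} and using $q\la\int_t^\infty\ell(x)e^{-\la x}\,dx=\tk(qe^{-\la t})$, the equation \eqref{eq:gids} becomes equivalent to $\xi=e^{\gam\ds}/\ell(\ds)$; with this value of $\xi$, the same identity shows $L(t)=0$ for every $t>\ds$. For $0\le t\le\ds$ put $H(t)=\tfrac{e^{\gam\ds}}{\ell(\ds)}L(t)$, so $H(\ds)=0$ and $L(t)\le 0\iff H(t)\le 0$. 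Using $\tk'(qe^{-\la t})=\ell(t)$ one finds $H'(t)\propto e^{\gam\ds}\ell(t)-e^{\gam t}\ell(\ds)$, which is $\ge 0$ on $[0,\ds]$ exactly because $t\mapsto\ell(t)e^{-\gam t}$ is decreasing. Hence $H\le 0$, so $L\le 0$, on $[0,\ds]$; combined with $L\equiv 0$ on $[\ds,\infty)$, Theorem~\ref{thm:optins} gives that $\Is$ is optimal.

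The main obstacle is purely computational: obtaining the clean factorization of $G'(d_0)$ and reducing $L(t)\le 0$ on $[0,\ds]$ to the monotonicity of $\ell(t)e^{-\gam t}$ both hinge on systematically exploiting $\tk'(qe^{-\la x})=\ell(x)$ and $q\la\int_t^\infty\ell(x)e^{-\la x}\,dx=\tk(qe^{-\la t})$, after which the hypothesis enters exactly once, through $\gam\ell(0)-\ell'(0)>0$. The only genuine loose end is the boundary sub-case $q=1$, $\tet=0$ noted above, which must be dispatched by hand.
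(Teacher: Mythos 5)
Your proposal is correct and follows essentially the same route as the paper's proof: the same function $G$ with $G(0)=-\tet-\alp(1-q)^2\le 0$ and the same factorization $G'(d_0)=\bigl(1-q+q\la\frac{e^{(\gam-\la)d_0}-1}{\gam-\la}\bigr)\bigl(\gam\ell(d_0)-\ell'(d_0)\bigr)$ for uniqueness, and the same verification that $L\equiv 0$ on $[\ds,\infty)$ while $L'\ge 0$ on $[0,\ds]$ because $t\mapsto \ell(t)e^{-\gam t}$ is decreasing under the hypothesis. Your systematic use of the identities $\ell=\tk'\circ S_X$ and $q\la\int_t^\infty \ell(x)e^{-\la x}\,dx=\tk(S_X(t))$ streamlines the algebra, and your explicit treatment of the boundary sub-case $q=1$, $\tet=0$ is a small point the paper leaves implicit.
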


\begin{proof}
We prove this proposition when $\la \neq \gam$ because the proof when $\la = \gam$ is similar.  We begin by showing that \eqref{eq:gids} has a unique solution.  Let $G = G(d)$ denote the left side of \eqref{eq:gids} minus the right, all times the denominator $(1 + \tet) + \alp(1 - 2 q e^{-\la d})$.  We wish to show that $G$ has a unique non-negative zero $\ds$.  To that end, note that
\[
G(0) = - \tet - \alp(1 - q)^2 \le 0,
\]
and
\[
\lim_{d \to \infty} G(d) = \infty,
\]
from which it follows that $G$ has at least one non-negative zero.  Let $d_0$ denote a zero of $G$.  By differentiating $G$ and by using $G(d_0) = 0$, we obtain
\begin{align*}
G'(d_0) &= \gam e^{\gam d_0} \big(1 - \alp q^2 e^{-2\la d_0} \big) + e^{\gam d_0} \cdot 2 \alp q^2 \la e^{-2 \la d_0} \\
&\quad - q \gam e^{(\gam - \la)d_0} \big((1 + \tet) + \alp(1 - 2 q e^{-\la d_0}) \big) - \left( 1 + q \gam \, \dfrac{e^{(\gam - \la)d_0} - 1}{\gam - \la} \right) 2 \alp q \la e^{-\la d_0} \\
&= \left(1 - q + \dfrac{q \la (e^{(\gam-\la)d_0}-1)}{\gam - \la}\right)\big(\gam \big((1 + \tet) + \alp(1 - 2 q e^{-\la d_0}) \big) - 2 \alp q \la e^{-\la d_0} \big) \\
&\propto \gam \big((1 + \tet) + \alp(1 - 2 q e^{-\la d_0}) \big) - 2 \alp q \la e^{-\la d_0} \\
&\ge \gam \big((1 + \tet) + \alp(1 - 2 q) \big) - 2 \alp q \la > 0,
 \end{align*}
in which the last inequality follows from the hypothesis of the proposition.  Therefore, $G$ has a unique non-negative zero $\ds$, and $\ds$ is strictly positive if and only if $G(0) < 0$, which holds if and only if either $\tet > 0$ or both $q < 1$ and $\alp > 0$.

If we show $L(t) \le 0$ for all $t \ge 0$, in which $L = L_{\Is}$ and $\Is$ is defined in \eqref{eq:giIs}, then the optimality of $\Is$ follows from Theorem \ref{thm:optins}.  For $t > \ds$,  $L(t) = 0$, so it is enough to show that $L(t) \le 0$ for all $0 \le t \le \ds$ with $L(\ds) = 0$.

First, we calculate the following integral for $0 \le t \le \ds$, writing $d$ in place of $\ds$ for simplicity:
\begin{align*}
\int^{\infty}_t e^{\gam R^*(x) - \la x} d x
& = \int^{\infty}_t e^{(\gam - \la)x - \gam I^*(x)} d x \\
& = \int^{d}_t e^{(\gam - \la)x} d x + \dfrac{e^{\gam d}}{(1 + \tet) + \alp(1 - 2 qe^{-\la d})} \int^{\infty}_d \left\{
(1 + \tet + \alp)e^{-\la x} - 2 \alp qe^{-2 \la x}\right\}dx \vspace{0.5em}\\
& = \frac{1}{\gam - \la}\left(e^{(\gam - \la)d} - e^{(\gam - \la)t}\right) + \dfrac{e^{(\gam - \la) d}}{\la} \cdot \dfrac{(1 + \tet) + \alp(1 - q e^{-\la d})}{(1 + \tet) + \alp(1 - 2 qe^{-\la d})},
\end{align*}
from which we deduce
\begin{align}\label{eq:xi_expexp_gi}
\xi &= (1 - q) + q \la \int_0^{\infty} e^{\gam R^*(x) - \la x} \, dx  \notag \\
&= (1 - q) + q \la \, \dfrac{e^{(\gam - \la)d} - 1}{\gam - \la} + q e^{(\gam - \la) d} \, \dfrac{(1 + \tet) + \alp(1 - q e^{-\la d})}{(1 + \tet) + \alp(1 - 2 qe^{-\la d})}.
\end{align}
Then, $L(t)$, for $0 \le t \le d \; (= \ds)$, equals
\begin{align*}
L(t) &= \dfrac{1}{\xi} \left[  \frac{q \la}{\gam - \la}\left(e^{(\gam - \la)d} - e^{(\gam - \la)t}\right) + q e^{(\gam - \la) d} \, \dfrac{(1 + \tet) + \alp(1 - q e^{-\la d})}{(1 + \tet) + \alp(1 - 2 qe^{-\la d})} \right] \\
& \quad - q e^{-\la t}\big( (1 + \tet) - \alp(1 - q e^{-\la t})\big),
\end{align*}
and by using the expression for $d = \ds$ in \eqref{eq:gids}, one can show $L(\ds) = 0$,
\begin{align*}
L(t) &= \dfrac{q \la}{\xi} \, \dfrac{e^{(\gam - \la)d} - e^{(\gam - \la)t}}{\gam - \la} + q e^{-\la d}\big( (1 + \tet) + \alp(1 - q e^{-\la d})\big) \\
& \quad - q e^{-\la t}\big( (1 + \tet) + \alp(1 - q e^{-\la t})\big),
\end{align*}
and
\begin{equation}\label{eq:xi_gi}
 \xi  = \dfrac{ e^{\gam d} }{(1 + \tet) + \alp(1 - 2 qe^{-\la d})} > 0.
\end{equation}
By differentiating $L$ with respect to $t$, we obtain
\begin{align}\label{eq:deH1}
L'(t) &= - \,  \dfrac{q \la}{\xi} \, e^{(\gam - \la)t} + q \la e^{-\la t}\big( (1 + \tet) + \alp(1 - 2q e^{-\la t})\big) \notag \\
&\propto - \, \dfrac{e^{\gam t}}{\xi} +  \big( (1 + \tet) + \alp(1 - 2q e^{-\la t})\big) \notag \\
&\propto \dfrac{e^{\gam d}}{(1 + \tet) + \alp(1 - 2q e^{-\la d})} - \dfrac{e^{\gam t}}{(1 + \tet) + \alp(1 - 2q e^{-\la t})},
\end{align}
which one can show is positive for $t < d$ because \eqref{eq:deH1} decreases with respect to $t$ and equals $0$ at $t = d$.  Indeed, by differentiating the expression in \eqref{eq:deH1}, we see
\begin{align*}
&- \dfrac{d}{dt} \left( \dfrac{e^{\gam t}}{(1 + \tet) + \alp(1 - 2q e^{-\la t})} \right) \propto - \gam \big((1 + \tet) + \alp(1 - 2 q e^{-\la t}) \big) + 2 \alp q \la e^{-\la t} \\
&\le - \left[ \gam \big((1 + \tet) + \alp(1 - 2 q) \big) - 2 \alp q \la \right] < 0,
 \end{align*}
in which the last inequality follows from the hypothesis of the proposition.  Thus, $L(t) \le 0$ for all $0 \le t \le \ds$ with $L(\ds) = 0$, and we have proved this proposition.
\end{proof}

\section{Conclusion}\label{sec:conclu}

In this paper, in Theorem \ref{thm:optins}, we found necessary and sufficient conditions that the optimal indemnity satisfies for an RDEU maximizer subject to a distortion-deviation premium principle with a concave distortion $\tk(p) = (1 + \tet)p + k(p)$, $p \in [0, 1]$.  We modeled the RDEU maximizer's preferences via a concave utility function $u$ and a strictly increasing, concave distortion $\tb(p) = 1 - b(1 - p)$, $p \in [0, 1]$.  We determined conditions under which optimal insurance is full insurance (Corollary \ref{cor:FSD}), deductible insurance (Corollary \ref{cor:HR}), insurance with a maximum limit (Corollary \ref{cor:bhrk}), and insurance with a possible deductible and coinsurance above the deductible (Corollaries \ref{cor:LR}, \ref{cor:Young1999}, and \ref{cor:Prop4.3_4.4}).

As we discussed in Remark \ref{rem:two}, the convexity of $b$ and  the concavity of $k$ are only used to obtain Lemma \ref{lem:comon}, and those requirements can be relaxed if we work with $\mathcal I_c$ as the set of indemnities {\it ex ante}.  Note that the distortion function $k$ is not monotone in our setting.  From a mathematical point of view, our techniques can be applied when monotonicity of $b$ is dispensed with, although such a relaxation  is less relevant  in the RDEU model.

As for future research directions, it would be of interest to study RDEU and the distortion-deviation premiums in the context of Pareto-optimal contracts for the insured and the insurer (see, for example, Cai et al.~\cite{CLW17}) and that of optimal contracts in bargaining models such as the Nash and Kalai--Smorodinsky bargaining models (see, for example, Jiang et al.~\cite{JRYH19}).

\end{document}